\documentclass{article}

\usepackage[margin=1in]{geometry}
\usepackage[dvipsnames]{xcolor}
\usepackage{amsmath, amsfonts, amsthm, amssymb}
\usepackage{enumitem}
\setdescription{itemsep=5pt,parsep=0pt,leftmargin=*}
\usepackage{graphicx, caption, subcaption}
\usepackage{nicefrac}
\usepackage{tikz}

\makeatletter

\makeatother

\newcommand\cZ{\mathcal{Z}}
\newcommand\cX{\mathcal{X}}
\newcommand\cC{\mathcal{C}}
\newcommand\cF{\mathcal{F}}
\newcommand\cD{\mathcal{D}}
\newcommand\cT{\mathcal{T}}
\newcommand\cG{\mathcal{G}}
\newcommand\cL{\mathcal{L}}

\newcommand\cR{\mathcal{R}}
\newcommand{\reals}{\ensuremath{\mathbb{R}}}
\newcommand{\Econd}[2]{\mathbb{E}\!\left[\, #1 \;\middle\vert\; #2\, \right]}
\DeclareMathOperator*{\argmax}{argmax}	
\DeclareMathOperator*{\argmin}{argmin}	
\DeclareMathOperator{\E}{\mathbb{E}}
\DeclareMathOperator{\Var}{\mathbb{V}}

\newcommand{\ldot}[2]{\ensuremath{\left\langle #1, #2 \right\rangle}}
\newcommand{\cdiv}{\ensuremath{./}}
\newcommand{\cmult}{\ensuremath{\circ}}
\newcommand{\duals}{q}
\newcommand{\interp}{x}
\newcommand{\ubc}{|\cC|}

\newcommand{\Risk}{\mathcal{R}}

\makeatletter
\newcommand\Label[1]{&\refstepcounter{equation}(\theequation)\ltx@label{#1}&}
\makeatother

\newtheorem{claim}{Claim}
\newtheorem{assumption}{Assumption}
\newtheorem{defn}{Definition}
\newtheorem{lemma}{Lemma}
\newtheorem{theorem}{Theorem}

\title{Optimal Data Acquisition for Statistical Estimation}

\author{Yiling Chen\thanks{Harvard University. Email: yiling@seas.harvard.edu.}
\and Nicole Immorlica\thanks{Microsoft Research. Email: nicimm@microsoft.com.}
\and Brendan Lucier\thanks{Microsoft Research. Email: brlucier@microsoft.com.}
\and Vasilis Syrgkanis\thanks{Microsoft Research. Email: vasy@microsoft.com.}
\and Juba Ziani\thanks{California Institute of Technology. Email: jziani@caltech.edu.}
}

\begin{document}

\maketitle

\begin{abstract}
We consider a data analyst's problem of purchasing data from strategic agents to compute an unbiased estimate of a statistic of interest. Agents incur private costs to reveal their data and the costs can be {\em arbitrarily correlated} with their data. Once revealed, data are verifiable.  This paper focuses on {\em linear} unbiased estimators. We design an individually rational and incentive compatible mechanism that optimizes the worst-case mean-squared error of the estimation, where the worst-case is over the unknown correlation between costs and data, subject to a budget constraint in expectation.  We characterize the form of the optimal mechanism in closed-form.  We further extend our results to acquiring data for estimating a parameter in regression analysis, where private costs can correlate with the values of the dependent variable but not with the values of the independent variables.
\end{abstract}

\section{Introduction}

In the age of automation, data is king.  The statistics and machine learning algorithms that help curate our online content, diagnose our diseases, and drive our cars, among other things, are all fueled by data.  Typically, this data is mined by happenstance: as we click around on the internet, seek medical treatment, or drive ``smart'' vehicles, we leave a trail of data.  This data is recorded and used to make estimates and train machine learning algorithms.  So long as representative data is readily abundant, this approach may be sufficient.  But some data is sensitive and therefore inaccurate, rare, or lacking detail in observable data traces.  In such cases, it is more expedient to buy the necessary data directly from the population.

Consider, for example, the problem a public health administration faces in trying to learn the average weight of a population, perhaps as an input to estimating the risk of heart disease.  Weight is a sensitive personal characteristic, and people may be loath to disclose it.  It is also variable over time, and so must be collected close to the time of the average weight estimate in order to be accurate.  Thus, while other characteristics, like height, age, and gender, are fairly accurately recorded in, for example, driver's license databases, weight is not.  The public health administration may try surveying the public to get estimates of the average weight, but these surveys are likely to have low response rates and be biased towards healthier low-weight samples.  

In this paper, we propose a mechanism for buying verifiable data from a population in order to estimate a statistic of interest, such as the expected value of some function of the underlying data.  
We assume each individual has a private cost, or disutility, for revealing his or her sensitive data to the analyst.  Importantly, this cost may be correlated with the private data.  For example, overweight or underweight individuals to have a higher cost of revealing their data than people of a healthy weight.  Individuals wish to maximize their expected utility, which is the expected payment they receive for their data minus their expected cost.  
The analyst has a fixed budget for buying data.  The analyst does not know the distribution of the data: properties of the distribution is what she is trying to learn from the data samples, therefore it is important that she uses the data she collects to learn it rather than using an inaccurate prior distribution (for example, the analyst may have a prior on weight distribution within a population from DMV records or previous surveys, but such a prior may be erroneous if people do not accurately report their weights). However, we do assume the analyst has a prior for the marginal distribution of costs, and that she estimates how much a survey may cost her as a function of said prior.\footnote{This prior could come from similar past exercises. Alternatively, when no prior is known, the analyst can allocate a fraction of his budget to buying data for the sake of learning this distribution of costs.  In this paper, we follow prior work (e.g., Roth and Schoenebeck~\cite{RS12}) and assume that a prior distribution is known, instead of focusing on how one might learn the distribution of costs.}

The analyst would like to buy data subject to her budget, then use that data to obtain an unbiased estimator for the statistic of interest.
To this end, the analyst posts a menu of probability-price pairs.  Each individual $i$ with cost $c$ selects a pair $(A,P)$ from the menu, at which point the analyst buys the data with probability $A$ at price $P$.  The expected utility of the individual is thus $(P-c)A$.\footnote{As we show, this menu-based formulation is fully general and captures arbitrary data-collection mechanisms.}  To form an estimate based on this collected data, we assume the analyst uses {\it inverse propensity scoring}, pioneered by Horvitz and Thompson~\cite{HT52}.  This is the unique unbiased linear estimator; it works by upweighting the data from individual $i$ by the inverse of his/her selection probability, $1/A$.  

The Horvitz-Thompson estimator always generates an unbiased estimate of the statistic being measured, regardless of the price menu.  However, the precision of the estimator, as measured by the variance or mean-squared error of the estimate, depends on the menu of probability-price pairs offered to each individual.  For example, offering a high price would generate data samples with low bias (since many individuals would accept such an offer), but the budget would limit the number of samples.  Offering low prices allows the mechanism to collect more samples, but these would be more heavily biased, requiring more aggressive correction which introduces additional noise.  The goal of the analyst is to strike a balance between these forces and post a menu that minimizes the variance of her estimate in the worst-case over all possible joint distributions of the data and cost consistent with the cost prior.  We note that this problem setting was first studied by Roth and Schonebeck~\cite{RS12}, who characterized an approximately optimal mechanism for moment estimation.

\subsection{Summary of results and techniques} 

Our main contribution comes in the form of an exact solution for the optimal menu, as discussed in Section~\ref{sec:moment-estimation}.  As one would expect, if the budget is large, the optimal menu offers to buy, with probability $1$, all data at a cost equal to the maximum cost in the population.  If the budget is small, 
the optimal menu buys data from an individual with probability inversely proportional to the square root of their cost.\footnote{Of course, the individual is him/herself selecting the menu option and so the use of an active verb in this context is perhaps a bit misleading.  What we mean here is that, given his/her incentives based on his/her private cost, the choice the individual selects is one that buys his/her data with probability inversely proportional to the square root of his/her cost.}  Interestingly, in intermediate regimes, we show the optimal menu employs pooling: for all individuals with sufficiently low private cost, it buys their data with equal probability; for the remaining high cost agents, it buys their data with probability inversely proportional to the square root of their costs.
Revisiting the example of estimating the weight of a population of size $n$, our scheme suggests the following solution.  Imagine the costs are $0,4,8$ with probability $\frac{1}{2},\frac{1}{4},\frac{1}{4}$, and the total budget of the analyst is $B = 7n$.  The analyst brings a scale to a public location and posts the following menu of pairs of allocation probability and price: $\{(1,\frac{36}{5}),(\frac{4}{5},8)\}$.  A simple calculation shows that individuals with cost $0$ or $4$ will pick the first menu option: stepping on the scale and having their weight recorded with probability $1$, and receiving a payment of $\frac{36}{5}$ dollars. Individuals with cost $8$ will pick the second menu option; if they are selected to step on the scale, which happens with probability $\frac{4}{5}$, the analyst records their weight scaled by a factor of $\frac{5}{4}$. This scaling is precisely the upweighting from inverse propensity scoring. In expectation over the population, the analyst spends exactly his budget $7n$.  The estimate is the average of the scaled weights.

We show how to extend our approach in multiple directions.  First, our characterization of the optimal mechanism holds even when the quantity to be estimated is the expected value of a $d$-dimensional moment function of the data.  
Second, we extend our techniques beyond moment estimation to the common task of multi-dimensional linear regression.
In this regression problem, an individual's data includes both features (which are assumed to be insensitive or publicly available) and outcomes (which may be sensitive).  The analyst's goal is to estimate the linear regression coefficients that relate the outcomes to the features.  We make the assumption that an individual's cost is independent of her features, but may be arbitrarily correlated with her outcome.  For example, the goal might be to regress a health outcome (such as severity of a disease) on demographic information.  In this case, we might imagine that an agent incurs no cost for reporting his age, height or gender, but his cost might be highly correlated with his realized health outcome.
%
In such a setting, we show that the asymptotically optimal allocation rule, given a fixed average budget per agent as the number of agent grows large, can be calculated efficiently and exhibits a pooling region as before.  However, unlike for moment estimation, agents with intermediate costs can also be pooled together. We further show that our results extend to non-linear regression in Appendix~\ref{app:nonlinear-regression}, under mild additional conditions on the regression function. 

Our techniques rely on i) reducing the mechanism design problem to an optimization problem through the classical notion of virtual costs, then ii) reducing the problem of optimizing the worst-case variance to that of finding an equilibrium of a zero-sum game between the analyst and an adversary.  The adversary's goal is to pick a distribution of data, conditional on agents' costs, that maximizes the variance of the analyst's estimator.  We then characterize such an equilibrium through the optimality conditions for convex optimization described in~\cite{Boyd04}. 


\subsection{Related work}
A growing amount of attention has been placed on understanding interactions between the strategic nature of data holders and the statistical inference and learning tasks that use data collected from these holders. The work on this topic can be roughly divided into two categories according to whether money is used for incentive alignment. 

In the first category, individuals as data holders do not directly derive utility from the accuracy of the inference or learning outcome, but in some cases may incur a privacy cost if the outcome leaks their private information. The analyst uses monetary payments to incentivize agents to reveal their data.   
Our work falls into this category. Prior papers by Roth and Schoenebeck~\cite{RS12} and Abernethy et al.~\cite{Yiling15} are closest to our setting. Similarly to our work, both Roth and Schoenebeck~\cite{RS12} and Abernethy et al.~\cite{Yiling15} consider an analyst's problem of purchasing data from individuals with private costs subject to a budget constraint, allow the cost to be correlated with the value of data, and assume that individuals cannot fabricate their data. Roth and Schoenebeck~\cite{RS12} aim at obtaining an optimal unbiased estimator with minimum worst-case variance for population mean, while their mechanism achieves optimality only approximately: instead of the actual worst-case variance, a bound on the worst-case variance is minimized. While our setting is identical to that of~\cite{RS12}, our work precisely minimizes worst-case variance (under a regularity assumption on the cost distribution), and our main contribution is to exhibit the structure of the optimal mechanism, as well as to extend our results to broader classes of statistical inference, moment estimation and linear regression. In particular, compared to~\cite{RS12}, our solution exhibits new structure in the form of a pooling region for low cost agents; i.e., the optimal mechanism pools agents with the lowest costs together and treats them identically. Such structure does not arise in~\cite{RS12} under a regularity assumption on the cost distribution. Abernethy at al.~\cite{Yiling15} consider general supervised learning. They do not seek to achieve a notion of optimality; instead, they take a learning-theoretic approach and design mechanisms to obtain learning guarantees (risk bounds). 

Several 
papers consider data acquisition models with different objectives under the assumptions that (a) individuals do not fabricate their data, and (b) private costs and value of data are uncorrelated. For example, in the work of Cummings et al.~\cite{ITCS15}, the analyst can decide the level of accuracy for data purchased from each individual, and wishes to 
guarantee a certain desired level of accuracy of the aggregated information while minimizing the total privacy cost incurred by the agents. Cai et al.~\cite{CDP15} focus on incentivizing individuals to exert effort to obtain high-quality data for the purpose of linear regression. 
Another line of research in the first category examines the data acquisition problem under the lens of differential privacy~\cite{GR11,FL12, GLRS14, NVX14, CIL15}. The mechanism designer then uses payments to balance the trade-off between privacy and accuracy. 


In the second category, individuals' utilities directly depend on the inference or learning outcome (e.g. they want a regression line to be as close to their own data point as possible) and hence they have incentives to manipulate their reported data to influence the outcome. There often is no cost for reporting one's data. The data analyst, without using monetary payments, attempts to design or identify inference or learning processes so that they are robust to potential data manipulations. Most papers in this category assume that independent variables (feature vectors) are unmanipulable public information and dependent variables are manipulable private information~\cite{DFP10,MAMR11,MPR12, PP03}, though some papers consider strategic manipulation of feature vectors~\cite{HMPW16,DRSWW17}. Such strategic data manipulations have been studied for estimation~\cite{CPS16b}, classification~\cite{MAMR11, MPR12,HMPW16}, online classification~\cite{DRSWW17}, regression~\cite{PP04, DFP10}, and clustering~\cite{PP03}. Work in this category is closer to mechanism design without money in the sense that they focus on incentive alignment in acquiring data (e.g., strategy-proof algorithms) but often do not evaluate the performance of the inference or learning, with a few notable exceptions~\cite{HMPW16,DRSWW17}.

\section{Model and Preliminaries}
\paragraph{Survey Mechanisms}
There is a population of $n$ agents. Each agent $i$ has a private pair $(z_i, c_i)$, where $z_i \in \cZ$ is a data point and $c_i > 0$ is a cost. We think of $c_i$ as the disutility agent $i$ incurs by releasing her data $z_i$. The pair is drawn from a distribution $\cD$, unknown to the mechanism designer. We denote with $\cF$ the CDF of the marginal distribution of costs,\footnote{Throughout the text we will use the CDF to refer to the distribution itself.} supported on a set $\cC$. We assume that $\cF$ and the support of the data points, $\cZ$, are known.  However, the joint distribution $\cD$ of data and costs is unknown.

A \emph{survey mechanism} is defined by an allocation rule $A: \cC \rightarrow [0,1]$ and a payment rule $P:\cC \rightarrow \reals$, and works as follows.  Each agent $i$ arrives at the mechanism in sequence and reports a cost $\hat{c}_i$.  The mechanism chooses to buy the agent's data with probability $A(\hat{c}_i)$.  If the mechanism buys the data, then it learns the value of $z_i$ (i.e., agents cannot misreport their data) and pays the agent $P(\hat{c}_i)$.  Otherwise the data point is not learned and no payment is made.

We assume agents have quasi-linear utilities, so that the utility enjoyed by agent $i$ when reporting $\hat{c}_i$ is
\begin{equation}
u(\hat{c}_i; c_i) = \left(P(\hat{c}_i) - c_i\right) \cdot A(\hat{c}_i)
\end{equation}
We will restrict attention to survey mechanisms that are truthful and individually rational.  
\begin{defn}[Truthful and Individually Rational - TIR] A survey mechanism is truthful if for any cost $c$ it is in the agent's best interest to report their true cost, i.e. for any report $\hat{c}$:
\begin{equation}
u(c; c) \geq u(\hat{c}; c)
\end{equation}
It is individually rational if, 
e. for any cost $c\in \cC$, $P(c)\geq c$.
\end{defn}

We assume that the mechanism is constrained in the amount of payment it can make to the agents. We will formally define this as an expected budget constraint for the survey mechanism.
\begin{defn}[Expected Budget Constraint] A mechanism respects a budget constraint $B$ if:
\begin{equation}
n\cdot \E_{c\sim \cF}\left[P(c)\cdot A(c)\right]\leq B
\end{equation} 
\end{defn}

\paragraph{Estimators} The designer (or \emph{data analyst}) wishes to use the survey mechanism to estimate some parameter $\theta \in \reals$ of the marginal distribution of data points.\footnote{We also extend our results to multi-dimensional parameters; see Section~\ref{sec:multidimensional-linear}.}  For example, it might be that $\cZ = \reals$ and $\theta$ is the mean of the distribution over data points in the population.  To this end, the designer will apply an \emph{estimator} to the collection of data points $S$ elicited by the survey mechanism.  We will write $\hat{\theta}_S$ for the estimator used.
Note that the value of the estimator $\hat{\theta}_S$ depends on the sample $S$, but might also depend on the distribution of costs $\cF$ and the survey mechanism.  Due to the randomness inherent in the survey mechanism (both in the choice of data points sampled and the values of those samples), we think of $\hat{\theta}_S$ as a random variable, drawn from a distribution $\cT(\cD, A)$.  We will focus exclusively on \emph{unbiased} estimators.
\begin{defn}[Unbiased Estimator] Given an allocation function $A$, an estimator $\hat{\theta}_S$ for $\theta$ is \emph{unbiased} if for any instantiation of the true distribution $\cD$ its expected value is equal to $\theta$:
\begin{equation}
\E_{\hat{\theta}_S\sim \cT(\cD, A)}\left[\hat{\theta}_S\right] = \theta.
\end{equation}
\end{defn}
Given a fixed choice of estimator, the mechanism designer wants to construct the survey mechanism to minimize the variance (finite sample or asymptotic as the population grows) of that estimator.  Since the designer does not know the distribution $\cD$, we will work with the worst-case variance over all instantiations of $\cD$ that are consistent with the cost marginal $\cF$.

\begin{defn}[Worst-Case Variance]
Given an allocation function $A$ and an instance of the true distribution $\cD$, the \emph{variance} of an estimator $\hat{\theta}_S$ is defined as:
\begin{equation}
\Var(\hat{\theta}_S; \cD, A) =\E_{\hat{\theta}_S \sim \cT(\cD,A)}\left[\left(\hat{\theta}_S - \E\left[\hat{\theta}_S\right]\right)^2\right]
\end{equation}
The \emph{worst-case variance} of $\hat{\theta}_S$ is
\begin{equation}
\Var^*(\hat{\theta}_S; \cF, A) = \sup_{\cD~\mathrm{consistent~with~}\cF} \Var(\hat{\theta}_S;\cD, A).
\end{equation}
\end{defn}

We are now ready to formally define the mechanism design problem faced by the data analyst.
\begin{defn}[Analyst's Mechanism Design Problem]\label{defn:mechanism-design}
Given an estimator $\hat{\theta}_S$ and cost distribution $\cF$,
the goal of the designer is to design an allocation rule $A$ and payment rule $P$ so as to minimize worst-case variance subject to the truthfulness, individual rationality and budget constraints: 
\begin{equation}
\begin{aligned}
\inf_{A,P} ~& \Var^*(\hat{\theta}_S; \cF, A)\\
\mathrm{s.t.}~& n \cdot \E_{c\sim \cF}\left[P(c)\cdot A(c)\right]\leq B\\
~& A,P \mathrm{~define~a~TIR~mechanism} 
\end{aligned}
\end{equation}
\end{defn}

\paragraph{Implementing Surveys as Posted Menus.}
The formulation above describes surveys as direct-revelation mechanisms, where agents report costs.  We note that an equivalent indirect implementation might be more natural: a \emph{posted menu survey} offers each agent a menu of (price, probability) pairs $(p_1, A_1), \dotsc, (p_k, A_k)$.  If the agent chooses $(p_m, A_m)$ then their data is elicited with probability $A_m$, in which case they are paid $p_m$.  Each agent can choose the item that maximizes their expected utility, i.e., $\argmax_{m\in [k]}~(p_m - c)\cdot A_m$. By the well-known \emph{taxation principle}, any survey mechanism can be implemented as a posted menu survey, and the number of menu items required is at most the size of the support of the cost distribution.

\subsection{Reducing Mechanism Design to Optimization}\label{sec: reduction}

We begin by reducing the mechanism design problem to a simpler full-information optimization problem where the designer knows the private cost of each player and can acquire their data by paying them exactly that cost. However, the designer is constrained to using \emph{monotone} allocation rules, in which
players with higher costs have weakly lower probability of being chosen.  

\begin{defn}[Analyst's Optimization Problem]\label{defn:optimization}
Given an estimator $\hat{\theta}_S$ and cost distribution $\cF$, the optimization version of the designer's problem is to find a non-increasing allocation rule $A$ that minimizes worst-case variance subject to the budget constraint, assuming agents are paid their cost:
\begin{equation}
\begin{aligned}
\inf_{A} ~& \Var^*(\hat{\theta}_S; \cF, A)\\
\mathrm{s.t.}~& n \cdot \E_{c\sim \cF}\left[c\cdot A(c)\right]\leq B\\
~& A \mathrm{~is~monotone~non}\text{-}\mathrm{increasing} 
\end{aligned}
\end{equation}
\end{defn}

The mechanism design problem in Definition \ref{defn:mechanism-design} reduces to the optimization problem given by Definition \ref{defn:optimization}, albeit with a transformation of costs to \emph{virtual cost}.
\begin{defn}[Virtual Costs and Regular Distributions]
If $\cF$ is continuous and admits a density $f$ then define the virtual cost function as $\phi(c)=c + \frac{\cF(c)}{f(c)}$. If $\cF$ is discrete with support $\cC=\{c_1, \ldots, c_{|\cC|}\}$ and PDF $f$, then define the virtual cost function as: $\phi(c_t) =  c_t + \frac{c_t-c_{t-1}}{f(c_t)} \cF(c_{t-1})$, with $c_0=0$. We also denote with $\phi(\cF)$ the distribution of virtual costs;
i.e., the distribution created by first drawing $c$ from $\cF$ and then mapping it to $\phi(c)$. A distribution $\cF$ is regular if the virtual cost function is increasing. 
\end{defn}

When $\cF$ is twice-continuously differentiable, $\cF$ is regular if and only if $\cF(c) f'(c) < 2 f(c)^2$ for all $c \in \cC$. Importantly, in this case, the allocation rule of Roth and Schoenebeck~\cite{RS12} is monotone strictly decreasing in $c$ and does not exhibit a pooling region at low-cost as our solution does. The following is an analogue of Myerson's \cite{Myer81} reduction of mechanism design to virtual welfare maximization, adapted to the survey design setting. 

\begin{lemma}\label{lem:reduction-to-opt}
If the distribution of costs $\cF$ is regular, then solving the Analyst's Mechanism Design Problem reduces to solving the Analyst's Optimization Problem for distribution of costs $\phi(\cF)$.
\end{lemma}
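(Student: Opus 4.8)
The plan is to mimic Myerson's reduction \cite{Myer81}, specialized to this single-parameter procurement-style setting, and then to check that passing to virtual costs exactly converts the interim expected payment into an expectation against $\phi(\cF)$ while leaving the worst-case variance untouched. Fix an allocation rule $A$ and write the interim expected payment as $t(c)=P(c)A(c)$ and the interim utility as $u(c)=t(c)-cA(c)$. The first step is the standard characterization: $(A,P)$ is a TIR mechanism if and only if $A$ is non-increasing in the reported cost and, by the envelope theorem, $u(c)=u(\bar c)+\int_c^{\bar c}A(s)\,ds$ in the continuous case (with the analogous telescoping sum over adjacent types when $\cF$ is discrete), where $\bar c=\max\cC$ and individual rationality amounts to $u(\bar c)\ge 0$. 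Since the objective does not involve the payments at all and the budget constraint only loosens as payments shrink, it is without loss of optimality to take the payments as small as TIR allows, i.e.\ $u(\bar c)=0$, which pins $P$ down from $A$.

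Next I would substitute the payment identity into $n\,\E_{c\sim\cF}[P(c)A(c)]=n\,\E_{c\sim\cF}\!\left[u(c)+cA(c)\right]$ and swap the order of integration: Fubini in the continuous case rewrites $\E_{c\sim\cF}\!\left[\int_c^{\bar c}A(s)\,ds\right]$ as $\E_{c\sim\cF}\!\left[\frac{\cF(c)}{f(c)}A(c)\right]$, and summation by parts in the discrete case produces $\E_{c\sim\cF}\!\left[\frac{(c_t-c_{t-1})\cF(c_{t-1})}{f(c_t)}A(c_t)\right]$ with the convention $c_0=0$. Hence under the budget-minimal payments $n\,\E_{c\sim\cF}[P(c)A(c)]=n\,\E_{c\sim\cF}[\phi(c)A(c)]$, and for an arbitrary TIR mechanism the inequality $u(\bar c)\ge 0$ upgrades this to $n\,\E_{c\sim\cF}[P(c)A(c)]\ge n\,\E_{c\sim\cF}[\phi(c)A(c)]$; consequently a monotone $A$ admits a budget-feasible TIR completion if and only if $n\,\E_{c\sim\cF}[\phi(c)A(c)]\le B$.

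Finally I would relabel types by virtual cost. Regularity makes $\phi$ a strictly increasing bijection from $\cC$ onto $\phi(\cC)$, so I pair each $A:\cC\to[0,1]$ with $\tilde A=A\circ\phi^{-1}$ on $\phi(\cC)$. This correspondence is a bijection that preserves monotonicity (since $\phi$ is increasing), sends $\E_{c\sim\cF}[\phi(c)A(c)]$ to $\E_{\tilde c\sim\phi(\cF)}[\tilde c\,\tilde A(\tilde c)]$ by definition of the pushforward $\phi(\cF)$, and preserves the worst-case variance, $\Var^*(\hat\theta_S;\cF,A)=\Var^*(\hat\theta_S;\phi(\cF),\tilde A)$, because this quantity depends on the cost distribution only through the mass placed on each type and on the allocation rule only through its value at that type, both of which the order-preserving relabeling carries over verbatim, while the adversary's set of feasible conditional data distributions at a type is the same in either world. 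Putting the three steps together: every TIR, budget-feasible mechanism yields a monotone, budget-feasible $\tilde A$ for the optimization problem with cost distribution $\phi(\cF)$ having the same worst-case variance, and conversely any monotone, budget-feasible $\tilde A$ there lifts to $A=\tilde A\circ\phi$ with the payments from the first step, giving a TIR mechanism of expected payment exactly $n\,\E_{\tilde c\sim\phi(\cF)}[\tilde c\,\tilde A(\tilde c)]\le B$; so the two problems share an optimal value and their optimizers correspond.

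The step I expect to be the main obstacle is the invariance of the worst-case variance under relabeling: one must argue carefully that the supremum over joint distributions of $(z,c)$ consistent with $\cF$ coincides, after the allocations are relabeled, with the supremum over those consistent with $\phi(\cF)$, and that regularity is exactly the hypothesis that lets the relabeling respect the monotonicity constraint without any ironing. The rest is routine Myerson bookkeeping, the only calculation needing care being the discrete summation by parts that reproduces the stated discrete form of $\phi$.
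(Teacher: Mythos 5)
Your proposal is correct and follows essentially the same route as the paper's proof: the envelope-theorem payment characterization with $u(\bar c)=0$ pinning down the (budget-minimal) payments, Fubini/summation by parts converting the expected payment into $\E_{c\sim\cF}[\phi(c)A(c)]$, and the observation that the worst-case variance depends on costs only through the allocation value at each type, so the order-preserving relabeling $A\mapsto A\circ\phi^{-1}$ (valid because regularity makes $\phi$ increasing) carries everything over to the optimization problem under $\phi(\cF)$. The paper executes the discrete case by first deriving the minimal IC/IR price vector $P^*_t(A)=c_t+\sum_{j>t}\frac{A_j}{A_t}(c_j-c_{j-1})$ by induction and then summing by parts, which is exactly the telescoping computation you sketch.
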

\begin{proof}
The proof is given in Appendix~\ref{app:truth-char}
\end{proof}

\subsection{Unbiased Estimation and Inverse Propensity Scoring}

We now describe a class of estimators $\hat{\theta}_S$ that we will focus on for the remainder of the paper.
Note that simply calculating the quantity of interest, $\theta$, on the sampled data points can lead to bias, due to the potential correlation between costs and data.
For instance, suppose that $z\in \reals$ and the goal is to estimate the mean of the distribution of $z$. A natural estimator is the average of the collected data: $\hat{\theta}_S = \frac{1}{|S|} \sum_{i\in S} z_i$. However, if players with lower $z$ tend to have lower cost, and are therefore selected with higher probability by the analyst, 
then this estimator will consistently underestimate the true mean. 


This problem can be addressed using \emph{inverse propensity scoring} (IPS), pioneered by Horvitz and Thompson \cite{HT52}. The idea is to recover unbiasedness by weighting each data point by the inverse of the probability of observing it.  
%
%
%
This IPS approach can be applied to any parameter estimation problem where the parameter of interest is the expected value of an arbitrary \emph{moment function} $m:\cZ\rightarrow \reals$.
\begin{defn}[Horvitz-Thompson Estimator]\label{defn:ht} The Horvitz-Thompson estimator for the case when the parameter of interest is the expected value of a (moment) function $m: \cZ\rightarrow \reals$ is defined as:
\begin{equation}
\hat{\theta}_S = \frac{1}{n} \sum_{i\in [n]} \frac{m(z_i)\cdot 1\{i\in S\}}{A(c_i)}
\end{equation}
\end{defn}
The Horvitz-Thompson estimator is the unique unbiased estimator that is a linear function of the observations $m(z_i)$ \cite{RS12}.  It is therefore without loss of generality to focus on this estimator if one restricts to unbiased linear estimators.\footnote{We note that we have assumed, for convenience, that $A(c_i) > 0$ for all $i \in [n]$ in the expression of this estimator, for it to be unbiased and well-defined.  It is easy to see from the expression for the variance given in Section~\ref{sec:moment-estimation} that the variance-minimizing allocation rule will indeed be non-zero for each cost. }

\paragraph{IPS beyond moment estimation.} We defined the Horvitz-Thompson estimator with respect to moment estimation problems, $\theta = \E[m(z)]$.
As it turns out, this approach to unbiased estimation extends even beyond the moment estimation problem to parameter estimation problems defined as the solution to a system of moment equations $\E[m(z;\theta)]=0$ or parameters defined as the minima of a moment function $\argmin_{\theta}\E[m(z;\theta)]$. We defer this discussion to Section~\ref{sec:linear-regression}.

\section{Estimating Moments of the Data Distribution}\label{sec:moment-estimation}
In this section we 
consider 
the case where the analyst's goal is to estimate the mean of a given moment function of the distribution.  That is, there is some function $m:\cC\rightarrow [0,1]$ 
such that both $0$ and $1$ are in the support of random variable $m(z)$, and the goal of the analyst is to estimate $\theta = \E[m(z)]$.\footnote{Observe that it is easy to deal with the more general case of $m(z) \in [a,b]$ by a simple linear translation, i.e., estimate $\tilde{m}(z) = \frac{m(z)-a}{b-a}$ instead, which is in $[0,1]$ and then translate the estimator back to recover $m(z)$.}  We assume that $\hat{\theta}_S$, the estimator being applied, is the Horvitz-Thompson estimator given in Definition \ref{defn:ht}.

For convenience we will assume that the cost distribution $\cF$ has finite support, say $\cC=\{c_1,\ldots,c_{\ubc}\}$ with $c_1< \ldots < c_{\ubc}$.  (We relax the finite support assumption in Section~\ref{sec:moments.continuum}.)
Write $\pi_t=f(c_t)$ for the probability of cost $c_t$ in $\cF$.  Also, for a given allocation rule $A$, we will write $A_t=A(c_t)$ for convenience.  That is, we can interpret an allocation rule $A$ as a vector of $\ubc$ values $A_1, \dotsc, A_{\ubc}$.   For further convenience, we will write $\duals_t=\Pr[m(z)=1|c_t]$.  This is the probability that the moment takes on its maximum value when the cost is $c_t$.  Finally, we will assume that the distribution of costs is regular.

Our goal is to address the analyst's mechanism design problem for this restricted setting. By Lemma \ref{lem:reduction-to-opt} it suffices to solve the analyst's optimization problem. 
We start by characterizing the worst-case variance for this setting. 

\begin{lemma}\label{lem:ht-variance}
The worst-case variance of the Horvitz-Thompson estimator of a moment $m:\cC\rightarrow [0,1]$, given cost distribution $\cF$ and allocation rule $A$, is:
\begin{equation}
\label{eqn:moment-simplified-variance}
n\cdot \Var^*(\hat{\theta}_S; \cF, A) = \sup_{\duals \in [0,1]^{\ubc}}
\sum_{t=1}^{\ubc} \pi_t \cdot \frac{\duals_t}{A_t} - \left(\sum_{t=1}^{\ubc} \pi_t\cdot \duals_t\right)^2
\end{equation}
\end{lemma}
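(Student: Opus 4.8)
The plan is to compute $\E[\hat\theta_S^2]$ for a fixed distribution $\cD$ and then maximize the resulting expression over all $\cD$ consistent with the cost marginal $\cF$. Since the Horvitz-Thompson estimator is unbiased with $\E[\hat\theta_S] = \theta$, we have $\Var(\hat\theta_S;\cD,A) = \E[\hat\theta_S^2] - \theta^2$, so I first expand $\E[\hat\theta_S^2]$ where $\hat\theta_S = \frac1n\sum_i \frac{m(z_i)1\{i\in S\}}{A(c_i)}$. The key structural fact is that, conditioned on the costs $c_1,\dots,c_n$, the events $\{i\in S\}$ are independent across $i$ (each agent is sampled independently with probability $A(c_i)$), and the data values $z_i$ are independent across agents (each pair $(z_i,c_i)$ is drawn i.i.d.\ from $\cD$). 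So when I square the sum and take expectations, the diagonal terms $i=j$ give $\frac1{n^2}\sum_i \E\!\left[\frac{m(z_i)^2 1\{i\in S\}}{A(c_i)^2}\right] = \frac1{n^2}\sum_i \E\!\left[\frac{m(z_i)^2}{A(c_i)}\right]$ (using $\E[1\{i\in S\}\mid c_i] = A(c_i)$), and the off-diagonal terms $i\neq j$ factor as $\frac1{n^2}\sum_{i\neq j}\E\!\left[\frac{m(z_i)1\{i\in S\}}{A(c_i)}\right]\E\!\left[\frac{m(z_j)1\{j\in S\}}{A(c_j)}\right] = \frac{n-1}{n}\theta^2$ since each factor equals $\theta/n$.

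**Combining and subtracting the bias term.** Putting these together, $n\cdot\Var(\hat\theta_S;\cD,A) = n\cdot\left(\frac1{n^2}\sum_i \E\!\left[\frac{m(z_i)^2}{A(c_i)}\right] + \frac{n-1}{n}\theta^2 - \theta^2\right) = \E\!\left[\frac{m(z)^2}{A(c)}\right] - \theta^2$, where the last expectation is over a single draw $(z,c)\sim\cD$. Now I decompose this expectation by conditioning on the cost: $\E\!\left[\frac{m(z)^2}{A(c)}\right] = \sum_{t=1}^{\ubc}\pi_t\cdot\frac{\E[m(z)^2\mid c_t]}{A_t}$ and $\theta = \sum_{t=1}^{\ubc}\pi_t\cdot\E[m(z)\mid c_t]$. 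The crucial observation is that since $m(z)\in[0,1]$, for any conditional distribution of $z$ given $c_t$ we have $\E[m(z)^2\mid c_t] \le \E[m(z)\mid c_t]$, with equality achievable exactly when $m(z)\in\{0,1\}$ almost surely conditioned on $c_t$; moreover $\E[m(z)\mid c_t]$ can be made to equal any value in $[0,1]$ (here is where the assumption that $0$ and $1$ are both in the support of $m(z)$ enters, guaranteeing the adversary has this freedom). Writing $q_t := \E[m(z)^2 \mid c_t]$ and noting that the only constraint linking the moments is $\E[m(z)\mid c_t]\ge \E[m(z)^2\mid c_t] = q_t$, the adversary wishing to maximize variance will want $\E[m(z)\mid c_t]$ as small as possible for the $-\theta^2$ penalty while keeping $q_t$ fixed, i.e.\ will set $\E[m(z)\mid c_t] = q_t$, which corresponds to $m(z)$ being $\{0,1\}$-valued given $c_t$ with $\Pr[m(z)=1\mid c_t] = q_t$.

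**Taking the supremum.** After this reduction the worst-case variance becomes $n\cdot\Var^*(\hat\theta_S;\cF,A) = \sup_{\duals\in[0,1]^{\ubc}} \sum_{t=1}^{\ubc}\pi_t\frac{\duals_t}{A_t} - \left(\sum_{t=1}^{\ubc}\pi_t\duals_t\right)^2$, matching the claimed expression. I should verify both directions: any feasible $\duals$ is realized by some consistent $\cD$ (take $z$ conditionally Bernoulli as above, using that $0,1$ lie in the support of $m(z)$), giving the $\ge$ direction; and for any $\cD$, replacing the true conditional second moments by $q_t = \E[m(z)^2\mid c_t]$ and replacing $\E[m(z)\mid c_t]$ by $q_t$ only weakly increases the variance expression, giving the $\le$ direction. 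The main obstacle I anticipate is the bookkeeping in the second-moment expansion — carefully handling the conditional independence structure (data values independent across agents, sampling independent across agents given costs) so that the off-diagonal terms cleanly produce $\frac{n-1}{n}\theta^2$ and cancel against the $-\theta^2$ from $(\E\hat\theta_S)^2$ — and making precise the claim that the adversary's optimal choice makes $m(z)$ Boolean conditioned on each cost, which requires the support condition on $m(z)$ to ensure feasibility of every $\duals\in[0,1]^{\ubc}$.
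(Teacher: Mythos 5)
Your proof is correct and follows essentially the same route as the paper's: both reduce the variance to the single-sample expression $\E\left[m(z)^2/A(c)\right]-\theta^2$ (you by expanding the square, the paper by viewing $\hat{\theta}_S$ as an average of $n$ i.i.d.\ terms) and then argue that the adversary's worst case makes $m(z)$ Boolean conditional on each cost, so the supremum ranges over $\duals\in[0,1]^{\ubc}$. The only cosmetic differences are the direction of the moment argument --- the paper fixes $\E[m(z)\mid c_t]$ and pushes the second moment up via a mean-preserving spread, while you fix $\E[m(z)^2\mid c_t]$ and push the first moment down to it, both landing on the two-point conditional distribution --- and a small slip in your prose where ``each factor equals $\theta/n$'' should read $\theta$, since the $1/n^2$ prefactor is already displayed.
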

\begin{proof}
For any distribution $\cD$, observe that the Horvitz-Thompson estimator can be written as the sum of $n$ i.i.d. random variables each with a variance:
\begin{align*}
\sigma^2 = \E\left[\left(\frac{m(z_i)\cdot 1\{i\in S\}}{A(c_i)}\right)^2\right] - \E\left[\frac{m(z_i)\cdot 1\{i\in S\}}{A(c_i)}\right]^2 = \sum_{t=1}^{\ubc} \pi_t  \cdot \frac{\E\left[m(z)^2|c_t\right]}{A_t} - \E[m(z)]^2
\end{align*}
Hence, the variance of the estimator is $\nicefrac{\sigma^2}{n}$. Observe that conditional on any value $c$, the worst-case distribution $\cD$, will assign positive mass only to values $z\in \cZ$ such that $m(z)\in \{0,1\}$.  This is because any other conditional distribution can be altered by a mean-preserving spread, pushing all the mass on these values, while preserving the conditional mean $\E\left[m(z)|c\right]$. This would strictly increase the latter variance. Thus we can assume without loss of generality that $m(z)\in \{0,1\}$, in which case $m(z)^2 = m(z)$ and $\E[m(z)|c]=\Pr[m(z)=1|c]$. Recall that $\duals_t=\Pr[m(z)=1|c_t]$. Then we can simplify the variance as:
\begin{align*}
n\cdot \Var(\hat{\theta}_S; \cD, A) =~ \sum_{t=1}^{\ubc} \pi_t  \cdot \frac{\E[m(z)|c_t]}{A_t} - \E[m(z)]^2=~\sum_{t=1}^{\ubc} \pi_t \cdot \frac{\duals_t}{A_t} - \left(\sum_{t=1}^{\ubc} \pi_t \cdot \duals_t\right)^2
\end{align*}
The theorem follows since the worst-case variance is a supremum over all possible consistent distributions, hence equivalently a supremum over conditional probabilities $\duals:[0,1]^{\ubc}$.
\end{proof}

Given the above characterization of the variance of the estimator, we can greatly simplify the analyst's optimization problem for this setting.  Indeed, it suffices to find the allocation rule $A \in \left(0,1\right]^{\ubc}$ that minimizes \eqref{eqn:moment-simplified-variance}, subject to $A$ being monotone non-decreasing and satisfying the expected budget constraint.

\subsection{Characterization of the Optimal Allocation Rule}

We are now ready to solve the analyst's optimization problem for moment estimation.  In this and all following sections, we denote $\bar{B} = \frac{B}{n}$ for simplicity of notations, and refer to $\bar{B}$ as the ``average budget per agent''. Note that different agents with different costs may be allocated different fractions of the total budget $B$ that in general do not coincide with $\bar{B}$. We remark that if $\bar{B}$ is larger than the expected cost of an agent, then it is feasible (and hence optimal) for the analyst to set the allocation rule to pick any type with probability $1$. We therefore assume without loss of generality that $\E[c] > \bar{B}$. 

Our analysis is based on an equilibrium characterization, where we view the analyst choosing $A$ and the adversary choosing $z$ as playing a zero-sum game and solve for its equilibria. We first present the characterization and some qualitative implications and then present an outline of our proof. We defer the full details of the proof to Appendix~\ref{app:proof-of-moment-estimation}.

\begin{theorem}[Optimal Allocation for Moment Estimation]\label{thm:main-moment-estimation}
The optimal allocation rule $A$ is determined by two constants $\bar{A}$ and $t^*\in \{0,\ldots, \ubc\}$ such that:
\begin{equation}
A_t = \begin{cases}
\bar{A} & \text{if $t\leq t^*$}\\
\frac{\alpha}{\sqrt{c_t}} & \text{o.w.}
\end{cases}
\end{equation}
with $\alpha$ uniquely determined such that the budget constraint is binding.\footnote{The explicit form of this is $\alpha=\frac{\bar{B} - \bar{A} \E[c\cdot 1\{c\leq c_{t^*}\}]}{\E[\sqrt{c}\cdot 1\{c>c_{t^*}\}}$.} 
Moreover, the parameters $\bar{A}$ and $t^*$ can be computed in time $O(\log(|\cC|))$.  
\end{theorem}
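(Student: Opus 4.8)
The plan is to derive the theorem from the KKT/optimality conditions for the convex program obtained after Lemma~\ref{lem:ht-variance}, and then to argue that the two free parameters $(\bar A, t^*)$ can be found by a binary search. First I would set up the minimax problem: the analyst minimizes $\sup_{\duals\in[0,1]^{\ubc}} \sum_t \pi_t \duals_t/A_t - (\sum_t \pi_t \duals_t)^2$ over monotone non-increasing $A\in(0,1]^{\ubc}$ subject to $\sum_t \pi_t c_t A_t \le \bar B$. Since the inner objective is convex in $\duals$ on a compact convex set and (after checking) the whole thing is a convex problem in $A$, I would invoke a minimax theorem so that an equilibrium $(A^\star,\duals^\star)$ exists, and then write the Lagrangian with a multiplier $\lambda\ge 0$ for the budget constraint and multipliers for the monotonicity constraints $A_{t+1}\le A_t$. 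Stationarity in $A_t$ gives $-\pi_t \duals_t/A_t^2 + \lambda \pi_t c_t + (\text{monotonicity terms}) = 0$; wherever the monotonicity constraint is slack this yields $A_t \propto \sqrt{\duals_t/c_t}$, and wherever it binds across a block the $A_t$ are forced equal — this is exactly the dichotomy in the theorem statement. So the real content is to show (i) the binding (pooled) types are a prefix $\{1,\dots,t^*\}$ and (ii) on that prefix $\duals_t^\star$ adjusts so the common value $\bar A$ is consistent, while on the tail $\duals_t^\star$ is pinned (to its extreme, giving $A_t = \alpha/\sqrt{c_t}$).

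Next I would pin down the adversary's best response. Given $A$, the inner supremum over $\duals$ of a concave function is characterized by its own first-order conditions: $\partial/\partial \duals_t = \pi_t/A_t - 2\pi_t \E[\sum_s \pi_s \duals_s]$, i.e. $\duals_t^\star = 1$ if $1/A_t > 2\mu$, $\duals_t^\star = 0$ if $1/A_t < 2\mu$, and $\duals_t^\star$ indifferent if $1/A_t = 2\mu$, where $\mu = \sum_s \pi_s \duals_s^\star$. Since $A$ is non-increasing, $1/A_t$ is non-decreasing, so the adversary puts $\duals=0$ on an initial block and $\duals=1$ on a terminal block — and crucially, the indifference set $\{t : 1/A_t = 2\mu\}$ is where pooling can occur. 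Combining the two sides: on the pooled prefix all $A_t = \bar A = 1/(2\mu)$ and the $\duals_t$ there are free in $[0,1]$, chosen to make $\mu$ consistent; on the non-pooled tail $\duals_t = 1$ and $A_t = \alpha/\sqrt{c_t}$ with $\alpha$ fixed by the binding budget constraint (the slackness of the budget multiplier: since we assumed $\E[c] > \bar B$, the budget must bind). This gives the closed form, and the footnote's expression for $\alpha$ is just solving $\sum_{t\le t^*}\pi_t c_t \bar A + \sum_{t>t^*}\pi_t c_t\cdot \alpha/\sqrt{c_t} = \bar B$ for $\alpha$. One also needs $\bar A \le \alpha/\sqrt{c_{t^*+1}}$ is violated — rather, monotonicity at the junction forces $\bar A \ge \alpha/\sqrt{c_{t^*}}$-type inequalities, which together with feasibility ($\bar A \le 1$) single out the correct $t^*$.

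For the $O(\log|\cC|)$ runtime: I would show that as $t^*$ ranges over $\{0,1,\dots,\ubc\}$, the candidate solution is monotone in a suitable sense — e.g. the implied value of $\bar A$ (equivalently the gap $\bar A - \alpha/\sqrt{c_{t^*}}$, or the adversary-consistency residual) is monotone in $t^*$ — so the unique feasible/optimal $t^*$ is the threshold index of a monotone predicate and can be located by binary search over the $\ubc$ possible cutoffs. Concretely, for each candidate $t^*$ the quantities $\E[c\cdot 1\{c\le c_{t^*}\}]$ and $\E[\sqrt c\cdot 1\{c>c_{t^*}\}]$ are prefix/suffix sums that, after an $O(\ubc)$ preprocessing, can be read off in $O(1)$, so each binary-search probe costs $O(1)$ and the total is $O(\log\ubc)$; then $\bar A$ and $\alpha$ follow in closed form. (If the preprocessing is not counted, or is assumed given, the statement is literally $O(\log|\cC|)$; otherwise it is $O(|\cC|)$ preprocessing plus $O(\log|\cC|)$ search — I would state the convention explicitly.)

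The main obstacle I anticipate is establishing the monotone-prefix structure rigorously: showing that the set of pooled types is exactly an initial segment, that it coincides with the adversary's indifference region, and that exactly one $(\bar A, t^*)$ simultaneously satisfies stationarity on both sides, monotonicity at the junction $t^*{\to}t^*{+}1$, the feasibility bound $\bar A\in(0,1]$, and complementary slackness of the budget constraint. This is where the convexity of the program, the monotonicity of $A$, and the explicit bilinear-minus-quadratic form of the variance all have to be combined carefully; everything after that (solving for $\alpha$, the binary search, the complexity bound) is routine.
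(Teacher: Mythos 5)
Your overall strategy is the same as the paper's: reduce to the zero-sum game between the analyst and the adversary, characterize each player's best response (the analyst's via Lagrangian stationarity giving $A_t=\min(1,\sqrt{\duals_t/(\lambda c_t)})$ with $\lambda$ set by the binding budget, the adversary's via the threshold rule comparing $1/A_t$ to $2\langle\pi,\duals\rangle$), and then read off the pooled-prefix versus $\alpha/\sqrt{c_t}$ dichotomy, finishing with a binary search for $t^*$. Those ingredients are all correct and are exactly Lemmas~\ref{lem: best-resp-A} and~\ref{lem: best-resp-zeta} of the paper (minor quibble: the inner objective is \emph{concave} in $\duals$, not convex, which is what makes the sup a well-posed concave maximization).

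The gap is the step you yourself flag as ``the main obstacle'': actually exhibiting an equilibrium pair and proving one exists for every budget level. Two concrete pieces are missing. First, on the pooled prefix it is not enough to say the adversary's $\duals_t$ ``adjust so the common value $\bar A$ is consistent'': for $\bar A$ to be a best response of the analyst on that block you need $\sqrt{\duals_t/(\lambda c_t)}$ constant there, i.e.\ the specific choice $\duals_t\propto c_t$ (the paper takes $\duals_t=\interp^* c_t/c_{k^*}$), and you must verify that this choice \emph{simultaneously} satisfies the adversary's indifference condition $2\langle\pi,\duals\rangle=1/\bar A$, the budget identity, and feasibility $\duals_t\in[0,1]$; this is what the paper's functions $Q(k,\interp)$, $R(k,\interp)$, $B(k,\interp)$ and their monotonicity in $k$ and $\interp$ are for, and that existence/uniqueness machinery is absent from your sketch. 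Second, your structural claim that the pooled prefix coincides with the adversary's indifference set fails in the regime where the cap $\bar A=1$ binds (the paper's Case~2, $R(k^*,\interp^*)<1$): there the indifference region $\{1,\dots,\tilde k\}$ is a strict subset of the pooled region $\{1,\dots,t^*\}$, the intermediate types have $A_j=1$ \emph{and} $\duals_j=1$, and $t^*$ is determined by the different condition $t^*=\max\{k:\bar B>Q(k,1)\}$ rather than by indifference. Without that case split (plus the corner case $\bar B<B(1,1)$, where $t^*=0$ and $\duals\equiv 1$), the argument does not cover all budgets. The binary-search complexity claim is fine modulo the prefix-sum preprocessing caveat you already note.
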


The parameters $\bar{A}$ and $t^*$ in Theorem~\ref{thm:main-moment-estimation} are explicitly derived in closed form in Appendix \ref{app:proof-of-moment-estimation}.  For instance, when $\bar{B}\geq \frac{c_{|\cC|}}{2}$, then $t^*=|\cC|$ and $A_t=\bar{A}=\min\left\{1, \nicefrac{\bar{B}}{\E[c]}\right\}$ for all $t$.  When $\bar{B}\leq \frac{\sqrt{c_1}\E[\sqrt{c}]}{2}$ then $t^*=0$ and $A_t = \frac{\bar{B}}{\sqrt{c_t} \E[\sqrt{c}]}$. In fact, it can be shown (see full proof) that in this latter case, the worst-case distribution is given by $q=1$.  In particular, in this restricted case, the approximation of Roth and Schoenebeck~\cite{RS12} is in fact optimal, and indeed our allocation rule is expressing the solution of Roth and Schoenebeck~\cite{RS12} as a posted menu for a discrete, regular distribution of costs. In every other case, $q \neq 1$ and our solution differs from that of~Roth and Schoenebeck~\cite{RS12}, exhibiting a pooling region for low-cost agents. More generally, the computational part of Theorem~\ref{thm:main-moment-estimation} follows by performing binary search over the support of $\cF$, which can be done in $O(\log(|\cC|))$ time.

\begin{figure}[htpb]
\centering
\begin{subfigure}[t]{0.32\textwidth}
\centering
\includegraphics[scale=.55]{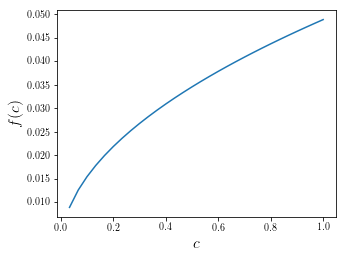}
\end{subfigure}
~~~~~~~~~~~~~
\begin{subfigure}[t]{0.45\textwidth}
\centering
\includegraphics[scale=.551]{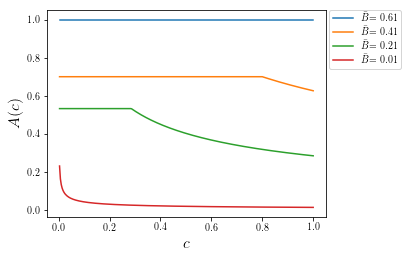}
\end{subfigure}
\caption{The pdf (left) of a distribution of costs
and the corresponding optimal allocation rule for varying levels of per-agent budget (right). 
Note that for sufficiently large budgets, a flat pooling region forms for agents with low costs.
}
\label{fig:example_alloc}
\end{figure}

We note that the optimal rule essentially allocates to each agent inversely proportionally to the square root of their cost, but may also ``pool'' the allocation probability for agents at the lower end of the cost distribution.  See Figure \ref{fig:example_alloc} for examples of optimal solutions.  

The proof of Theorem~\ref{thm:main-moment-estimation} appears in Appendix \ref{app:proof-of-moment-estimation}.  The main idea is to view the optimization problem as a zero-sum game between the analyst who designs the allocation rule $A$, and an adversary who designs $\duals$ so as to maximize the variance of the estimate.  We show how to compute an equilibrium $(A^*,z^*)$ of this zero-sum game via Lagrangian and KKT conditions, and then note that the obtained $A^*$ must in fact be an optimal allocation rule for worst-case variance.


The analysis above applied to a discrete cost distribution over a finite support of possible costs.  We show how to extend this analysis to a continuous distribution below, noting that the continuous variant of the Optimization Problem for Moment Estimation can be derived by taking the limit over finer and finer discrete approximations of the cost distribution.


\subsection{Continuous Costs for Moment Estimation}\label{sec:moments.continuum}
\begin{defn} [Continuous Optimization Problem for Moment Estimation] When costs are supported on $\cC = [0,1]$, the analyst's optimization problem for the moment estimation problem based on the Horvitz-Thompson estimator can be written as:
\begin{equation}\label{eqn:continuous}
\begin{aligned}
\inf_{A: \left(0,1\right]\rightarrow [0,1]}\sup_{\interp: [0,1]\rightarrow [0,1]}~&
\int_{0}^{1}  \frac{\interp(c)}{A(c)}~ d\cF(c) - \left(\int_{0}^1 \interp(c)~d\cF(c)\right)^2\\
\mathrm{s.t.}~& \int_{0}^{1} c \cdot A(c)~d\cF(c) \leq \bar{B}\\
~& A \mathrm{~is~monotone~non}\text{-}\mathrm{increasing} 
\end{aligned}
\end{equation}
\end{defn}
We can now establish the following continuous variant of Theorem~\ref{thm:main-moment-estimation}, which describes the optimal survey mechanism for continuous cost distributions.  
\begin{theorem}[Continuous Limit of Optimal Allocation]\label{thm:main-moment-estimation-continuous}
If the distribution of costs is atomless and supported in $(0,1]$, then the optimal allocation rule $A$ is determined by two constants $\bar{A}$ and $x^*\in \reals$ such that:
\begin{equation}
A(c) = \begin{cases}
\bar{A} & \text{if $c\leq x^*$}\\
\frac{\alpha}{\sqrt{c}} & \text{o.w.}
\end{cases}
\end{equation}
with $\alpha$ uniquely determined such that the budget constraint is binding.\footnote{The explicit form of this is $\alpha=\frac{\bar{B} - \bar{A} \E[c\cdot 1\{c\leq x*\}]}{\E[\sqrt{c}\cdot 1\{c>x^*\}]}$.}. The quantities $\bar{A}$ and $x^*$ are defined as follows: for any $x\in \reals$ let
\begin{align*}
Q_{\infty}(x) =~& \E_{c \sim \cF}[\min\{c, \sqrt{cx}\}]  &
R_{\infty}(x) =~& 2\E_{c \sim \cF}\left[\min\left\{\frac{c}{x}, 1\right\}\right]  &
G(x) =~& \frac{Q_{\infty}(x)}{\max(1, R_{\infty}(x))} 
\end{align*}
Then $x^* = \min\{ 1, G^{-1}(\bar{B})\}$ and $\bar{A}=\frac{1}{\max(1, R_{\infty}(x^*))}$ (see Figure \ref{fig:continuous}).\footnote{We take the convention that if $\bar{B}$ lies above the range of $G$, then $G^{-1}(\bar{B}) = +\infty$.}
\end{theorem}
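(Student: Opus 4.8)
The plan is to derive Theorem~\ref{thm:main-moment-estimation-continuous} as the limit of the finite-support characterization in Theorem~\ref{thm:main-moment-estimation}, using the observation that the continuous optimization problem~\eqref{eqn:continuous} is the limit of its discrete counterparts over ever-finer discretizations of $\cF$. Concretely, I would fix a sequence of discrete cost distributions $\cF_k$ converging weakly to $\cF$ — say, partition $(0,1]$ into $k$ subintervals and move the $\cF$-mass of each subinterval to its right endpoint, perturbing the grid by a negligible amount if necessary so that $\cF_k$ is regular and $\bar B$ stays feasible — and apply Theorem~\ref{thm:main-moment-estimation} to each $\cF_k$. This produces, for every $k$, an optimal allocation $A^{(k)}$ of the stated two-piece form, whose parameters $\bar A_k$, threshold cost $\hat x_k := c_{t^*_k}$ and $\alpha_k$ are given — by the closed-form derivation in Appendix~\ref{app:proof-of-moment-estimation} — through the discrete analogues $Q_k(x) = \E_{c\sim\cF_k}[\min\{c,\sqrt{cx}\}]$, $R_k(x) = 2\,\E_{c\sim\cF_k}[\min\{c/x,1\}]$ and $G_k = Q_k/\max(1,R_k)$ of $Q_\infty,R_\infty,G$: namely $\hat x_k = \min\{1, G_k^{-1}(\bar B)\}$ up to the $O(1/k)$ discretization error, $\bar A_k = 1/\max(1,R_k(\hat x_k))$, and $\alpha_k$ determined by the binding budget constraint. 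It then suffices to show (i) $\bar A_k\to\bar A$, $\hat x_k\to x^*$, $\alpha_k\to\alpha$, and (ii) that the optimal values converge, so that the $\cF$-almost-everywhere limit of the $A^{(k)}$ is feasible and optimal for the continuous problem.

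For (i), note that $c\mapsto\min\{c,\sqrt{cx}\}$ and $c\mapsto\min\{c/x,1\}$ are bounded and continuous on $(0,1]$, so weak convergence $\cF_k\Rightarrow\cF$ gives $Q_k(x)\to Q_\infty(x)$ and $R_k(x)\to R_\infty(x)$ for each $x$, and an equicontinuity-in-$x$ estimate upgrades this to uniform convergence on compact subsets of $(0,1]$; hence $G_k\to G$ uniformly on compacts. Since $Q_\infty$ is non-decreasing and $\max(1,R_\infty)$ is non-increasing, $G$ is non-decreasing, so $G^{-1}$ is well defined with the stated convention above the range of $G$, and uniform convergence of monotone functions carries over to their generalized inverses at continuity points; this gives $\hat x_k\to x^* = \min\{1,G^{-1}(\bar B)\}$, then $\bar A_k = 1/\max(1,R_k(\hat x_k))\to 1/\max(1,R_\infty(x^*)) = \bar A$, and finally $\alpha_k\to\alpha$ because the budget identity $\bar A_k\,\E_{c\sim\cF_k}[c\,1\{c\le\hat x_k\}] + \alpha_k\,\E_{c\sim\cF_k}[\sqrt c\,1\{c>\hat x_k\}] = \bar B$ converges to the analogous identity defining $\alpha$. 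Consequently $A^{(k)}(c)\to A(c)$ for every $c\neq x^*$, hence $\cF$-a.e.\ since $\cF$ is atomless, and the $A^{(k)}$ are uniformly bounded below by a positive constant (as $\alpha_k\to\alpha>0$ and costs lie in $(0,1]$), so $1/A^{(k)}\to1/A$ boundedly a.e.

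For (ii), one shows two matching inequalities. First, any feasible $A$ for~\eqref{eqn:continuous} can be approximated by step allocations that are feasible for $\cF_k$ and whose objective values converge to that of $A$ — the objective $\sup_{x}\big[\int x/A\,d\cF-(\int x\,d\cF)^2\big]$ is stable under such approximations because $1/A$ is bounded and $\cF_k\Rightarrow\cF$ — which yields $\limsup_k\mathrm{OPT}_k\le\mathrm{OPT}_\infty$. Second, using the a.e.\ convergence $A^{(k)}\to A$ together with $\cF_k\Rightarrow\cF$ and the uniform bounds, the finite-support objectives $F_k(x,A^{(k)})$ converge to $F_\infty(x,A)$ uniformly in the adversary's variable $x\in[0,1]$, so the inner suprema converge and the $F_\infty$-value of $A$ equals $\liminf_k\mathrm{OPT}_k$; since $A$ is feasible for~\eqref{eqn:continuous} — it is non-increasing, lies in $[0,1]$ (because $\bar A\le1$ and $\alpha/\sqrt c\le\alpha/\sqrt{x^*}=\bar A$ for $c\ge x^*$), and saturates the budget by the choice of $\alpha$ — this forces $A$ to be optimal, with $\mathrm{OPT}_k\to\mathrm{OPT}_\infty$.

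The main obstacle is this last interchange of limits: we are simultaneously perturbing the measure ($\cF_k\to\cF$), the allocation ($A^{(k)}\to A$), and taking an inner supremum over the adversary, so some care is needed to prove the uniform-in-$x$ convergence of $F_k(\cdot,A^{(k)})$ and, crucially, to rule out the budget constraint going slack in the limit; atomlessness of $\cF$ and the uniform positive lower bound on the allocations are exactly what make this work. As an alternative, one can bypass the limit and verify the theorem directly: the convex--concave structure of~\eqref{eqn:continuous} (the objective is convex in $A$ and concave in the adversary's $x$) guarantees a saddle point by Sion's minimax theorem, and one checks that the stated $A$, paired with the threshold adversary response $x(c)=1\{A(c)<\tfrac{1}{2m}\}$ for the consistent value $m=\int x\,d\cF$, satisfies the KKT conditions of~\eqref{eqn:continuous}, with the pooling region $[0,x^*]$ emerging from ironing the monotonicity constraint; the limiting route is preferable here only because it reuses Theorem~\ref{thm:main-moment-estimation} essentially verbatim.
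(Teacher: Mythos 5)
Your proposal is correct and follows essentially the same route as the paper: discretize the atomless distribution, invoke the closed-form discrete characterization (Theorem~\ref{thm:main-moment-estimation-closed}), observe that $Q(k,1),R(k,1),B(k,1)$ converge to $Q_\infty,R_\infty,G$, and pass the parameters $\bar A,x^*,\alpha$ to the limit. The only difference is that you spell out the justification that the limit of the discrete optima is optimal for the continuous problem (your two matching inequalities), a step the paper handles by citing continuity of the minimax value in the game's parameters.
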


\begin{proof}
See Appendix~\ref{app:moments.continuum}.
\end{proof}

Let us give some intuition behind the form of the allocation rule described in Theorem~\ref{thm:main-moment-estimation-continuous}.  As in Theorem~\ref{thm:main-moment-estimation}, the allocation rule will pool agents with low costs (i.e., less than some threshold $x^*$), then allocate to higher-cost agents inversely proportional to the square root of their costs.  In the definition of $x^*$ and $\bar{A}$, note that $Q_{\infty}$ is non-decreasing and $R_{\infty}$ is non-increasing, so $G$ is non-decreasing.  We therefore have that $x^*$, the boundary of the pooling region, increases with $\bar{B}$ up to a maximum value of $1$ (at which point all agents are pooled).

Let's restrict attention to the case where the mean of the distribution is at least as large as half of the maximum value of the support, i.e. $\E[c]\geq \nicefrac{1}{2}$.  In this setting, we see that $R_{\infty}(x)\leq 1$ for all $x\in [0,1]$, so
\begin{equation}
G(x) = \frac{Q_{\infty}(x)}{R_{\infty}(x)} = \frac{x}{2}\cdot \frac{\E_{c \sim \cF}[\min\{c, \sqrt{cx}\}]}{\E_{c \sim \cF}\left[\min\left\{c, x\right\}\right]} \approx \frac{x}{2} \tag{see Figure \ref{fig:continuous}}
\end{equation}
So the optimal allocation sets $x^* \approx 2\bar{B}$.  
Moreover, the allocation for the pooling region is $\bar{A} = \frac{1}{R_{\infty}(x^*)} \approx \frac{\bar{B}}{ \E[\min\{c, 2\bar{B}\}]}$.  So the optimal mechanism takes the following intuitive form: first, assign each agent an allocation probability $\bar{A}$ that would, in an alternate world where costs are capped at $2\bar{B}$, precisely exhaust the budget.  Since costs can actually be greater than $2\bar{B}$, this flat allocation goes over-budget.  So, for agents whose costs are greater than $2\bar{B}$, we remove allocation probability so that (a) the budget becomes balanced, and (b) the remaining probability of allocation is inversely proportional to the square root of the costs.



%
\begin{figure}
\centering
\begin{picture}(200,100)
\put(0,0){\includegraphics[scale=.57]{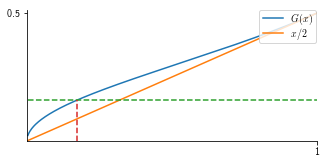}}
\put(187,5){$c$}
\put(1,33){$\bar{B}$}
\put(40,1){$x^*$}
\end{picture}
\caption{Pictorial representation of the function $G(x)$ which defines the optimal threshold $x^*$, portraying its close approximation by the function $x/2$.}\label{fig:continuous}
\end{figure}

\section{Multi-dimensional Parameters for moment estimation}\label{sec:multidimensional-linear}

Section~\ref{sec:moment-estimation} focused on the case of estimating a single-dimensional parameter of the data distribution.  In this section we note that our characterization of the optimal mechanism extends to multi-dimensional moment estimation as well.  In multi-dimensional moment estimation, there is a function $m \colon \cC \to [0,1]^d$, and our goal is to estimate $\theta = \E[m(z)] \in [0,1]^d$.  Here $d$ is the dimension of the estimation problem, which we assume to be a fixed constant.

As before, we will estimate $\theta$ by applying an estimator $\hat{\theta}$ to the data collected from a survey mechanism.  To evaluate an estimator, we must extend our definition of variance to the $d$-dimensional setting, as follows.
\begin{defn}[Worst-Case Mean Squared Error - Risk]
Given allocation function $A$ and distribution $\cD$, the expected mean squared error (or \emph{risk}) of an estimator $\hat{\theta}$ is
\begin{equation}
\Risk(\hat{\theta}_S; \cD, A) =\E_{\hat{\theta}_S \sim \cT(\cD,A)}\left[\left\|\hat{\theta}_S - \theta_0\right\|_2^2\right]
\end{equation}
and the worst-case variance of $\hat{\theta}$ is
\begin{equation}
\Risk^*(\hat{\theta}_S; \cF, A) = \sup_{\cD~\mathrm{consistent~with~}\cF} \Risk(\hat{\theta}_S;\cD, A).
\end{equation}
\end{defn}

When $\hat{\theta}$ is unbiased, the risk has a natural interpretation: it is simply the sum of variances of each coordinate of $\theta$, considered separately.

\begin{claim}[Risk of Unbiased Estimators]
The risk of any unbiased estimator is equal to the sum of variances of every coordinate:
\begin{equation}
\Risk(\hat{\theta}_S; \cD, A) = \E_{\hat{\theta}_S \sim \cT(\cD,A)}\left[\sum_{r=1}^d (\hat{\theta}_{S,r} -\E[\hat{\theta}_{S,r}])^2\right] = \sum_{r=1}^d \Var(\hat{\theta}_{S,r}).
\end{equation}
\end{claim}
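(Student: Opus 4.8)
The claim is essentially an immediate consequence of linearity of expectation applied to the definition of the Euclidean norm, so the proof will be short. The plan is to expand $\|\hat{\theta}_S - \theta_0\|_2^2 = \sum_{r=1}^d (\hat{\theta}_{S,r} - \theta_{0,r})^2$ inside the expectation defining $\Risk$, then interchange the finite sum with the expectation. The only substantive point — and the reason the hypothesis that $\hat\theta$ is unbiased is needed — is that $\theta_{0,r} = \E[\hat\theta_{S,r}]$ coordinatewise, which turns each summand $\E[(\hat{\theta}_{S,r} - \theta_{0,r})^2]$ into exactly $\Var(\hat\theta_{S,r})$ by the definition of variance of a scalar random variable.

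Concretely, first I would write
\[
\Risk(\hat{\theta}_S; \cD, A) = \E_{\hat{\theta}_S \sim \cT(\cD,A)}\left[\left\|\hat{\theta}_S - \theta_0\right\|_2^2\right] = \E_{\hat{\theta}_S \sim \cT(\cD,A)}\left[\sum_{r=1}^d (\hat{\theta}_{S,r} - \theta_{0,r})^2\right],
\]
using the coordinate expansion of the squared $\ell_2$ norm. Next I would invoke linearity of expectation to pull the (finite, length-$d$) sum outside, giving $\sum_{r=1}^d \E[(\hat{\theta}_{S,r} - \theta_{0,r})^2]$. Then I would apply unbiasedness of $\hat\theta$: by the (multidimensional analogue of the) definition of unbiased estimator, $\E[\hat\theta_S] = \theta_0$, hence $\E[\hat\theta_{S,r}] = \theta_{0,r}$ for each $r$. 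Substituting this into each term yields $\E[(\hat{\theta}_{S,r} - \E[\hat\theta_{S,r}])^2] = \Var(\hat\theta_{S,r})$, which is precisely the intermediate and final expressions in the claim.

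There is no real obstacle here — the result is a routine identity. The one thing worth being slightly careful about is the implicit claim that an unbiased estimator of a vector parameter is unbiased in each coordinate; this is just the statement that vector-valued expectation is computed coordinatewise, so it requires no argument beyond pointing it out. I would state the proof in two or three lines and not belabor it.
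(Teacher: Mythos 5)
Your proof is correct and is exactly the routine argument the paper has in mind (the paper treats this claim as immediate and gives no separate proof): expand the squared $\ell_2$ norm coordinatewise, apply linearity of expectation, and use unbiasedness to identify $\theta_{0,r}$ with $\E[\hat{\theta}_{S,r}]$ so that each term is a scalar variance. Nothing is missing.
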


As in the single-dimensional case, the analyst obtains an estimate through the Horvitz-Thompson estimator, which is defined as follows for parameters in $\reals^d$.  Also as in the single-dimensional case, The Horvitz-Thompson estimator is an unbiased estimator of $\E[m(z)]$.
\begin{defn}[Horvitz-Thompson Estimator for Multi-dimensional Moment Estimation]\label{defn:ht.multi} The Horvitz-Thompson estimator for the case when the parameter of interest is the expected value of a vector of moments $m: \cZ\rightarrow \reals^d$ is defined as:
\begin{equation}
\hat{\theta}_S = \frac{1}{n} \sum_{i\in [n]} \frac{1\{i\in S\}}{A(c_i)}\cdot m(z_i)
\end{equation}
\end{defn}

For our characterization of worst-case risk, we will assume that the moment function $m$ can take on the extreme points of the hypercube $[0,1]^d$.  
\begin{assumption}\label{as:support-hypercube}
$\cD$ is such that the induced distribution of $m(z)$ is supported on every extreme point of the $[0,1]^d$ hypercube.
\end{assumption}

\begin{lemma}\label{lem:ht-variance.multi}
Under Assumption~\ref{as:support-hypercube}, the worst-case risk of the Horvitz-Thompson estimator of moment $m:\cC\rightarrow [0,1]$ is
\begin{equation}
\frac{n}{d}\cdot \Risk^*(\hat{\theta}_S; \cF, A) = \sup_{\duals \in [0,1]^{\ubc}}
\sum_{t=1}^{\ubc} \pi_t \cdot \frac{\duals_t}{A_t} - \left(\sum_{t=1}^{\ubc} \pi_t\cdot \duals_t\right)^2.
\end{equation}
\end{lemma}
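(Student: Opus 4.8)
The plan is to reduce the multidimensional statement to the one-dimensional characterization of Lemma~\ref{lem:ht-variance} applied coordinate by coordinate, using that the risk of an unbiased estimator is additive across coordinates.

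First I would fix an arbitrary distribution $\cD$ consistent with $\cF$ and an allocation rule $A$. By the preceding Claim, $\Risk(\hat{\theta}_S;\cD,A)=\sum_{r=1}^d \Var(\hat{\theta}_{S,r};\cD,A)$, where $\hat{\theta}_{S,r}$ is exactly the one-dimensional Horvitz--Thompson estimator of the moment $m_r:\cZ\to[0,1]$ given by the $r$-th coordinate of $m$. The i.i.d. decomposition in the proof of Lemma~\ref{lem:ht-variance} then gives, for each $r$, $n\cdot\Var(\hat{\theta}_{S,r};\cD,A)=\sum_{t}\pi_t\,\E[m_r(z)^2\mid c_t]/A_t-\E[m_r(z)]^2$. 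Summing over $r$ yields an exact expression for $n\cdot\Risk(\hat{\theta}_S;\cD,A)$ in terms of the conditional second moments $\E[m_r(z)^2\mid c_t]$ and the means $\E[m_r(z)]$.

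Next I would take the supremum over $\cD$. As in Lemma~\ref{lem:ht-variance}, for each cost $c_t$ a mean-preserving spread of the conditional law of $z$ onto points with $m(z)\in\{0,1\}^d$ can only increase every second moment $\E[m_r(z)^2\mid c_t]$ (since $m_r^2\le m_r$ on $[0,1]$, with equality on $\{0,1\}$) while leaving every mean $\E[m_r(z)\mid c_t]$ fixed. Concretely, conditional on $c_t$, replace the law of $z$ by the distribution supported on the extreme points of $[0,1]^d$ having independent coordinates with the same marginals $p_{r,t}:=\Pr[m_r(z)=1\mid c_t]$; Assumption~\ref{as:support-hypercube} ensures this distribution is attainable and hence still consistent with $\cF$. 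So the worst case is attained among such distributions, where $\E[m_r(z)^2\mid c_t]=p_{r,t}$, and the objective becomes $\sum_{r=1}^d\big(\sum_t\pi_t p_{r,t}/A_t-(\sum_t\pi_t p_{r,t})^2\big)$ with the vectors $p_{r,\cdot}\in[0,1]^{\ubc}$ now free and independent across $r$. This last supremum decomposes over $r$: each summand is maximized separately over $[0,1]^{\ubc}$ and equals $\sup_{\duals\in[0,1]^{\ubc}}\sum_t\pi_t\duals_t/A_t-(\sum_t\pi_t\duals_t)^2$, the one-dimensional worst-case variance. Hence $\sup_{\cD}n\cdot\Risk^*=d\cdot\sup_{\duals}[\cdots]$, and dividing by $d$ gives the claim.

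The main obstacle I anticipate is the careful justification of the spreading step: showing the adversary loses nothing by restricting to conditional laws supported on the hypercube's extreme points with freely and independently chosen coordinate marginals. This needs both the coordinate-wise mean-preserving spread argument and, crucially, Assumption~\ref{as:support-hypercube} to guarantee the resulting joint distribution remains consistent with the cost marginal $\cF$. Everything else is bookkeeping that reuses the computation of Lemma~\ref{lem:ht-variance} verbatim in each coordinate.
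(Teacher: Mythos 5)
Your proposal is correct and follows essentially the same route as the paper's proof: decompose the risk coordinate-wise via the additivity claim, push each conditional law of $m(z)$ onto the corners of the hypercube by a mean-preserving spread (justified by $m_r^2\le m_r$ on $[0,1]$ and Assumption~\ref{as:support-hypercube} for feasibility), and then observe that the supremum decouples across coordinates into $d$ identical one-dimensional problems. The only cosmetic difference is that you realize the corner-supported conditional law with independent coordinates while the paper uses a single shared uniform threshold (a comonotone coupling); since only the coordinate marginals enter the objective, either choice works.
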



\begin{proof}
See Appendix~\ref{app:ht-variance}.
\end{proof}

Lemma~\ref{lem:ht-variance.multi} implies that the optimal survey design problem in the $d$-dimensional case is, in fact, identical to the problem considered in the single-dimensional case.  We can conclude that Theorems~\ref{thm:main-moment-estimation} and~\ref{thm:main-moment-estimation-continuous}, which characterized the optimal survey mechanisms for discrete and continuous single-parameter settings, respectively, also apply to the multi-dimensional setting without change.  

%
%

\section{Multi-dimensional Parameter Estimation via Linear Regression}\label{sec:linear-regression}

In this section, we extend beyond moment estimation to a multi-dimensional linear regression task (we discuss the non-linear case in Appendix~\ref{app:nonlinear-regression}).  For this setting we will impose additional structure on the data held by each agent.  Each agent's private information consists of a feature vector $x_i \in \reals$, an outcome value $y_i \in \reals$, and a residual value $\epsilon_i \in \reals$, that are i.i.d among agents.  Each agent also has a cost $c_i$.  The data is generated in the following way: first, $x_i$ is drawn from an unknown distribution $\cX$.  Then, independently from $x_i$, the pair $(c_i, \epsilon_i)$ is drawn from a joint distribution $\cD$ over $\reals^2$.  The marginal distribution over costs, $\cD_c$, is known to the designer, but not the full joint distribution $\cD$.  Then $y_i$ is defined to be
\begin{equation}
    y_i = x_i^\top \theta^* + \epsilon_i
\end{equation}
where $\theta^* \in \Theta$ with $\Theta$ a compact subset of $\reals^d$. We further require that $\theta^*$ is in the interior of $\Theta$.  We write $\cD_\epsilon$ for the marginal distribution over $\epsilon_i$, which is supported on some bounded range $[L,U]$ and has mean $0$. (In particular, $L \leq 0 \leq U$.) We remark that it may be the case, however, that $\E \left[\epsilon_i | c_i \right] \neq 0$. 

When a survey mechanism buys data from agent $i$, the pair $(x_i, y_i)$ is revealed.  Crucially, the value of $\epsilon_i$ is not revealed to the survey mechanism.  The goal of the designer is to estimate the parameter vector $\theta^*$.

Note that the single-dimensional moment estimation problem from Section~\ref{sec:moment-estimation} is a special case of linear regression.  Indeed, consider setting $d = 1$, $\epsilon_i = m(z_i) - \E[m(z_i)]$ for each $i$, $\theta^* = \E[m(z_i)]$, and $x_i$ to be the constant $-1$.  Then, when the survey mechanism purchases data from agent $i$, it learns $y_i = m(z_i)$, and estimating $\theta^*$ is equivalent to estimating the expected value of $m(z_i)$.

More generally, one can interpret $x_i$ as a vector of publicly-verifiable information about agent $i$, which might influence a (possibly sensitive) outcome $y_i$.  For example, $x_i$ might consist of demographic information, and $y_i$ might indicate the severity of a medical condition.  The coefficient vector $\theta^*$ describes the average effect of each feature on the outcome, over the entire population.  Under this interpretation, $\epsilon_i$ is the residual agent-specific component of the outcome, beyond what can be accounted for by the agent's features.  We can interpret the independence of $x_i$ from $(c_i, \epsilon_i)$ as meaning that each agent's cost to reveal information is potentially correlated with their (private) residual data, but is independent of the agent's features.  


As in Section~\ref{sec:moment-estimation}, the analyst wants to design a survey mechanism to buy from the agents,  obtain data from the set $S$ of elicited agents, then compute an estimate $\hat{\theta}_S$ of $\theta$. The expected average payment to each of the n agents should be no more than $\bar{B}$. As in Section~\ref{sec: reduction}, we note that the problem of designing a survey mechanism in fact reduces to that of designing an allocation rule $A$ that minimizes said variance and satisfies a budget constraint in which the prices are replaced by known virtual costs.  To this end, the analyst designs an allocation rule $A$ and a pricing rule $P$ so as to minimize the $\sqrt{n}$-normalized worst-case asymptotic mean-squared error of $\hat{\theta}_S$ as the population size goes to infinity. Our mechanism will essentially be optimizing the coefficient in front of the leading $1/n$ term in the mean squared error, ignoring potential finite sample deviations that decay at a faster rate than $1/n$. Note that we will design allocation and pricing rules to be independent of the population size $n$; hence, the analyst can use the designed mechanism even if the exact population size in unknown.

\subsection{Estimators for Regression}

Let $S$ be the set of data points elicited by a survey mechanism. The analyst's estimate will then be the value $\hat{\theta}_S$ that minimizes the Horvitz-Thompson mean-squared error $\E[(y_i-x_i^\top \theta^*)^2]$, i.e.,
\begin{equation}
 \hat{\theta}_S = \argmin_{\theta \in \Theta} \sum_i  \frac{1 \{i \in S\}}{A(c_i)} (y_i-x_i^\top \theta)^2.
\end{equation}
Further, we make the following assumptions on the distribution of data points:
\begin{assumption}[Assumption on the distribution of features]\label{as: x-distribution}
$E[x_i x_i^\top]$ is finite and positive-definite, and hence invertible.
\end{assumption}
Finite expectation is a property one may expect real data such has age, height, weight, etc. to exhibit. The second part of the assumption is satisfied by common classes of distributions, such as multivariate normals. We first show that $\hat{\theta}_S$ is a consistent estimator of $\theta$.
\begin{lemma}\label{lem: linear-consistent-estimator}
Under Assumption~\ref{as: x-distribution}, for any allocation rule $A > 0$ that does not depend on $n$, $\hat{\theta}_S$ is a consistent estimator of $\theta^*$.
\end{lemma}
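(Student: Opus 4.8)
The plan is to show consistency by analyzing the first-order conditions of the weighted least-squares objective and appealing to a law of large numbers. Define the (random, sample-dependent) objective $L_n(\theta) = \frac{1}{n}\sum_i \frac{1\{i\in S\}}{A(c_i)}(y_i - x_i^\top \theta)^2$, so that $\hat\theta_S = \argmin_{\theta\in\Theta} L_n(\theta)$. Since $\theta^*$ is in the interior of the compact set $\Theta$, it suffices to understand the population analogue of $L_n$. Substituting $y_i = x_i^\top\theta^* + \epsilon_i$, expand $L_n(\theta)$ as a sum of three terms: a quadratic form $(\theta-\theta^*)^\top \left(\frac1n\sum_i \frac{1\{i\in S\}}{A(c_i)} x_ix_i^\top\right)(\theta-\theta^*)$, a cross term linear in $(\theta-\theta^*)$ with coefficient $\frac1n\sum_i \frac{1\{i\in S\}}{A(c_i)}\epsilon_i x_i$, and a constant (in $\theta$) term.

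First I would establish that each of these empirical averages converges almost surely (or in probability) to its expectation. The summands are i.i.d. across $i$ once we note that $1\{i\in S\}$ is a Bernoulli$(A(c_i))$ draw independent of everything else given $c_i$; crucially $A$ does not depend on $n$, so the array is genuinely i.i.d. and the strong law applies (using Assumption~\ref{as: x-distribution} for integrability of $x_ix_i^\top$, and boundedness of $\epsilon_i$ on $[L,U]$ for the cross term). Taking expectations and using the tower property: $\E\!\left[\frac{1\{i\in S\}}{A(c_i)} x_ix_i^\top\right] = \E[x_ix_i^\top]$ because $x_i$ is independent of $(c_i,\epsilon_i)$ and hence of $1\{i\in S\}$ given $c_i$; and $\E\!\left[\frac{1\{i\in S\}}{A(c_i)}\epsilon_i x_i\right] = \E[\epsilon_i]\,\E[x_i] = 0$, again using the independence of $x_i$ from $(c_i,\epsilon_i)$ together with $\E[\epsilon_i]=0$ — note this works even though $\E[\epsilon_i\mid c_i]\neq 0$, since the $x_i$ factor decouples. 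Thus $L_n(\theta) \to (\theta-\theta^*)^\top M (\theta-\theta^*) + \text{const}$ uniformly on $\Theta$, where $M = \E[x_ix_i^\top]$ is positive-definite.

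The uniform convergence on the compact set $\Theta$ follows because $L_n(\theta)$ is a quadratic in $\theta$ whose (finitely many) coefficients each converge, and $\Theta$ is bounded. The limiting objective is strictly convex with unique minimizer $\theta^*$ (using positive-definiteness of $M$). A standard consistency argument for M-estimators (e.g., the argmax/argmin continuity theorem) then yields $\hat\theta_S \to \theta^*$.

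The main obstacle is handling the inverse-propensity weights $1/A(c_i)$ cleanly: one must verify that the random weighting does not disturb the i.i.d. structure and that the relevant expectations are finite — but since $A$ is bounded below by a positive constant on the finite (or compact) support $\cC$, the weights are uniformly bounded, so integrability reduces to that of $x_ix_i^\top$ and $\epsilon_i x_i$, which is handled by Assumption~\ref{as: x-distribution} and the boundedness of $\epsilon_i$. A secondary point is confirming that the minimizer $\hat\theta_S$ is eventually attained in the interior (or at least that interior attainment of $\theta^*$ plus uniform convergence suffices for the argmin theorem), which is immediate from compactness of $\Theta$ and $\theta^* \in \mathrm{int}(\Theta)$.
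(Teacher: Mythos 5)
Your proof is correct, and its skeleton is the standard extremum-estimator consistency argument that the paper also uses: identify $\theta^*$ as the unique minimizer of the population objective (via positive-definiteness of $\E[x_ix_i^\top]$ and the decoupling $\E[\epsilon_i x_i]=\E[\epsilon_i]\E[x_i]=0$), establish uniform convergence of the sample objective over the compact $\Theta$, and conclude by the argmin continuity theorem. Where you genuinely diverge is in how the uniform convergence is obtained. The paper verifies the hypotheses of a generic uniform law of large numbers (Theorem 2.1 of Newey--McFadden), and in doing so asserts that $m(\theta;x,y)=(y-x^\top\theta)^2$ is bounded above by $L^2$ or $U^2$ --- a claim that only controls the residual $\epsilon_i^2$, not the full squared error when $x_i$ is unbounded (Assumption~\ref{as: x-distribution} only requires finite second moments). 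You instead expand the objective into its three quadratic coefficients, apply the ordinary strong law to each (integrability following from the uniform lower bound on $A$ over the finite cost support, boundedness of $\epsilon_i$, and finiteness of $\E[x_ix_i^\top]$), and then get uniformity for free from boundedness of $\Theta$, since a quadratic with finitely many converging coefficients converges uniformly on any bounded set. This is more elementary, avoids invoking a dominance condition that the paper does not actually verify cleanly, and makes explicit why the inverse-propensity weights do not disturb either the i.i.d.\ structure or the identification of the limit. The trade-off is that your argument leans on the specifically quadratic form of the least-squares criterion, so it would not transfer verbatim to the non-linear regression extension in Appendix~\ref{app:nonlinear-regression}, where the paper's generic ULLN route (with an honest domination hypothesis) is the natural one.
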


\begin{proof}[Proof of Lemma~\ref{lem: linear-consistent-estimator}]
Let $m(\theta;x,y) = (y-x^\top \theta)^2$, and let $w_i = 1\{ i \in S \}$ for simplicity.  The following holds:
\begin{enumerate}
\item First, we note that $\theta^*$ is the unique parameter that minimizes $\E [(y_i-x_i^\top \theta)^2]$; indeed, take any $\theta \neq \theta^*$, we have that 
\begin{align*}
\E [(y_i-\theta^\top x_i)^2] &= \E \left[ \left(y_i-x_i^\top \theta^{*} + x_i^\top (\theta^*-\theta))^2 \right) \right] 
\\&= \E \left[ \left(y_i-x_i^\top \theta^{*} \right)^2 \right] + \E \left[ \left( x_i^\top (\theta^*-\theta) \right)^2  \right] + 2 \E \left[\epsilon_i  (\theta^*-\theta)^\top x_i \right] 
\end{align*}
As $x$ and $\varepsilon$ are independent, $\varepsilon$ has mean $0$, this simplifies to
\begin{align*}
\E [(y_i-\theta^\top x_i)^2] 
&= \E \left[ \left (y_i-\theta^{*\top} x_i\right)^2 \right] + (\theta^*-\theta)^\top \E \left[  x_i x_i^\top \right] (\theta^*-\theta) + 2 (\theta^*-\theta)^\top \E \left[\epsilon_i x_i \right] 
\\&= \E \left[ \left (y_i-\theta^{*\top} x_i \right)^2 \right] + (\theta^*-\theta)^\top \E \left[  x_i x_i^\top \right] (\theta^*-\theta) 
\\&>  \E \left[ \left (y_i-\theta^{*\top} x_i \right)^2 \right]
\end{align*}
where the last step follows from  $\E[ x_i x_i^\top]$ being positive-definite by Assumption~\ref{as: x-distribution}. 
\item By definition, $\Theta$ is compact.
\item $m(\theta;x,y)$ is continuous in $\theta$, and so is its expectation.
\item $m(.;.)$ is also bounded (lower-bounded by $0$, and upper-bounded by either $L^2$ or $U^2$), implying that $\theta \to \frac{w_i }{A(c_i)} m(\theta;x_i,y_i)$ is continuous and bounded. Hence, by the uniform law of large number, remembering that $\frac{w_i }{A(c_i)} m(\theta;x_i,y_i) $ are i.i.d,  
$$\sup_{\theta \in \Theta} \left| \frac{1}{n} \sum_{i=1}^n \frac{w_i }{A(c_i)} m(\theta;x_i,y_i) - \E\left[\frac{w_i }{A(c_i)} m(\theta;x_i,y_i) \right] \right| \to 0.$$
Finally, noting that conditional on $c_i$, $m(\theta;x_i,y_i)$  and $\frac{w_i }{A(c_i)}$ are independent, we have:
\begin{align*}
\E\left[\frac{w_i }{A(c_i)} m(\theta;x_i,y_i) \right] = \E \left[ \Econd{\frac{w_i}{A(c_i)}}{c_i} \Econd{m(\theta;x_i,y_i)}{c_i} \right] 
= \E\left[ m(\theta;x_i,y_i) \right] 
\end{align*}
using $\Econd{\frac{w_i}{A(c_i)}}{c_i} = 1$.
\end{enumerate}
Therefore, all of the conditions of Theorem 2.1 of~\cite{McFadden86} are satisfied, which is enough to prove the result. 
\end{proof}

Similarly to the moment estimation problem in Section~\ref{sec:moment-estimation}, the goal of the analyst is to minimize the worst-case (over the distribution of data and the correlation between $c_i$'s and $\epsilon_i$'s) asymptotic mean-squared error of the estimator $\hat{\theta}_S$. Here ``asymptotic'' means the worst-case error as $\hat{\theta_S}$ approaches the true parameter $\theta^*$.  The following theorem characterizes the asymptotic covariance matrix of $\hat{\theta}_S$.  (In fact, it fully characterizes the asymptotic distribution of $\hat{\theta}_S$.)
\begin{lemma}\label{lem: linear-asymptotic-variance}
Under Assumption~\ref{as: x-distribution}, for any allocation rule $A > 0$ that does not depend on $n$, the asymptotic distribution of $\hat{\theta}_S$ is given by 
\begin{align*}
\sqrt{n} (\hat{\theta}_S - \theta^*) \xrightarrow[]{d} \mathcal{N}\left(0,  \E[x_i x_i^\top]^{- 1}  \E \left[ \epsilon_i^2 \frac{1 \{i \in S \} }{ A^2(c_i) } \right]   \right)
\end{align*}
where $d$ denotes convergence in distribution and where randomness in the expectations is taken on the costs $c_i$, the set of elicited data points $S$, the features of the data $x_i$, and the noise $\epsilon_i$.
\end{lemma}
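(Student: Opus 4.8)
The plan is to run the textbook sandwich (Fisher-information) argument for a least-squares M-estimator, the only non-routine ingredient being careful bookkeeping of the independence structure between the mechanism's sampling indicator $w_i := 1\{i\in S\}$, the features $x_i$, and the residuals $\epsilon_i$. First I would reduce to the normal equations: the objective $\theta\mapsto \sum_i \frac{w_i}{A(c_i)}(y_i - x_i^\top\theta)^2$ is convex and smooth, and since $\theta^*$ is in the interior of $\Theta$ while $\hat\theta_S$ is consistent by Lemma~\ref{lem: linear-consistent-estimator}, with probability tending to one $\hat\theta_S$ is an interior stationary point, hence satisfies $\sum_i \frac{w_i}{A(c_i)} x_i(y_i - x_i^\top\hat\theta_S)=0$. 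Substituting $y_i = x_i^\top\theta^* + \epsilon_i$ and rearranging (the empirical Gram matrix $\widehat M_n := \frac1n\sum_i \frac{w_i}{A(c_i)}x_ix_i^\top$ being invertible with probability tending to one, as shown in the next step, so that $\hat\theta_S$ is then the unique minimizer and equals the closed-form least-squares solution) yields
\begin{equation*}
\sqrt n\,(\hat\theta_S - \theta^*) \;=\; \widehat M_n^{-1}\cdot \frac{1}{\sqrt n}\sum_i \frac{w_i}{A(c_i)}\,x_i\epsilon_i .
\end{equation*}

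Next I would treat the two factors separately; note that the summands are i.i.d.\ across $i$ because the agents' data are i.i.d.\ and $A$ is a fixed function of the reported (hence, by truthfulness, true) cost, with independent coin flips. For $\widehat M_n$: conditioning on $c_i$ and using $\Econd{w_i}{c_i}=A(c_i)$ together with $x_i\perp c_i$ gives $\E[\frac{w_i}{A(c_i)}x_ix_i^\top]=\E[x_ix_i^\top]$, finite and invertible by Assumption~\ref{as: x-distribution}, so by the law of large numbers and the continuous mapping theorem $\widehat M_n^{-1}\to \E[x_ix_i^\top]^{-1}$ in probability. For the score $g_i := \frac{w_i}{A(c_i)}x_i\epsilon_i$: conditioning on $(x_i,c_i,\epsilon_i)$ and using $\Econd{w_i}{x_i,c_i,\epsilon_i}=A(c_i)$ gives $\E[g_i]=\E[x_i\epsilon_i]=\E[x_i]\E[\epsilon_i]=0$ (using $x_i\perp\epsilon_i$ and $\E[\epsilon_i]=0$), and using $w_i^2=w_i$ the same conditioning gives
\begin{equation*}
\E[g_ig_i^\top] \;=\; \E\!\left[\frac{\epsilon_i^2}{A(c_i)}\,x_ix_i^\top\right] \;=\; \E[x_ix_i^\top]\cdot\E\!\left[\frac{\epsilon_i^2}{A(c_i)}\right],
\end{equation*}
the last equality again by $x_i\perp(c_i,\epsilon_i)$; this is finite since $\epsilon_i$ is bounded in $[L,U]$, $\E[x_ix_i^\top]$ is finite, and $\E[1/A(c_i)]<\infty$ (which holds whenever $A$ is bounded away from $0$ on the support of costs). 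The central limit theorem then gives $\frac{1}{\sqrt n}\sum_i g_i \xrightarrow{d} \mathcal N\big(0,\ \E[\epsilon_i^2/A(c_i)]\,\E[x_ix_i^\top]\big)$.

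Finally, Slutsky's theorem combines the two: the product converges in distribution to $\E[x_ix_i^\top]^{-1}$ applied to that Gaussian, and since $\E[x_ix_i^\top]^{-1}$ is symmetric the limiting covariance is $\E[x_ix_i^\top]^{-1}\,\E[\epsilon_i^2/A(c_i)]\,\E[x_ix_i^\top]\,\E[x_ix_i^\top]^{-1} = \E[\epsilon_i^2/A(c_i)]\,\E[x_ix_i^\top]^{-1}$, which is the claimed limit. One last conditioning on $(c_i,\epsilon_i)$ rewrites the scalar as $\E[\epsilon_i^2/A(c_i)] = \E[\epsilon_i^2\,w_i/A^2(c_i)] = \E[\epsilon_i^2\,1\{i\in S\}/A^2(c_i)]$, matching the stated form.

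The main obstacle is not any single hard estimate but the careful use of the independence structure: $w_i$ depends only on the true cost $c_i$ and the mechanism's internal randomness, $x_i$ is independent of $(c_i,\epsilon_i)$, and $\epsilon_i$ has unconditional mean $0$ but possibly $\E[\epsilon_i\mid c_i]\neq 0$. Getting the score to have mean exactly $0$ and its covariance to factor cleanly as $\E[x_ix_i^\top]\cdot\E[\epsilon_i^2/A(c_i)]$ relies on iterating expectations in the right order, and the finiteness of the relevant second moments is where boundedness of $\epsilon_i$, Assumption~\ref{as: x-distribution}, and positivity of $A$ enter. (One can alternatively invoke a generic asymptotic-normality theorem for extremum estimators, in the spirit of the result from~\cite{McFadden86} used for Lemma~\ref{lem: linear-consistent-estimator}, in which case the two steps above are precisely the verification of its hypotheses.)
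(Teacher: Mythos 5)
Your proposal is correct, and all the key computations (the score having mean zero via iterated expectations, the factorization $\E[g_ig_i^\top]=\E[x_ix_i^\top]\,\E[\epsilon_i^2/A(c_i)]$, and the final rewriting of $\E[\epsilon_i^2/A(c_i)]$ as $\E[\epsilon_i^2\,1\{i\in S\}/A^2(c_i)]$) match the paper's. The route differs in presentation: the paper treats $\hat\theta_S$ as a generic extremum estimator and verifies the five hypotheses of Theorem 3.1 of~\cite{McFadden86} (interiority, smoothness, CLT for the gradient, uniform LLN for the Hessian, invertibility of the expected Hessian), then simplifies the resulting sandwich covariance; you instead exploit linearity to get the \emph{exact} identity $\sqrt n(\hat\theta_S-\theta^*)=\widehat M_n^{-1}\cdot n^{-1/2}\sum_i g_i$ from the normal equations and finish with the LLN, CLT, and Slutsky directly. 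Your version is more elementary and self-contained for the linear case, since no Taylor expansion or uniform convergence of the Hessian is needed ($\widehat M_n$ is constant in $\theta$, so an ordinary LLN suffices); the paper's black-box route has the advantage of transferring verbatim to the nonlinear regression extension in Appendix~\ref{app:nonlinear-regression}, where the exact normal-equation identity is unavailable. One cosmetic point: your caveat that $\E[1/A(c_i)]<\infty$ requires $A$ bounded away from zero is automatic here, since the cost support is finite and $A>0$.
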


\begin{proof}[Proof of Lemma~\ref{lem: linear-asymptotic-variance}]
For simplicity, let $w_i = 1\{ i \in S\}$ and note that the $w_i$'s are i.i.d. Let $m(\theta;x_i,y_i) = (y_i-x_i^\top \theta)^2$. First we remark that $\nabla_\theta m(\theta;x_i,y_i) \cdot \frac{w_i}{A(c_i)}  = 2 \frac{w_i}{A(c_i)} x_i(x_i^\top \theta-y_i)$  and $\nabla^2_{\theta \theta} m(\theta;x_i,y_i) \cdot \frac{w_i}{A(c_i)}  = 2 \frac{w_i}{A(c_i)} x_i x_i^\top$. We then note the following:
\begin{enumerate}
\item $\theta^*$ is in the interior of $\Theta$.
\item $\theta \to m(\theta;x_i,y_i) \cdot \frac{w_i}{A(c_i)} $ is twice continuously differentiable for all $x_i,y_i,c_i,w_i$.
\item $\sqrt{n} \left( \frac{1}{n}  \sum_{i=1}^n \nabla_\theta~m(\theta^*;x_i,y_i) \cdot \frac{w_i}{A(c_i)}  \right) \to \mathcal{N}\left(0,4 \E \left[\frac{w_i}{A^2(c_i)} x_i x_i^\top (x_i^\top \theta^* - y_i)^2 \right] \right) $. This follows directly from applying the multivariate central limit theorem, noting that 
$$
\E \left[ \nabla_\theta m(\theta^*;x_i,y_i) \cdot \frac{w_i}{A(c_i)}  \right] = \E \left[ \Econd{2 x_i \epsilon_i}{c_i} \cdot \Econd{\frac{w_i}{A(c_i)}}{ c_i} \right]  =  \E\left[2 x_i \epsilon_i \right] = 0 
$$
where the first step follows from conditional indepence on $c$ of $x,\epsilon$ with $A(c), S$, the second step from $ \Econd{\frac{w_i}{A(c_i)}}{ c_i } = 1$, and the last equality follows from the fact that $x$ and $\epsilon$ are independent and $\E[\epsilon_i ] = 0$.
\item $\sup_{\theta \in \Theta} \left \Vert \E \left[ \nabla_{\theta \theta}^2 m(\theta;x_i,y_i) \cdot \frac{w_i}{A(c_i)}  \right] - \frac{1}{n} \sum_i \nabla_{\theta \theta}^2 m(\theta;x_i,y_i) \cdot \frac{w_i}{A(c_i)}  \right \Vert \to 0$, applying the uniform law of large numbers as $\nabla_{\theta \theta}^2 m(\theta;x_i,y_i) \cdot \frac{w_i}{A(c_i)}  = 2 \frac{w_i}{A(c_i)} x_i x_i^\top$ is i) continuous in $\theta$, and ii) constant in $\theta$, thus bounded coordinate-by-coordinate by $2 \frac{w_i}{A(c_i)} x_i x_i^\top$ that is independent of $\theta$ and has finite expectation $2\E[x_i x_i^\top]$.
\item $\E \left[ \nabla_{\theta \theta}^2 m(\theta;x_i,y_i) \cdot \frac{w_i}{A(c_i)}\right] = 2\E[x_i x_i^\top]$ is invertible as it is positive-definite. 
\end{enumerate}
Therefore the sufficient conditions i)-v) in Theorem 3.1 of~\cite{McFadden86} hold, proving that the asymptotic distribution is normal with mean $0$ and variance 
\[
\E[2x_i x_i^\top]^{- 1}  \E \left[4 \frac{w_i }{ A^2(c_i) } (x_i^\top \theta - y_i)^2 x_i x_i^\top \right]\E[2x_i x_i^\top]^{- 1}.
\] 
To conclude the proof, we remark that by independence of $x_i$ with $c_i$ and $\varepsilon_i$,
\begin{align*}
 \E \left[\frac{w_i }{ A^2(c_i) } (x_i^\top \theta - y_i)^2 x_i x_i^\top \right] = \E [x_i x_i^\top]   \E \left[ \epsilon_i^2 \frac{w_i }{ A^2(c_i) } \right]
\end{align*}
\end{proof}
Lemma~\ref{lem: linear-asymptotic-variance} implies that the worst-case asymptotic mean-squared error, under a budget constraint, is given by the worst-case trace of the variance matrix.  That is,
\begin{equation}
\begin{aligned}
\cR^*(\cF, A) \triangleq  \sup_{\cX}~&\sup_{\cD_\epsilon}~ \E \left[ \epsilon_i^2 \frac{1 \{i \in S \} }{ A^2(c_i) } \right] \cdot \sum_{j=1}^d \E[x_i x_i^\top]_{jj}^{- 1}   
\\&\text{s.t.}~\E[\epsilon_i] = 0
\end{aligned}
\end{equation}
where recall that $\cD_\epsilon$ is the marginal distribution over $\epsilon$ and $\cX$ the distribution over $x$. Importantly, this can be rewritten as
\begin{equation}
\begin{aligned}
\cR^*(\cF, A) \triangleq  \left( \sup_{\cX} \sum_{j=1}^d \E[x_i x_i^\top]_{jj}^{- 1} \right) \cdot~&\sup_{\cD_\epsilon}~\E \left[ \epsilon_i^2 \frac{1 \{i \in S \} }{ A^2(c_i) } \right]
\\&\text{s.t.}~\E[\epsilon_i] = 0   
\end{aligned}
\end{equation}
Therefore, the analyst's decision solely depend on the worst-case correlation between costs $c_i$ and noise $\epsilon_i$, and not on the worst-case distribution $\cX$. In turn, the analyst's allocation is completely independent of and robust in $\cX$.

\subsection{Characterizing the Optimal Allocation Rule for Regression}

As in Section~\ref{sec:moment-estimation}, we assume costs are drawn from a discrete set, say $\cC = \{c_1, \ldots, c_{\ubc} \}$.  We will then write $A_t$ for an allocation rule conditional on the cost being $c_t$, and $\pi_t$ the probability of the cost of an agent being $c_t$.  We will assume that $\bar{B} < \sum_{t=1}^{\ubc} \pi_t c_t$, meaning that it is not feasible to accept all data points, since otherwise it is trivially optimal to set $A_t = 1$ for all $t$. 

The following lemma describes the optimization problem faced by an analyst wanting to design an optimal survey mechanism.  Recall that residual values lie in the interval $[L,H]$.

\begin{lemma}[Optimization Problem for Parameter Estimation]\label{lem: regression-program}
The optimization program for the analyst is given by:
\begin{equation}\label{eq: regression-program}
\begin{aligned}
\inf_{A \in [0,1]^{\ubc}}~&\sup_{\duals \in [0,1]^{\ubc}} \sum_{t=1}^l \frac{\pi_t}{A_t} \left((1-\duals_t) \cdot  L^2  + \duals_t \cdot U^2 \right)\\
\mathrm{s.t.}~~~&\sum_{t=1}^{\ubc} \pi_t \left( (1-\duals_t) \cdot L+ \duals_t \cdot U  \right)= 0\\
~& \sum_{t=1}^{\ubc} \pi_t c_t A_t \leq \bar{B}\\
~& A \text{ is monotone non-increasing }\\
\end{aligned}
\end{equation}
\end{lemma}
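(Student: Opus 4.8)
The plan is to derive the program in \eqref{eq: regression-program} by reducing the worst-case asymptotic risk $\cR^*(\cF,A)$ characterized just above (via Lemma~\ref{lem: linear-asymptotic-variance}) to an optimization over the conditional noise distribution, and then identifying the worst-case noise distribution in closed form. Since the factor $\sup_{\cX}\sum_{j=1}^d \E[x_ix_i^\top]^{-1}_{jj}$ is a positive constant independent of $A$ and of $\cD_\epsilon$, it only scales the objective and can be dropped (or absorbed) without changing the optimal $A$; so it suffices to optimize $\sup_{\cD_\epsilon}\E[\epsilon_i^2\,\tfrac{1\{i\in S\}}{A^2(c_i)}]$ subject to $\E[\epsilon_i]=0$, over allocation rules $A$ that are monotone non-increasing and respect the budget constraint $\sum_t\pi_t c_t A_t\le\bar B$. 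The budget and monotonicity constraints carry over verbatim from the reduction in Section~\ref{sec: reduction} (virtual costs replacing prices), so the only real content is rewriting the adversary's objective.

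The first key step is to condition on the cost: using $\Econd{\tfrac{1\{i\in S\}}{A^2(c_i)}}{c_i}=\tfrac{1}{A(c_i)}$ and the conditional independence of $S$ from $\epsilon_i$ given $c_i$, we get $\E[\epsilon_i^2\,\tfrac{1\{i\in S\}}{A^2(c_i)}]=\sum_{t=1}^{\ubc}\tfrac{\pi_t}{A_t}\,\Econd{\epsilon_i^2}{c_t}$, and similarly the constraint $\E[\epsilon_i]=0$ becomes $\sum_t\pi_t\,\Econd{\epsilon_i}{c_t}=0$. The second step is a mean-preserving-spread / extreme-point argument, entirely analogous to the one in the proof of Lemma~\ref{lem:ht-variance}: for fixed conditional means $\mu_t\triangleq\Econd{\epsilon_i}{c_t}\in[L,U]$, the adversary wants $\Econd{\epsilon_i^2}{c_t}$ as large as possible, and since the support of $\epsilon$ is confined to $[L,U]$, the conditional second moment is maximized by placing all conditional mass on the two endpoints $\{L,U\}$. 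Writing $\duals_t$ for the conditional probability that $\epsilon_i=U$ given $c_t$, this gives $\Econd{\epsilon_i}{c_t}=(1-\duals_t)L+\duals_t U$ and $\Econd{\epsilon_i^2}{c_t}=(1-\duals_t)L^2+\duals_t U^2$, turning the adversary's problem into a maximization over $\duals\in[0,1]^{\ubc}$ with the single linear constraint $\sum_t\pi_t((1-\duals_t)L+\duals_t U)=0$, which is exactly the inner problem in \eqref{eq: regression-program}.

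Two small points need care. First, the constraint $\E[\epsilon_i]=0$ is a global (unconditional) constraint on the marginal $\cD_\epsilon$, whereas the $\duals_t$ encode \emph{conditional} distributions; I need to argue that any joint $(c_i,\epsilon_i)$ with $\E[\epsilon_i]=0$ can, without decreasing the objective, be replaced by one of the two-point-conditional form above — this is immediate because the reduction $\Econd{\epsilon_i^2}{c_t}\mapsto(1-\duals_t)L^2+\duals_t U^2$ only ever increases each conditional second moment while preserving each conditional mean $\mu_t$, hence preserving $\sum_t\pi_t\mu_t=0$. Second, I should confirm that the two-point distribution is actually feasible, i.e.\ that there exist $\duals_t\in[0,1]$ with $\sum_t\pi_t((1-\duals_t)L+\duals_t U)=0$: taking all $\duals_t$ equal to $-L/(U-L)\in[0,1]$ (using $L\le0\le U$) works, so the feasible set for the adversary is nonempty. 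The main obstacle — though it is conceptual rather than technical — is making the extreme-point argument fully rigorous in the presence of the \emph{coupled} mean-zero constraint across cost types: unlike in Lemma~\ref{lem:ht-variance} where each conditional distribution could be perturbed independently, here one must verify that the mean-preserving spread at each $c_t$ leaves the aggregate constraint intact, which it does precisely because a mean-preserving spread fixes the conditional mean. The rest is bookkeeping: the objective $\sum_t\tfrac{\pi_t}{A_t}\Econd{\epsilon_i^2}{c_t}$ becomes $\sum_t\tfrac{\pi_t}{A_t}((1-\duals_t)L^2+\duals_t U^2)$, and appending the budget and monotonicity constraints on $A$ recovers \eqref{eq: regression-program} verbatim.
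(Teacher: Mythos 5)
Your proposal is correct and follows essentially the same route as the paper's proof: factor out the $\cX$-dependent constant, condition on the cost to write the objective as $\sum_t \frac{\pi_t}{A_t}\E[\epsilon_i^2\mid c_t]$, and observe that the worst-case conditional distribution is the two-point distribution on $\{L,U\}$ parameterized by $\duals_t=\Pr[\epsilon_i=U\mid c_t]$. Your additional care about the mean-preserving spread respecting the coupled constraint $\E[\epsilon_i]=0$ (because it fixes each conditional mean) and your feasibility check are details the paper leaves implicit, but the argument is the same.
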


\begin{proof}[Proof of Lemma~\ref{lem: regression-program}]
First we note that
\begin{equation}
\begin{aligned}
\cR^*(\cF, A) \triangleq  \left( \sup_{\cX} \sum_{j=1}^d \E[x_i x_i^\top]_{jj}^{- 1} \right) \cdot~&\sup_{\cD_\epsilon}~\E \left[ \epsilon_i^2 \frac{1 \{i \in S \} }{ A^2(c_i) } \right]
\\&\text{s.t.}~\E[\epsilon_i] = 0   
\end{aligned}
\end{equation}
We can therefore renormalize the worst-case variance by $\sup_{\cX} \sum_{i=1}^d \E[x_i x_i^\top]_{ii}^{- 1}$, as it does not depend on any other parameter of the problem. The analyst's objective is now given by 
\begin{equation}
\begin{aligned}
&\sup_{\cD_\epsilon}~\sum_{t=1}^{\ubc} \pi_t \frac{ \E[\epsilon_i^2|c_t]}{A_t}
\\&\text{s.t.} \E[\epsilon_i] = 0   
\end{aligned}
\end{equation}
The worst case distribution is reached when $\epsilon_i|c_t$ is binomial between $L$ and $U$ (and such a distribution is feasible for $\epsilon_i|c_t$), therefore letting $\duals_t = P[\epsilon_i = U \mid c_t]$, we obtain the lemma. 
\end{proof}
We can now characterize the form of the optimal survey mechanism.  For simplicity, we will assume that $U^2 \geq L^2$.  This is without loss of generality, since the optimization program is symmetric in $L$ and $U$; if $L^2 > U^2$, the analyst can set $q_t = 1-q_t$, $L=U$ and $U=L$ to obtain Program~\eqref{eq: regression-program} with $U^2 > L^2$. 

\begin{theorem}\label{thm:main-regression}  
Under the assumptions described above, an optimal allocation rule $A$ has the form
\begin{enumerate}
\item $A_t = \min\left(1,\alpha \frac{|L|}{\sqrt{c_t}}\right)$ for $t < t^-$
\item $A_t = \bar{A}$ for all $t \in \{ t^-, \ldots, t^+\}$
\item $A_t = \min\left(1,\alpha \frac{U}{\sqrt{c_t}}\right)$ for $t > t^+$
\end{enumerate}
for $\bar{A}$ and $\alpha$ positive constants that do not depend on $n$, and $t^-$ and $t^+$ integers with $t^- \leq t^+$. Further, $\bar{A}$ and $\alpha$ can be computed efficiently given knowledge of $t^-,t^+$.
\end{theorem}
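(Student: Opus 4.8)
The plan is to mirror the zero-sum game / KKT strategy used for Theorem~\ref{thm:main-moment-estimation}, but now applied to the optimization program in Lemma~\ref{lem: regression-program}. First I would observe that the inner supremum over $\duals$ is a linear program in $\duals$ subject to the single linear equality constraint $\sum_t \pi_t((1-\duals_t)L + \duals_t U) = 0$, i.e.\ $\sum_t \pi_t \duals_t = \frac{-L}{U-L}$ (a fixed target, call it $\bar q$, which lies in $(0,1)$ since $L \le 0 \le U$). So for a fixed monotone $A$, the adversary is maximizing the linear objective $\sum_t \frac{\pi_t}{A_t}\big(L^2 + \duals_t(U^2-L^2)\big)$ over the simplex-like set $\{\duals \in [0,1]^{\ubc} : \sum_t \pi_t \duals_t = \bar q\}$. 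By LP duality this equals $\sum_t \frac{\pi_t}{A_t}L^2 + (U^2-L^2)\cdot\big(\text{value of a fractional knapsack}\big)$: the adversary puts $\duals_t$ as large as possible on the coordinates $t$ where $\frac{1}{A_t}$ is largest — i.e.\ the \emph{largest-cost} coordinates, since $A$ is non-increasing — until the mass budget $\bar q$ is exhausted. This gives a threshold structure for $\duals^*$: $\duals_t = 1$ for $t$ above some index, $\duals_t = 0$ below it, and one fractional coordinate. I would record the Lagrange multiplier $\mu$ of the equality constraint (the "water level" of the knapsack) for use in the joint optimality conditions.

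Next I would write the Lagrangian of the full min-max problem, dualizing the budget constraint with multiplier $\lambda \ge 0$ and the monotonicity constraints $A_{t} \ge A_{t+1}$ with multipliers $\nu_t \ge 0$, and invoke strong duality / a minimax theorem (the objective is convex in $A$ on the positive orthant — each term $1/A_t$ is convex — and linear in $\duals$, over convex compact-ish sets, so Sion's minimax theorem applies, as in the moment-estimation proof). Then I would write the stationarity condition in each $A_t$. Ignoring for a moment the monotonicity multipliers and the clipping at $1$, the derivative of $\frac{\pi_t}{A_t}\big(L^2 + \duals_t^*(U^2-L^2)\big) + \lambda \pi_t c_t A_t$ in $A_t$ vanishes at
\[
A_t = \sqrt{\frac{L^2 + \duals_t^*(U^2-L^2)}{\lambda c_t}}.
\]
Now plug in the three possible values of $\duals_t^*$ coming from the knapsack structure: when $\duals_t^* = 0$ (low-cost region, where $A_t$ is large) this is $\frac{|L|}{\sqrt{\lambda c_t}}$; when $\duals_t^* = 1$ (high-cost region) this is $\frac{U}{\sqrt{\lambda c_t}}$; and the one fractional coordinate together with any region where the monotonicity constraint is tight (a "pooling" region forced by the $\nu_t$'s being positive) yields the flat value $\bar A$. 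Setting $\alpha = 1/\sqrt{\lambda}$ and clipping at $1$ gives exactly the claimed three-piece form $A_t = \min(1, \alpha|L|/\sqrt{c_t})$ for $t < t^-$, $A_t = \bar A$ for $t^- \le t \le t^+$, $A_t = \min(1, \alpha U/\sqrt{c_t})$ for $t > t^+$. I would then verify consistency: since $U^2 \ge L^2$, the unconstrained value $\frac{U}{\sqrt{\lambda c_t}}$ at high costs is the \emph{larger} of the two pieces per unit $\sqrt{c_t}$, yet it is multiplied against larger $c_t$, so one must check the pooling region genuinely sits between the two tails and that the resulting $A$ is globally non-increasing — this is where the ordering $t^- \le t^+$ and the exact placement of the knapsack's fractional coordinate inside $[t^-, t^+]$ get pinned down, by matching the $\duals^*$-threshold with the $\nu$-support.

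The main obstacle I anticipate is the \textbf{simultaneous consistency of the two threshold structures} — the adversary's knapsack threshold on $\duals^*$ and the analyst's pooling threshold $[t^-,t^+]$ on $A$ — together with the interaction of the clipping constraints $A_t \le 1$ at the low-cost end. One has to argue that at equilibrium the adversary's fractional/switch coordinate lies exactly within the analyst's flat region (otherwise monotonicity of $A$ would be violated, or the stationarity equation would not be satisfiable), and that the low-cost clipping boundary $t^-$ is determined by where $\alpha|L|/\sqrt{c_t}$ crosses $1$ versus where pooling begins — these could coincide or be nested, requiring a short case analysis. I would handle this by the same device as in Appendix~\ref{app:proof-of-moment-estimation}: guess the qualitative form (the three pieces plus the clip), write down all KKT conditions, solve for $\bar A$, $\alpha$, $\mu$, $\lambda$ in closed form as functions of the integer parameters $t^-, t^+$ (and the knapsack index), then show existence and essential uniqueness of integers making every complementary-slackness and monotonicity inequality hold; finally appeal to convexity to conclude the KKT point is the global optimum. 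The "computed efficiently given $t^-, t^+$" clause then follows because, with those integers fixed, $\bar A$ and $\alpha$ are the solution of two equations (the binding budget constraint and the continuity/stationarity matching at the pooling boundary), exactly as in the moment case.
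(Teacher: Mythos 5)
Your proposal follows essentially the same route as the paper: reduce the adversary's inner supremum to a fractional knapsack whose threshold structure is forced by the monotonicity of $A$, then apply Lagrangian/KKT conditions to the resulting minimization to obtain the three-piece form $\min(1,\alpha|L|/\sqrt{c_t})$ / $\bar{A}$ / $\min(1,\alpha U/\sqrt{c_t})$ with the pooling region containing the knapsack's fractional coordinate. The one simplification you overlook is that the knapsack solution depends on $A$ only through the ordering of the $1/A_t$, which is fixed by monotonicity, so the adversary's best response $q^*$ and its switch index $t^*$ are determined in advance by the cost distribution alone; consequently the ``simultaneous consistency of the two threshold structures'' you flag as the main obstacle dissolves --- one plugs the fixed $q^*$ into the objective and solves a single convex minimization (no saddle-point or Sion argument needed), with $t^-\leq t^*\leq t^+$ emerging from the KKT conditions exactly as you describe.
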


We remark that the allocation rule that we designed is strictly positive and independent of $n$ (as the optimization program itself does not depend on $n$), so Lemmas~\ref{lem: linear-consistent-estimator} and~\ref{lem: linear-asymptotic-variance} apply. Theorem~\ref{thm:main-regression} immediately implies that an optimal allocation rule can be obtained by simply searching over the space of parameters $(t^-,t^+)$, which can be done in at most $\ubc^2$ steps. For each pairs of parameters $(t^-,t^+)$, $A$ can be computed efficiently as stated in the Theorem. Then the analyst only needs to pick the allocation rule that minimizes the objective value among the obtained allocation rules that are feasible for Program~\eqref{eq: regression-program}. Further, we remark that the solution for the linear regression case exhibits a structure that is similar to the structure of the optimal allocation rule for moment estimation (see Theorem~\ref{thm:main-moment-estimation}): it exhibits a pooling region in which all cost types are treated the same way, and changes in the inverse of the square root of the cost outside said pooling region. However, we note that we may now choose to pool agents together in an intermediate range of costs, instead of pooling together agents whose costs are below a given threshold. 

\begin{proof}[Proof sketch]
We first compute the best response $\duals^*$ of the adversary; we note that this best response is in fact the solution to a knapsack problem that is independent of the value taken by the allocation rule $A$. We can therefore plug the adversary's best response into the optimization problem, and reduce the minimax problem above in a simple minimization problem on $A$. We then characterize the solution as a function of the parameters $(t^-,t^+) \in [\ubc]^2$ through KKT conditions. The full proof is given in Appendix~\ref{app: regression}.
\end{proof}

\paragraph{Non-linear regression:} We further show in the Appendix~\ref{app:nonlinear-regression} that our results extend to non-linear regression, i.e. when $y_i$ is generated by a process of the more general form 
$$y_i = f(\theta^*,x_i) + \epsilon_i,$$
 under a few additional assumptions on the distribution of $x$ and on the regression function $f$.

\section*{Acknowledgments}

Yiling Chen was partially supported by NSF grant CCF-1718549. Juba Ziani was supported by NSF grants CNS-1331343 and CNS-1518941, and the US-Israel Binational Science Foundation grant 2012348. Part of the work was done while Yiling Chen and Juba Ziani were at Microsoft Research New England.

\bibliographystyle{abbrv}
\bibliography{bibliography}

\newpage 

\begin{appendix}

\section{Extension: Multi-dimensional non-linear regression}\label{app:nonlinear-regression}
In this section, we consider a $d$-dimensional non-linear regression setting. As in the linear regression setting, we let $d$ be the dimension of the parameter and $\theta^* \in \text{int}\left( \Theta \right)$ be the parameter to estimate, where $\Theta$ is a compact subset of $\reals^d$ and the operator int(.) denotes the interior of a set. Each agent $i \in [n]$ has a cost $c_i$ taken i.i.d from a known distribution $\cF$ on discrete support $\cC$, and draws a pair of data points $(x_i,y_i)$ in the following manner: first, the agent draws $x_i \in \reals^d$ according to an unknown distribution, such that $x_i$ is independent of the cost $c_i$; the draws are i.i.d. among agents. Then, $y_i \in \reals$ is generated according to the following process:
\begin{equation}
    y_i = f(\theta^*,x_i) + \epsilon_i
\end{equation}
with $\epsilon_i \in [L, U]$ is a bounded random variable with known mean $\E[\epsilon_i] =  0$ (hence $L \leq 0 \leq U$) and is correlated with the agent's cost $c_i$ for revealing his data, but independent of $x_i$. The $\epsilon_i$'s are also drawn i.i.d among agents. 

As before, the analyst wants to design a mechanism to buy from the agents, then compute an estimate $\hat{\theta}_S$ of $\theta$ where $S$ is the set of elicited agents, while not paying each more than the total budget $\bar{B}$ per agent in expectation. Let $S$ be the set of elicited data points. The analyst's estimate is the value $\hat{\theta}_S$ that minimizes the Horvitz-Thompson estimator of the moment, i.e.
\begin{equation}
 \hat{\theta}_S = \argmin_{\theta \in \reals^d} \sum_i \frac{1 \{i \in S\}}{A(c_i)} \left(y_i-f(\theta,x_i) \right)^2
\end{equation}
We make a few assumptions on function $f$. First, we require that it is twice continuously differentiable:
\begin{assumption}[Twice continuous differentiability of $f$]\label{as: f-continuous}
For any $x \in \reals^d$, $\theta \to f(\theta,x)$ is twice continuously differentiable.
\end{assumption}

Further, we want the function $f$ and its derivative to exhibit a non-degeneracy property; i.e., $f$ cannot have the same distribution for different values of $\theta$:
\begin{assumption}[Non-degeneracy]\label{as: non-degenerate}
For any $\theta \neq \theta^*$, $\E_x\left[  \left( f(\theta^*,x)  - f(\theta,x) \right)^2 \right] > 0$. Equivalently, $f(\theta^*,x)  \neq f(\theta,x)$ with positive probability, where the randomness is taken on the distribution of $x$. 
\end{assumption}

We require that the covariance matrix of the gradient of $f$ in $\theta$ is invertible at $\theta^*$, which is a generalization of the covariance matrix of features being invertible for the linear regression case:
\begin{assumption}\label{as: positive-definite}
$\E \left[\nabla_\theta f(\theta^*,x_i) \nabla_\theta f(\theta^*,x_i)^\top \right]$ is positive-definite, therefore invertible.
\end{assumption}

Finally, we require that the Hessian of $\left(y_i-f(\theta,x_i) \right)^2$ in $\theta$ is bounded by a function with finite expectation in a neighborhood of $\theta^*$, which generalizes the condition on the covariance matrix of features being finite in the linear regression case:
\begin{assumption}\label{as: bounded-second-derivative}
There exists a function $g$ that does not depend on $\theta$ and has finite expectation, such that
$$\Vert \nabla^2_{\theta \theta} \left(y_i-f(\theta,x_i) \right)^2 \Vert  \leq g(x_i,y_i,c_i,S)~~\forall \theta \in \mathcal{N}$$ 
where $\mathcal{N}$ is a compact neighborhood of $\theta^*$.
\end{assumption}

We proceed to show that $\hat{\theta}_S$ is a consistent estimator of $\theta^*$:
\begin{lemma}\label{lem: nonlinear_consistent_estimator}
Under Assumptions~\ref{as: f-continuous} and~\ref{as: non-degenerate}, for any allocation rule $A > 0$ that does not depend on $n$, $\hat{\theta}_S$ is a consistent estimator of $\theta^*$.
\end{lemma}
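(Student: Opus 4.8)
The plan is to mirror the structure of the proof of Lemma~\ref{lem: linear-consistent-estimator}, invoking the consistency theorem for extremum estimators (Theorem 2.1 of~\cite{McFadden86}). The only substantive change is that the quadratic moment function $(y-x^\top\theta)^2$ is replaced by $m(\theta;x,y)=(y-f(\theta,x))^2$, so I must re-verify each hypothesis of that theorem in the nonlinear setting. Write $w_i=1\{i\in S\}$ as before. The four things to check are: (i) $\theta^*$ uniquely minimizes the population objective $\E[m(\theta;x_i,y_i)]$; (ii) $\Theta$ is compact; (iii) $\theta\mapsto \frac{w_i}{A(c_i)}m(\theta;x_i,y_i)$ is continuous; and (iv) a uniform law of large numbers applies, after which one identifies the sample objective's limit with the population objective.

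First I would establish the identification step (i). Expand $m(\theta;x_i,y_i) = (f(\theta^*,x_i)-f(\theta,x_i)+\epsilon_i)^2$, and take expectations. Since $x_i$ is independent of $\epsilon_i$ and $\E[\epsilon_i]=0$, the cross term $2\E[(f(\theta^*,x_i)-f(\theta,x_i))\epsilon_i]$ vanishes, giving
\begin{equation}
\E[m(\theta;x_i,y_i)] = \E\!\left[(f(\theta^*,x_i)-f(\theta,x_i))^2\right] + \E[\epsilon_i^2].
\end{equation}
By Assumption~\ref{as: non-degenerate}, the first term is strictly positive for every $\theta\neq\theta^*$ and is zero at $\theta=\theta^*$, so $\theta^*$ is the unique minimizer. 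Step (ii) holds by hypothesis. For (iii), continuity of $\theta\mapsto f(\theta,x)$ follows from Assumption~\ref{as: f-continuous}, hence $\theta\mapsto m(\theta;x,y)$ is continuous, and multiplying by the $\theta$-independent nonnegative factor $w_i/A(c_i)$ preserves continuity; continuity of $\E[m(\theta;x_i,y_i)]$ then follows.

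The main obstacle — and the place where the nonlinear case genuinely differs from the linear one — is step (iv): in Lemma~\ref{lem: linear-consistent-estimator} the integrand was bounded uniformly in $\theta$ by $\max(L^2,U^2)$ because $(y-x^\top\theta)^2$ over a compact $\Theta$ with bounded $\epsilon$ was easy to control, but for general $f$ we need an integrable envelope to apply the uniform law of large numbers. Here I would invoke Assumption~\ref{as: f-continuous} together with compactness of $\Theta$: $\theta\mapsto f(\theta,x)$ being continuous on the compact set $\Theta$ means $\sup_{\theta\in\Theta}|f(\theta,x)|$ is attained and finite for each $x$; combined with the bounded range $[L,U]$ of $\epsilon_i$, this gives a measurable envelope $G(x_i,\epsilon_i)=(\,|\epsilon_i|+\sup_{\theta\in\Theta}|f(\theta,x_i)-f(\theta^*,x_i)|\,)^2$ dominating $m(\theta;x_i,y_i)$ over all $\theta\in\Theta$. (If one wants $\E[G]<\infty$ one may either assume it directly as part of the continuity/compactness package, or note it follows from Assumption~\ref{as: bounded-second-derivative}-type integrability hypotheses stated for the neighborhood argument; the cleanest route is to carry the envelope assumption alongside Assumption~\ref{as: f-continuous}.) Then the uniform law of large numbers yields
\begin{equation}
\sup_{\theta\in\Theta}\left|\frac{1}{n}\sum_{i=1}^n \frac{w_i}{A(c_i)}m(\theta;x_i,y_i) - \E\!\left[\frac{w_i}{A(c_i)}m(\theta;x_i,y_i)\right]\right| \xrightarrow{p} 0,
\end{equation}
and, exactly as in Lemma~\ref{lem: linear-consistent-estimator}, conditioning on $c_i$ and using $\Econd{w_i/A(c_i)}{c_i}=1$ together with the conditional independence of $(x_i,y_i)$ from $w_i$ given $c_i$ shows $\E[\frac{w_i}{A(c_i)}m(\theta;x_i,y_i)] = \E[m(\theta;x_i,y_i)]$, so the sample objective converges uniformly to the population objective whose unique minimizer is $\theta^*$. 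With (i)--(iv) in hand, Theorem 2.1 of~\cite{McFadden86} gives $\hat\theta_S \xrightarrow{p}\theta^*$, completing the proof.
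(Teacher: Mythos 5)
Your proof follows the same route as the paper's: verify the hypotheses of the extremum-estimator consistency theorem (Theorem 2.1 of~\cite{McFadden86}), with identical identification, compactness, and continuity steps, and the same conditioning argument to show $\E\left[\frac{w_i}{A(c_i)}m(\theta;x_i,y_i)\right]=\E[m(\theta;x_i,y_i)]$. The one place you diverge is the uniform-LLN domination step: the paper simply asserts $m(\theta;x,y)\leq\max(L^2,U^2)$, which for general nonlinear $f$ holds only at $\theta=\theta^*$, so your explicit envelope $\left(|\epsilon_i|+\sup_{\theta\in\Theta}|f(\theta,x_i)-f(\theta^*,x_i)|\right)^2$ together with an integrability hypothesis is the more careful (and arguably necessary) treatment.
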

\begin{proof}
 Let $m(\theta;x,y) = (y-f(\theta,x))^2$, and $w_i = 1\{ i \in S \}$ for simplicity. The following hold:
\begin{enumerate}
\item First, we note that $\theta^*$ is the unique parameter that minimizes $\E [(y_i-f(\theta,x_i))^2]$; indeed, take any $\theta \neq \theta^*$, we have that 
\begin{align*}
\E [(y_i-f(\theta,x_i))^2]
&= \E \left[ \left(y_i-f(\theta^*,x_i) + f(\theta^*,x_i)  - f(\theta,x_i) \right)^2 \right] 
\\& = \E \left[ \left(y_i-f(\theta^*,x_i) \right)^2 \right] + \E\left[  \left( f(\theta^*,x_i)  - f(\theta,x_i) \right)^2 \right] + 2 \E\left[ \epsilon \left( f(\theta^*,x_i)  - f(\theta,x_i) \right) \right]
\\& = \E \left[ \left(y_i-f(\theta^*,x_i) \right)^2 \right] + \E\left[  \left( f(\theta^*,x_i)  - f(\theta,x_i) \right)^2 \right] 
\\& > \E \left[ \left(y_i-f(\theta^*,x_i) \right)^2 \right]
\end{align*}
where the second to last step follows from independence of $\epsilon$ from $x$ and the parameter $\theta$ and the fact that $\E[\epsilon] = 0$, and the last step follows from Assumption~\ref{as: non-degenerate}. 
\item $\Theta$ is a compact subset of $\reals$.
\item $m(\theta;x,y)$ is continuous in $\theta$ and so is its expectation.
\item $m(.;.)$ is also bounded (lower-bounded by $0$, and upper-bounded by either $L^2$ or $U^2$). Therefore, $\theta \to \frac{w_i }{A(c_i)} m(\theta;x_i,y_i)$ is continuous and bounded, thus by the uniform law of large number,  $\sup_{\theta \in \Theta} \mid\frac{1}{n} \sum_{i=1}^n \frac{w_i }{A(c_i)} m(\theta;x_i,y_i) - \E[m(\theta;x_i,y_i)]| \to 0$. The expectation term comes from remembering that conditional on $c_i$, $m(\theta;x_i,y_i)$  and $\frac{w_i }{A(c_i)}$ are independent hence 
$$\E\left[\frac{w_i}{A(c_i)} m(\theta;x_i,y_i) \right]= \E\left[ \Econd{ \frac{w_i }{A(c_i)} } {c_i} \Econd{ m(\theta;x_i,y_i)}{c_i} \right] = \E\left[ m(\theta;x_i,y_i) \right] $$
noting that $\Econd{ \frac{w_i }{A(c_i)} } {c_i} = 1$.
\end{enumerate}
Therefore, all of the conditions of Theorem 2.1 of~\cite{McFadden86} are satisfied, which is enough to prove the result.
\end{proof}

The following theorem characterizes the asymptotic variance of $\hat{\theta}_S$.  (In fact, it fully characterizes the asymptotic distribution of $\hat{\theta}_S$.)
\begin{lemma}\label{lem: asymptotic-variance}
Under Assumptions~\ref{as: f-continuous},~\ref{as: non-degenerate},~\ref{as: positive-definite} and~\ref{as: bounded-second-derivative}, for any allocation rule $A > 0$ that does not depend on $n$, the asymptotic distribution of $\hat{\theta}_S$ is given by 
\begin{align*}
\sqrt{n} (\hat{\theta}_S - \theta^*) \xrightarrow[]{d} \mathcal{N}\left(0,  \E \left[\nabla_\theta f(\theta^*,x_i) \nabla_\theta f(\theta^*,x_i)^\top \right]^{-1} \E \left[\frac{w_i }{ A^2(c_i) } \epsilon_i^2 \right] \right)
\end{align*}
where $d$ denotes convergence in distribution and where randomness in the expectations is taken on the costs $c_i$, the set of elicited data points $S$, the features of the data $x_i$, and the noise $\epsilon_i$.
\end{lemma}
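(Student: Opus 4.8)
The plan is to invoke the asymptotic theory for M-estimators (extremum estimators) exactly as in the proof of Lemma~\ref{lem: linear-asymptotic-variance}, replacing the linear model gradient $2x_i(x_i^\top\theta - y_i)$ with the general gradient $\nabla_\theta m(\theta; x_i, y_i) = -2\nabla_\theta f(\theta, x_i)(y_i - f(\theta, x_i))$ and Hessian $\nabla^2_{\theta\theta} m(\theta; x_i, y_i) = 2\nabla_\theta f(\theta, x_i)\nabla_\theta f(\theta, x_i)^\top - 2(y_i - f(\theta, x_i))\nabla^2_{\theta\theta} f(\theta, x_i)$. The estimator $\hat\theta_S$ minimizes $\frac{1}{n}\sum_i \frac{w_i}{A(c_i)} m(\theta; x_i, y_i)$ with $w_i = 1\{i\in S\}$, and by Lemma~\ref{lem: nonlinear_consistent_estimator} it is consistent, which gives the starting point for verifying the five hypotheses of Theorem 3.1 of~\cite{McFadden86}.

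The key steps, in order, are: (i) note $\theta^*$ lies in the interior of $\Theta$ by assumption; (ii) note $\theta \mapsto m(\theta; x_i, y_i)\cdot \frac{w_i}{A(c_i)}$ is twice continuously differentiable by Assumption~\ref{as: f-continuous}; (iii) establish asymptotic normality of the scaled score: compute $\E[\nabla_\theta m(\theta^*; x_i, y_i)\cdot \frac{w_i}{A(c_i)}]$ by conditioning on $c_i$, using $\Econd{w_i/A(c_i)}{c_i} = 1$ and the independence of $x_i$ from $(c_i, \epsilon_i)$ together with $\E[\epsilon_i] = 0$, to get that this expectation equals $\E[-2\nabla_\theta f(\theta^*, x_i)\epsilon_i] = 0$, then apply the multivariate CLT to obtain a normal limit with covariance $4\,\E\!\left[\frac{w_i}{A^2(c_i)}\nabla_\theta f(\theta^*, x_i)\nabla_\theta f(\theta^*, x_i)^\top (y_i - f(\theta^*, x_i))^2\right]$; (iv) establish a uniform LLN for the Hessian over a neighborhood $\mathcal{N}$ of $\theta^*$ — here is where Assumption~\ref{as: bounded-second-derivative} is used, since unlike the linear case the Hessian is no longer constant in $\theta$, so one needs the dominating function $g$ with finite expectation to invoke the uniform law of large numbers; (v) check that $\E[\nabla^2_{\theta\theta} m(\theta^*; x_i, y_i)\cdot \frac{w_i}{A(c_i)}]$ is invertible, which reduces — again by conditioning on $c_i$ and using $\E[\epsilon_i] = 0$ with independence — to $2\,\E[\nabla_\theta f(\theta^*, x_i)\nabla_\theta f(\theta^*, x_i)^\top]$, invertible by Assumption~\ref{as: positive-definite}.

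Having verified (i)--(v), Theorem 3.1 of~\cite{McFadden86} yields asymptotic normality of $\sqrt{n}(\hat\theta_S - \theta^*)$ with sandwich covariance
\begin{equation*}
\E\!\left[2\nabla_\theta f(\theta^*, x_i)\nabla_\theta f(\theta^*, x_i)^\top\right]^{-1}\!\!\cdot 4\,\E\!\left[\frac{w_i}{A^2(c_i)}(y_i - f(\theta^*, x_i))^2 \nabla_\theta f(\theta^*, x_i)\nabla_\theta f(\theta^*, x_i)^\top\right]\!\cdot \E\!\left[2\nabla_\theta f(\theta^*, x_i)\nabla_\theta f(\theta^*, x_i)^\top\right]^{-1}.
\end{equation*}
The final simplification step uses $y_i - f(\theta^*, x_i) = \epsilon_i$ and the independence of $x_i$ from $(c_i, \epsilon_i)$ to factor the middle expectation as $\E[\nabla_\theta f(\theta^*, x_i)\nabla_\theta f(\theta^*, x_i)^\top]\cdot \E[\epsilon_i^2 w_i/A^2(c_i)]$, whereupon the $2$'s and the outer factors collapse to give $\E[\nabla_\theta f(\theta^*, x_i)\nabla_\theta f(\theta^*, x_i)^\top]^{-1}\E[\epsilon_i^2 w_i/A^2(c_i)]$, as claimed.

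I expect the main obstacle to be step (iv): in the linear case the Hessian is exactly constant in $\theta$, so the uniform LLN is essentially trivial, but for general $f$ the term $-2(y_i - f(\theta, x_i))\nabla^2_{\theta\theta} f(\theta, x_i)$ genuinely varies with $\theta$, so one must carefully use Assumption~\ref{as: bounded-second-derivative} to produce an integrable envelope on a compact neighborhood of $\theta^*$ and then apply the uniform law of large numbers (noting the summands are i.i.d.\ conditional structure only matters for computing the mean, not for the LLN itself). A minor point to be careful about is that McFadden's Theorem 3.1 requires the neighborhood condition only locally around $\theta^*$, which is consistent with Assumption~\ref{as: bounded-second-derivative} being stated on a compact neighborhood $\mathcal{N}$ rather than all of $\Theta$; combined with consistency from Lemma~\ref{lem: nonlinear_consistent_estimator}, the estimator eventually falls into $\mathcal{N}$ with probability tending to one, which is all that is needed.
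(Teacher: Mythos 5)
Your proposal is correct and follows essentially the same route as the paper's proof: verifying the five hypotheses of Theorem 3.1 of McFadden, computing the score's mean by conditioning on $c_i$ and using $\Econd{w_i/A(c_i)}{c_i}=1$ together with independence of $x_i$ from $(c_i,\epsilon_i)$, invoking Assumption~\ref{as: bounded-second-derivative} for the uniform LLN on the (now non-constant) Hessian, and collapsing the sandwich covariance via the same factorization. Your observation that step (iv) is the genuinely new ingredient relative to the linear case matches the paper's treatment.
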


\begin{proof}
For simplicity, let $w_i = 1 \{ i \in S \}$ and note that the $w_i$'s are i.i.d. Let $m(\theta;x_i,y_i) \cdot \frac{w_i}{A(c_i)} = \frac{w_i}{A(c_i)}  (y_i-f(\theta,x_i))^2$. First we remark that 
$$\nabla_\theta m(\theta;x_i,y_i) \cdot \frac{w_i}{A(c_i)} = 2 \frac{w_i}{A(c_i)} (f(\theta,x_i)-y_i) \nabla_\theta f(\theta,x_i)$$  
and 
$$\nabla^2_{\theta \theta} m(\theta;x_i,y_i) \cdot \frac{w_i}{A(c_i)} = 2\frac{w_i}{A(c_i)} \left( \nabla_\theta f(\theta,x) \nabla_\theta f(\theta,x_i)^\top +  (f(\theta,x_i)-y_i)\cdot \nabla^2_{\theta \theta} f(\theta,x_i)  \right).$$ 
We then note the following:
\begin{enumerate}
\item $\theta^*$ is in the interior of $\Theta$.
\item $\theta \to m(\theta;x_i,y_i) \cdot \frac{w_i}{A(c_i)} $ is twice continuously differentiable for all $x_i,y_i,c_i,w_i$.
\item $\sqrt{n} \sum_{i=1}^n \nabla_\theta m(\theta^*;x_i,y_i) \cdot \frac{w_i}{A(c_i)} \to \mathcal{N}\left(0,4 \E \left[\frac{w_i}{A^2(c_i)} (f(\theta^*,x_i) - y_i)^2 \nabla_\theta f(\theta^*,x_i) \nabla_\theta f(\theta^*,x_i)^\top \right] \right) $. This follows directly from applying the multivariate central limit theorem, noting that 
$$
\E \left[ \nabla_\theta m(\theta^*;x_i,y_i) \cdot \frac{w_i}{A(c_i)} \right] = \E \left[\Econd{2 \epsilon_i \nabla_\theta f(\theta^*,x_i)}{ c_i} \cdot \Econd{\frac{w_i}{A(c_i)} }{ c_i} \right]  =  \E\left[2 \epsilon_i \nabla_\theta f(\theta^*,x) \right] = 0 
$$
where the first step follows from conditional independence on $c$ of $x,\epsilon$ with $A(c), S$, the second step from $ \E \left[ \frac{w_i}{A(c_i)} \mid c_i \right] = 1$, and the last equality follows from the fact that $x, \theta$ are independent of $\epsilon$ and $\E[\epsilon_i ] = 0$.
\item $\sup_{\theta \in \mathcal{N}} \left \Vert \E \left[ \nabla_{\theta \theta}^2 m(\theta;x_i,y_i) \cdot \frac{w_i}{A(c_i)} \right] - \frac{1}{n} \sum_i \nabla_{\theta \theta}^2 m(\theta;x_i,y_i) \cdot \frac{w_i}{A(c_i)} \right \Vert \to 0$, applying the uniform law of large numbers as $\nabla_{\theta \theta}^2 m(\theta;x_i,y_i) \cdot \frac{w_i}{A(c_i)}$  is continuous in $\theta$ and as it is dominated by a function that is independent of $\theta$ and has finite expectation, by Assumption~\ref{as: bounded-second-derivative}.
\item $\E \left[ \nabla_{\theta \theta}^2 m(\theta;x_i,y_i) \cdot \frac{w_i}{A(c_i)} \right]$ is positive-definite and invertible at $\theta^*$; indeed, it is given by
\begin{align*}
&2\E \left[ \frac{w_i}{A(c_i)} \nabla_\theta f(\theta^*,x_i)  \nabla_\theta f(\theta^*,x_i)^\top \right] + 2\E \left[  \frac{w_i}{A(c_i)} (f(\theta^*,x_i) - y_i) \cdot \nabla^2_{\theta \theta} f(\theta^*,x_i)  \right]
\\&= 2 \E \left[\nabla_\theta f(\theta^*,x_i) \nabla_\theta f(\theta^*,x_i)^\top\right]  -2 \E \left[  \Econd{\frac{w_i}{A(c_i)}}{c_i} \Econd{ \epsilon_i\cdot \nabla^2_{\theta \theta} f(\theta^*,x_i)}{c_i}  \right]
\\&=  2\E \left[\nabla_\theta f(\theta^*,x_i) \nabla_\theta f(\theta^*,x_i)^\top\right] - 2\E \left[\epsilon_i \nabla^2_{\theta \theta} f(\theta^*,x_i)\right]
\\&=  2\E \left[\nabla_\theta f(\theta^*,x_i) \nabla_\theta f(\theta^*,x_i)^\top\right] - 2\E \left[\epsilon_i\right] \E \left[ \nabla^2_{\theta \theta} f(\theta^*,x_i)\right]
\\& =  2\E \left[\nabla_\theta f(\theta^*,x_i) \nabla_\theta f(\theta^*,x_i)^\top \right] 
\end{align*}
which is positive-definite by assumption.
\end{enumerate}
Therefore the sufficient conditions i)-v) in Theorem 3.1 of~\cite{McFadden86} hold, proving that the asymptotic distribution is normal with mean $0$ and variance 
$$
 \E \left[\nabla_\theta f(\theta^*,x_i) \nabla_\theta f(\theta^*,x_i)^\top \right]^{- 1}  \E \left[\frac{w_i }{ A^2(c_i) } \epsilon_i^2 \nabla_\theta f(\theta^*,x_i) \nabla_\theta f(\theta^*,x_i)^\top \right]  \E \left[\nabla_\theta f(\theta^*,x_i) \nabla_\theta f(\theta^*,x_i)^\top \right]^{-1}
$$
We conclude the proof by noting that $x$ is independent of all other random variables of the problem, hence
\begin{align*}
\E \left[\frac{w_i }{ A^2(c_i) } (f(\theta^*,x_i) - y)^2 \nabla_\theta f(\theta^*,x_i) \nabla_\theta f(\theta^*,x_i)^\top \right] 
&= \E \left[\frac{w_i }{ A^2(c_i) } \epsilon_i^2 \right] \E \left[\nabla_\theta f(\theta^*,x_i) \nabla_\theta f(\theta^*,x_i)^\top \right] 
\end{align*}
\end{proof}

We therefore have that the optimization problem solved by the analyst, whose goal is to minimize the $\sqrt{n}$-normalized worst-case asymptotic mean-squared error of $\hat{\theta}_S$ as the population size goes to infinity, can be written as:
\begin{lemma}[Optimization Problem for Parameter Estimation]
The optimization program for the analyst is given by:
\begin{equation}
\begin{aligned}
\inf_{A \in [0,1]^{\ubc}}~&\sup_{\duals \in [0,1]^{\ubc}} \sum_{t=1}^l \frac{\pi_t}{A_t} \left((1-\duals_t) \cdot  L^2  + \duals_t \cdot U^2 \right)\\
\mathrm{s.t.}~~~&\sum_{t=1}^{\ubc} \pi_t \left( (1-\duals_t) \cdot L+ \duals_t \cdot U  \right)= 0\\
~& \sum_{t=1}^{\ubc} \pi_t c_t A_t \leq \bar{B}\\
~& A \text{ is monotone non-increasing }\\
\end{aligned}
\end{equation}
\end{lemma}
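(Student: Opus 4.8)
\medskip
\noindent\textbf{Proof plan.}
The plan is to mirror the argument behind Lemma~\ref{lem: regression-program}, using the non-linear asymptotic characterization of Lemma~\ref{lem: asymptotic-variance} in place of its linear counterpart. First I would apply Lemma~\ref{lem: asymptotic-variance}: the $\sqrt{n}$-normalized worst-case asymptotic mean-squared error is the supremum, over the feature distribution $\cX$ and over joint laws of $(c_i,\epsilon_i)$ with $\E[\epsilon_i]=0$, of the trace of the asymptotic covariance matrix. Since that covariance equals $\E[\nabla_\theta f(\theta^*,x_i)\nabla_\theta f(\theta^*,x_i)^\top]^{-1}\cdot\E\!\left[\tfrac{1\{i\in S\}}{A^2(c_i)}\epsilon_i^2\right]$ — a positive-definite matrix depending only on $\cX$ and $f$, scaled by a nonnegative scalar depending only on $A$ and the joint law of $(c_i,\epsilon_i,S)$ — its trace factors as
\[
\cR^*(\cF,A)\;=\;\Bigl(\sup_{\cX}\textstyle\sum_{j=1}^{d}\E[\nabla_\theta f(\theta^*,x_i)\nabla_\theta f(\theta^*,x_i)^\top]_{jj}^{-1}\Bigr)\cdot\sup_{\cD_\epsilon:\,\E[\epsilon_i]=0}\E\!\left[\frac{1\{i\in S\}}{A^2(c_i)}\,\epsilon_i^2\right],
\]
exactly as in the linear case, because $x_i$ is independent of $(c_i,\epsilon_i)$ and $S\perp\epsilon_i\mid c_i$. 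The first factor is a constant independent of the allocation rule and of the cost--noise correlation, so the analyst's problem is equivalent to minimizing the second factor subject to the budget and monotonicity constraints.

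Next I would simplify the inner supremum. Conditioning on $c_i$ and using $\E[1\{i\in S\}/A^2(c_i)\mid c_i]=1/A(c_i)$ together with $S\perp\epsilon_i\mid c_i$, the objective becomes $\sum_{t=1}^{\ubc}\pi_t\,\E[\epsilon_i^2\mid c_t]/A_t$, where the only remaining freedom is in the conditional laws $\epsilon_i\mid c_t$ (each supported on $[L,U]$), subject to the coupling constraint $\sum_t\pi_t\,\E[\epsilon_i\mid c_t]=0$. For each fixed conditional mean $\mu_t\in[L,U]$, convexity of $s\mapsto s^2$ (equivalently, a mean-preserving spread onto the endpoints) shows that $\E[\epsilon_i^2\mid c_t]$ is maximized by the two-point distribution on $\{L,U\}$ with mean $\mu_t$, and such a distribution is always feasible. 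Writing $\duals_t=\Pr[\epsilon_i=U\mid c_t]\in[0,1]$, the conditional mean is $(1-\duals_t)L+\duals_t U$ and the conditional second moment is $(1-\duals_t)L^2+\duals_t U^2$, so the mean-zero constraint reads $\sum_t\pi_t\bigl((1-\duals_t)L+\duals_t U\bigr)=0$. Substituting reproduces Program~\eqref{eq: regression-program} verbatim; the feasible set for $\duals$ is nonempty since $L\le0\le U$ (for instance each $\duals_t$ with $(1-\duals_t)L+\duals_t U=0$ works).

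I do not expect a genuine obstacle: the analytically delicate work — asymptotic normality of the M-estimator, the uniform law of large numbers, and the limit interchanges — has already been discharged in Lemma~\ref{lem: asymptotic-variance} under Assumptions~\ref{as: f-continuous}--\ref{as: bounded-second-derivative}, so what remains is the algebraic reduction above. The only step warranting an explicit line is the claim that the worst-case noise is two-point on $\{L,U\}$, handled by the convexity argument; and, as in the linear case, one should note that if $\sup_{\cX}\sum_j\E[\nabla_\theta f(\theta^*,x_i)\nabla_\theta f(\theta^*,x_i)^\top]_{jj}^{-1}$ is infinite the problem is degenerate and every feasible $A$ is optimal, so we may assume it finite — which holds whenever the feature distribution is suitably non-degenerate, the natural analogue of Assumption~\ref{as: x-distribution}.
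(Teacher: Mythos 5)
Your proposal follows exactly the route the paper intends: it invokes Lemma~\ref{lem: asymptotic-variance}, factors the worst-case trace into an $\cX$-dependent constant times $\sup_{\cD_\epsilon}\E[\epsilon_i^2\,1\{i\in S\}/A^2(c_i)]$, and reduces the inner supremum to two-point conditionals on $\{L,U\}$ — which is precisely the argument of Lemma~\ref{lem: regression-program} that the paper declares this proof to be identical to. Your explicit convexity/mean-preserving-spread justification of the two-point worst case and the remark on the degenerate $\cX$-supremum only make explicit what the paper asserts in passing; the approach is the same.
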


\begin{proof}
The proof is identical to that of Theorem~\ref{lem: regression-program}, and therefore is omitted.
\end{proof}

 We note that this is exactly the same optimization program as for the linear regression case, and therefore refer to Section~\ref{sec:linear-regression} for results on how to compute the optimal allocation rule.

\section{Omitted proofs}
\subsection{Proof of the reduction from Mechanism Design to Optimization}\label{app:truth-char}
\subsubsection{Continuous support, atomless distribution}

We first give the proof of the result when the cost support is continuous and the distribution atom-less:
\begin{proof}
By truthfulness: $u(c; c) = \max_{\hat{c}} u(\hat{c}; c)$. By the envelope theorem:
\begin{equation}
\frac{\partial u(c; c)}{\partial c} = \frac{\partial u(\hat{c}; c)}{\partial c} \bigg|_{\hat{c}=c} = -A(c)
\end{equation}
Let $C$ be the upper bound on the support of the distribution of costs. The analyst will always pay such a player his true cost, hence $u(C;C)=0$. Integrating the above equation from $c$ to $C$ we get:
\begin{equation}\label{eqn:payment-char}
U(C;C) - u(c;c) = - \int_{c}^C A(z)\cdot dz  \Rightarrow u(c;c) = \int_{c}^C A(z) dz \Rightarrow P(c) = c + \frac{1}{A(c)} \int_{c}^C A(z) dz
\end{equation}
Thus the allocation rule $A$ uniquely determines the truthful payment rule that accompanies it. Moreover, if the allocation rule is monotone, then it can be shown that indeed the pair of allocation $A$ and payment $P$ defined by Equation \eqref{eqn:payment-char} define a TIR mechanism (identical to the analysis in \cite{Myer81}). The expected payment of a mechanism can then be computed as:
\begin{align*}
\E\left[ P(c) \cdot A(c) \right] =~& \E\left[ c\cdot A(c) \right] + \int_{0}^C \int_{c}^C A(z) dz f(c) dc = \E\left[ c\cdot A(c) \right] +\int_{0}^C A(z) \int_{0}^c f(c) dc dz\\
 =~& \E\left[ c\cdot A(c) \right] + \int_{0}^C A(z) F(z) dz =  \E\left[ c\cdot A(c) \right] + \E\left[ A(c) \frac{F(c)}{f(c)}\right] = \E\left[\phi(c)\cdot A(c)\right]
\end{align*}
Thus the budget constraint of the mechanism design problem is equal to the budget constraint of the optimization problem by replacing $c$ with a virtual cost $\phi(c)$ and with allocation function $A'$ on the virtual cost space defined by $A'(\phi(c))=A(c) \Leftrightarrow A'(c) = A(\phi^{-1}(c))$. 

Finally, since the estimator and the variance of the estimator do not depend on the realized costs, other than through the allocation function associated with them, we get that the variance of the estimator under $A$ in the cost space, will be the same as the variance of the estimator under $A'=A\circ \phi^{-1}$ in the virtual cost space, i.e. $\Var^*(\hat{\theta}_S; \phi(\cF), A\circ \phi^{-1}) = \Var^*(\hat{\theta}_S; \cF, A)$. Thus if we solve the optimization problem for distribution of costs $\phi(\cF)$ we will get back a solution $\tilde{A}^*$, with variance $\Var^*(\hat{\theta}_S; \phi(\cF), \tilde{A}^*)\leq \Var^*(\hat{\theta}_S; \phi(\cF), A\circ \phi^{-1})$. Then the allocation $A^*=\tilde{A}^*\circ \phi$ can be coupled with the payment rule in Equation \eqref{eqn:payment-char} and get a TIR mechanism, which also respects the budget constraint, since:  $\E_{c \sim \cF}[P(c)\cdot \tilde{A}^*(\phi(c))] = \E[\phi(c)\cdot\tilde{A}^*(\phi(c))] =\E_{x \sim \phi(\cF)}[x\cdot \tilde{A}^*(x)]=\bar{B}$. The latter equality holds by the fact that $A^*$ was a feasible allocation rule for optimization problem under distribution of costs $\phi(\cF)$.
\end{proof}

\subsubsection{Discrete cost support}

We now give the proof when the cost support is discrete. In the whole proof, we let $\cC \triangleq \{ c_1, \ldots, c_{\ubc}\}$ with $c_1 < \ldots < c_{\ubc}$, $\pi_t = f(c_t)$, and $A_t = A(c_t)$. We first show that given a fixed monotone non-increasing allocation rule $A$ with $A_1 \geq \ldots \geq A_{\ubc}$, there exists optimal prices $P^*(A)$ such that the payments of any individually rational and truthful mechanism are lower-bounded by $P^*(A)$, and such that mechanism with allocation rule $A$ and prices $P^*(A)$ is individually rational and truthful: 

\begin{claim}
Let $P^*_{\ubc}(A) = c_{\ubc}$, and $P^*_{t}(A) = c_t + \sum_{j=t+1}^{\ubc} \frac{A_{j}}{A_{t}} (c_j - c_{j-1})$ for all $t <\ubc$. Then for every IC and IR mechanism with monotone non-increasing allocation rule $A$, the pricing rule $P$ must satisfy $P_{t} \geq P^*_{t}(A)$ for all $t$. Further, the mechanism with allocation rule $A$ and pricing rule $P^*(A)$ is IC and IR. 
\end{claim}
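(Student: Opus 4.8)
The plan is to follow Myerson's characterization in discrete form, tracking the agents' equilibrium utilities $u_t := u(c_t;c_t) = (P_t-c_t)A_t$ rather than the prices directly. For the lower bound, I would start from the incentive constraint that forbids type $c_t$ from reporting $c_{t+1}$, namely $(P_t-c_t)A_t \ge (P_{t+1}-c_t)A_{t+1}$; rewriting the right-hand side as $(P_{t+1}-c_{t+1})A_{t+1} + (c_{t+1}-c_t)A_{t+1}$ turns this into $u_t \ge u_{t+1} + (c_{t+1}-c_t)A_{t+1}$. Iterating this inequality from $t$ up to $\ubc$ and using individual rationality in the form $u_{\ubc} = (P_{\ubc}-c_{\ubc})A_{\ubc} \ge 0$ yields
\[
u_t \ge \sum_{j=t+1}^{\ubc} (c_j-c_{j-1})\,A_j ,
\]
and dividing by $A_t>0$ and adding $c_t$ gives exactly $P_t \ge P^*_t(A)$; the boundary case $t=\ubc$ is just $P_{\ubc}\ge c_{\ubc}$. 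Note this half uses only the ``downward'' adjacent incentive constraints and does not need monotonicity of $A$.

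For the second assertion I would plug $P^*(A)$ in and compute the induced utilities directly: $u^*_t := (P^*_t(A)-c_t)A_t = \sum_{j=t+1}^{\ubc}(c_j-c_{j-1})A_j$, with the convention that the empty sum at $t=\ubc$ is $0$. These are non-negative, which gives individual rationality since $P^*_t(A) = c_t + u^*_t/A_t \ge c_t$, and they satisfy the telescoping identity $u^*_s - u^*_t = \sum_{j=s+1}^{t}(c_j-c_{j-1})A_j$ whenever $s \le t$. To check incentive compatibility, I would write the payoff of type $c_t$ when reporting $c_s$ as $u(c_s;c_t) = (P^*_s(A)-c_t)A_s = u^*_s + (c_s-c_t)A_s$, so that truthfulness is equivalent to $u^*_t - u^*_s \ge (c_s-c_t)A_s$ for all $s,t$. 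For an upward misreport ($s>t$) this reads $\sum_{j=t+1}^{s}(c_j-c_{j-1})A_j \ge (c_s-c_t)A_s$, which holds term by term because $A_j \ge A_s$ for $j \le s$ and $c_j-c_{j-1}\ge 0$; for a downward misreport ($s<t$) it reads $\sum_{j=s+1}^{t}(c_j-c_{j-1})A_j \le (c_t-c_s)A_s$, which holds term by term because $A_j \le A_s$ for $j > s$. In both cases monotonicity of $A$ is precisely what is invoked.

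I do not expect a real obstacle here: this is the discrete counterpart of the integral identity already used in the atomless case, and the only thing that needs care is the bookkeeping of the telescoping sums — splitting them correctly according to whether the contemplated deviation is toward a higher or a lower cost, and invoking monotonicity of $A$ in the appropriate direction in each of the two cases. The one convention worth stating explicitly is that individual rationality is read as $P_t\ge c_t$ (equivalently $u_t\ge 0$), which is what makes $u_{\ubc}\ge 0$ available as the base case of the iteration, and that $A>0$ is implicitly assumed so that $P^*_t(A)$ is well defined.
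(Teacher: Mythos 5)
Your proposal is correct and follows essentially the same route as the paper's proof: the lower bound comes from chaining the adjacent downward IC constraints up to the top type and invoking IR there (the paper phrases this as an induction on prices, which unrolls to exactly your telescoping sum of utilities), and the IC verification for $P^*(A)$ is the same term-by-term comparison of $\sum_j (c_j-c_{j-1})A_j$ against $(c_s-c_t)A_s$ using monotonicity of $A$ in the appropriate direction for upward versus downward deviations. No gaps.
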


Note that this directly implies that if there exists a variance-minimizing mechanism, then there exists an IC and IR variance-minimizing mechanism with pricing rule $P^*(A)$ given allocation rule $A$. Therefore, we can reduce our attention to such mechanisms.

\begin{proof}
We show the first part of the lemma by induction: clearly, it must be the case that $p_{\ubc} \geq c_{\ubc}$ for the mechanism to be IR. Now, suppose by induction that for any IC and IR mechanism, $p_{t+1} \geq P^*_{t+1}(A)$. We require by IC constraint and induction hypothesis that for $t < \ubc-1$,
\begin{align*}
p_t \geq  c_t + \frac{A_{t+1}}{A_t} (P^*_{t+1}(A) - c_t) 
=  c_t + \frac{A_{t+1}}{A_t} \left(\sum\limits_{j=t+2}^{\ubc} \frac{A_{j}}{A_{t+1}} (c_j - c_{j-1}) + c_{t+1}- c_t \right)
= P^*_{t}(A) 
\end{align*}
and for $t= \ubc-1$,
\begin{align*}
p_{\ubc-1} \geq  c_{\ubc-1} + \frac{A_{\ubc}}{A_{\ubc-1}} (P^*_{\ubc}(A) - c_{\ubc-1}) 
=  c_{\ubc-1} + \frac{A_{\ubc}}{A_{\ubc-1}} (c_{\ubc} - c_{\ubc-1}) 
= P^*_{\ubc-1}(A) 
\end{align*}

This proves the first part of the claim. Now, we note that the mechanism with prices $P^*(A)$ is IR as clearly $P^*_{t}(A) \geq c_t$. It remains to show the mechanism is IC to complete the proof of the claim. Take $t \neq t'$, we have:
\begin{align*}
A_t (P^*_t - c_t) - A_{t'} (P^*_{t'} - c_t) 
& = \sum\limits_{j=t+1}^{\ubc} A_{j} (c_j - c_{j-1}) - \sum\limits_{j=t'+1}^{\ubc} A_{j} (c_j - c_{j-1}) + A_{t'} (c_t - c_{t'})
\end{align*}
If $t > t'$, we have 
$$A_t (P^*_t - c_t) - A_{t'} (P^*_{t'} - c_{t'}) \geq A_{t'} (c_t - c_{t'}) - A_{t'} \sum\limits_{j=t'+1}^{t} (c_j - c_{j-1}) = 0$$ 
as $A_j \leq A_{t'}$ for all $j \geq t'$, while if $t < t'$, we have 
$$A_t (P^*_t - c_t) - A_{t'} (P^*_{t'} - c_t) \geq  A_{t'} \sum\limits_{j=t+1}^{t'} (c_j - c_{j-1}) - A_{t'} (c_{t'} - c_t)  = 0$$ 
as $A_j \geq A_{t'}$ for all $j \leq t'$. This concludes the proof.
\end{proof}

We conclude the proof by showing that the budget constraint can be rewritten in the desired form, i.e. such that the true costs are replaced by the virtual costs in the budget expression: 
\begin{lemma}
The expected budget used by a mechanism with allocation rule $A$ and payment rule $P^*(A)$ can be written
\begin{align*}
\sum_{t=1}^n \sum_{t=1}^{\ubc} \pi_t \phi(c_t) A_t
\end{align*}
\end{lemma}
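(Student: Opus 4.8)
The plan is to expand the expected per-agent payment $\E_{c\sim\cF}[P^*(c)A(c)] = \sum_{t=1}^{\ubc}\pi_t P^*_t(A)\,A_t$ using the closed form for $P^*_t(A)$ established in the preceding claim, and then rearrange the resulting double sum until the discrete virtual cost $\phi(c_t) = c_t + \frac{c_t-c_{t-1}}{f(c_t)}\cF(c_{t-1})$ emerges; multiplying by $n$ then gives the total expected budget (the displayed formula in the statement has a duplicated index $t$ and should be read as $n\sum_{t=1}^{\ubc}\pi_t\phi(c_t)A_t$, i.e.\ the sum over the $n$ agents of the identical per-agent expected payment).

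First I would substitute $P^*_{\ubc}(A)=c_{\ubc}$ and $P^*_t(A)=c_t+\sum_{j=t+1}^{\ubc}\frac{A_j}{A_t}(c_j-c_{j-1})$ for $t<\ubc$. This splits $\sum_t \pi_t P^*_t(A)A_t$ into a ``cost'' term $\sum_{t=1}^{\ubc}\pi_t c_t A_t$ and a ``correction'' term $\sum_{t=1}^{\ubc-1}\pi_t A_t\sum_{j=t+1}^{\ubc}\frac{A_j}{A_t}(c_j-c_{j-1})$, in which the factor $A_t$ cancels, leaving $\sum_{t=1}^{\ubc-1}\pi_t\sum_{j=t+1}^{\ubc}A_j(c_j-c_{j-1})$.

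Next I would interchange the order of the (finite) summations, $\sum_{t=1}^{\ubc-1}\sum_{j=t+1}^{\ubc} = \sum_{j=2}^{\ubc}\sum_{t=1}^{j-1}$, so the correction term becomes $\sum_{j=2}^{\ubc}A_j(c_j-c_{j-1})\sum_{t=1}^{j-1}\pi_t = \sum_{j=2}^{\ubc}A_j(c_j-c_{j-1})\cF(c_{j-1})$, using $\sum_{t=1}^{j-1}\pi_t=\cF(c_{j-1})$. Since $c_0=0$ and the support is $\{c_1,\dots,c_{\ubc}\}$ with every $c_t>0$, we have $\cF(c_0)=0$, so the $j=1$ term is zero and I may re-index the correction term as $\sum_{t=1}^{\ubc}A_t(c_t-c_{t-1})\cF(c_{t-1})$. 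Combining with the cost term and recalling $\pi_t=f(c_t)$ gives
\[
\E_{c\sim\cF}[P^*(c)A(c)] \;=\; \sum_{t=1}^{\ubc}\pi_t A_t\left(c_t + \frac{(c_t-c_{t-1})\,\cF(c_{t-1})}{\pi_t}\right) \;=\; \sum_{t=1}^{\ubc}\pi_t\,\phi(c_t)\,A_t,
\]
and multiplying by the population size $n$ yields the claim.

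There is no genuine obstacle here: the proof is a mechanical rearrangement. The only points requiring care are the boundary cases --- the top type $t=\ubc$, whose price $P^*_{\ubc}(A)=c_{\ubc}$ carries no correction sum, and the index $j=1$ in the swapped sum, which drops out precisely because $\cF(c_0)=0$ --- together with keeping the bookkeeping of the summation swap straight.
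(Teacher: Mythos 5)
Your proof is correct and follows essentially the same route as the paper's: substitute the closed form of $P^*_t(A)$, cancel the $A_t$ factor, swap the order of summation so that $\sum_{t=1}^{j-1}\pi_t = \cF(c_{j-1})$ appears, and recognize the discrete virtual cost $\phi(c_j)$. Your handling of the boundary cases (the top type and the vanishing $j=1$ term via $\cF(c_0)=0$) and your reading of the duplicated index in the statement are both consistent with the paper.
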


\begin{proof}
The expected budget spent on an agent can be written, using the previous claim:
\begin{align*}
\sum_{t = 1}^{\ubc} \pi_t  P^*_t(A) A_t
 &= \sum_{t = 1}^{\ubc} \pi_t c_t A_t+ \sum_{t=1}^{\ubc} \pi_t \sum_{j=t+1}^{\ubc} A_{j} (c_j - c_{j-1})
\\&=  \sum_{t = 1}^{\ubc} \pi_t c_t A_t+ \sum_{j=2}^{\ubc} \sum_{t=1}^{j-1} \pi_t A_{j} (c_j - c_{j-1}) 
\\&= \pi_1 c_1 A_1  + \sum_{j=2}^{\ubc} \left(\pi_j c_j +  (c_j - c_{j-1}) \sum_{t=1}^{j-1} \pi_t\right)  A_j
\\&= \sum_{j=1}^{\ubc} \pi_j  \phi(c_j)  A_j 
\end{align*}
\end{proof}

\subsection{Proof of Theorem \ref{thm:main-moment-estimation}}\label{app:proof-of-moment-estimation}

\subsubsection{A More Structural Characterization}
We begin by giving a strengthening of our main Theorem \ref{thm:main-moment-estimation} that exactly pin-points the optimal allocation rule in a closed form. We then give a proof of this stronger theorem.

\begin{theorem}[Closed Form for Optimal Allocation for Moment Estimation]\label{thm:main-moment-estimation-closed}
The optimal allocation rule $A$ is determined by two constants $\bar{A}$ and $t^*\in \{0,\ldots, \ubc\}$ such that:
\begin{equation}\label{eqn:generic-form-appendix}
A_t = \begin{cases}
\bar{A} & \text{if $t\leq t^*$}\\
\frac{1}{\sqrt{c_t}}\cdot \frac{\bar{B} - \bar{A} \E[c\cdot 1\{c\leq c_{t^*}\}]}{\E[\sqrt{c}\cdot 1\{c>c_{t^*}\}} & \text{o.w.}
\end{cases}
\end{equation}
The parameters $\bar{A}$ and $t^*$ are determined as follows. For $k\in \{0, 1,\ldots,\ubc+1\}$ and $\interp\in [0,1]$ let (for $c_0=0$ and $c_{\ubc+1}=\infty$):
\begin{align*}
Q(k, \interp) =~& \sum_{t=1}^k \pi_t c_t + \sum_{t=k+1}^{\ubc} \pi_t \sqrt{\frac{c_t\cdot c_k}{\interp}} \Label{eqn:q-def} &
R(k, \interp) =~& 2\left(\sum_{t=1}^{k} \pi_t \frac{c_t\cdot \interp}{c_{k}} + \sum_{t=k+1}^{\ubc} \pi_t \right) \Label{eqn:r-def}
\end{align*}
\begin{align*}
~~~~~~~~~~~~~~~~~~~~~~~~~~~~~~~~~~~~~~~~B(k, \interp) =~& \frac{Q(k, \interp)}{R(k, \interp)} \Label{eqn:b-def}~~~~~~~~~~~~~~~~~~~~~~~~~~~~~~~~~~~~~~~~
\end{align*}
Let $k^*$ be the unique $k$ s.t. $B(k,1)\leq \bar{B} < B(k+1,1)$. If $k^*=0$ then $t^*=0$. Otherwise let $\interp^*\in [0,1]$ be the unique solution to: $
\bar{B} = B(k^*, \interp^*)$.\footnote{The latter amounts to solving a simple cubic equation of the form $\frac{A}{\sqrt{x}} + B = x C + D \Leftrightarrow \sqrt{x}^3 C + (D-B)\sqrt{x} - A=0$, which admits a closed form solution.}
If $R(k^*, \interp^*)\geq 1$ then $t^*=k^*$ and $\bar{A}=\frac{1}{R(k^*,\interp^*)}$. If $R(k^*, \interp^*)<1$ then $t^*=\max\{k:\bar{B} > Q(k,1)\}$ and $\bar{A}=1$.
\end{theorem}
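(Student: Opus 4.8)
By Lemma~\ref{lem:reduction-to-opt} it suffices to solve the Analyst's Optimization Problem, and by Lemma~\ref{lem:ht-variance} that problem is exactly
\begin{equation*}
\inf_{A}\ \sup_{\duals\in[0,1]^{\ubc}}\ \sum_{t=1}^{\ubc}\pi_t\frac{\duals_t}{A_t}-\Big(\sum_{t=1}^{\ubc}\pi_t\duals_t\Big)^2
\quad\text{s.t. } A\in(0,1]^{\ubc}\text{ monotone non-increasing, }\textstyle\sum_t\pi_t c_t A_t\le\bar{B}.
\end{equation*}
The plan is to exhibit a saddle point $(A^*,\duals^*)$ of this zero-sum game in closed form and read off $A^*$ as the optimal rule. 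For \emph{Step 1}, I observe the payoff is convex in $A$ (each $1/A_t$ is convex on $(0,1]$), concave in $\duals$ (affine minus the square of an affine form), and continuous on the relevant domain. Taking $\duals=e_t$ shows the objective is at least $\pi_t/A_t-\pi_t^2$, so it blows up as any $A_t\to 0$, whence the infimum is unchanged if $A$ is restricted to a compact subset bounded away from $0$; together with compactness of $[0,1]^{\ubc}$, Sion's minimax theorem yields a saddle point $(A^*,\duals^*)$. For any saddle point, $\sup_{\duals}f(A^*,\duals)$ equals the optimal value, so the $A$-component of a saddle point is an optimal allocation rule; it remains only to construct one explicitly.

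\emph{Step 2: the adversary's best response.} For fixed monotone non-increasing $A$, the adversary maximizes a concave quadratic over the box; its partial derivative in $\duals_t$ is $\pi_t\big(1/A_t-2\mu\big)$ with $\mu=\sum_s\pi_s\duals_s$. Since $1/A_t$ is non-decreasing in $t$, every maximizer has a threshold form: $\duals_t=1$ for $t$ above some index, $\duals_t=0$ below it, and the adversary is indifferent on any block of coordinates with $1/A_t=2\mu$; in particular, if $A$ is flat at level $\bar A=1/(2\mu)$ on a prefix $\{1,\dots,t^*\}$, the adversary may place an arbitrary profile $\duals_t\in[0,1]$ there.

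\emph{Step 3: construct the candidate.} I use the ansatz that $A^*$ is flat on $\{1,\dots,t^*\}$ and $\propto 1/\sqrt{c_t}$ beyond, and guess the adversary's pin-down response $\duals^*_t=x^*c_t/c_{t^*}$ for $t\le t^*$ and $\duals^*_t=1$ for $t>t^*$; the point of this choice is that the analyst's first-order condition $A_t\propto\sqrt{\duals^*_t/c_t}$ then returns a constant on the pooling region. Writing the Lagrangian of $\min_A\sum_t\pi_t\duals^*_t/A_t$ under the binding budget constraint and the caps $A_t\le1$, stationarity gives $A_t=\min\{1,\alpha\sqrt{\duals^*_t/c_t}\}$; the adversary's indifference relation $\bar A=1/(2\mu)$ together with $\mu=\tfrac12 R(t^*,x^*)$ gives $\bar A=1/R(t^*,x^*)$, the budget equation reduces to $\bar{B}=Q(t^*,x^*)/R(t^*,x^*)=B(t^*,x^*)$, and $\alpha$ takes the stated closed form. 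A short algebraic identity shows $B(k,c_k/c_{k+1})=B(k+1,1)$ and that $B(k,\cdot)$ is strictly decreasing, which makes the rule ``pick $k^*$ with $B(k^*,1)\le\bar{B}<B(k^*+1,1)$, then solve $\bar{B}=B(k^*,x^*)$'' well-posed and forces $x^*\ge c_{k^*}/c_{k^*+1}$ --- precisely the inequality needed for $A^*$ to be monotone at the pooling boundary and for $\duals^*=1$ to be the adversary's best response on $\{t^*+1,\dots,\ubc\}$. The branch $R(k^*,x^*)<1$ is the regime where the first-order pooling level $1/R$ would exceed $1$: there the cap $A_t\le1$ binds on the pooling region, $\bar A=1$, the adversary instead places just enough mass on that region to keep $\mu=\tfrac12$, and $t^*$ becomes $\max\{k:\bar{B}>Q(k,1)\}$ (so the budget-balancing $\alpha$ sits above the ``continuous'' value $\sqrt{c_{t^*}}$). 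The degenerate cases --- $k^*=0$ (no pooling, $\duals^*\equiv1$) and large budget ($t^*=\ubc$, $\bar A=\min\{1,\bar{B}/\E[c]\}$) --- are read off directly.

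\emph{Step 4: verification, uniqueness, running time.} With $(A^*,\duals^*)$ in hand I verify that $A^*$ is feasible (monotone, in $(0,1]^{\ubc}$, budget-tight), that $\duals^*\in[0,1]^{\ubc}$ is a best response to $A^*$ via the threshold/indifference conditions of Step 2, and that $A^*$ is a best response to $\duals^*$ via the Lagrangian/KKT conditions of Step 3 (convexity makes these sufficient); hence $(A^*,\duals^*)$ is a saddle point and $A^*$ is optimal. Since $\duals^*_t>0$ for every $t$, all the $1/A_t$ terms are active in a neighborhood of $A^*$, so the worst-case-variance objective is strictly convex there and $A^*$ is the unique optimum. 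Finally $k^*$ is located by binary search over $\{0,\dots,\ubc\}$ using monotonicity of $k\mapsto B(k,1)$ (with prefix sums of $\pi_tc_t$, $\pi_t\sqrt{c_t}$, $\pi_t$ precomputed, each evaluation is $O(1)$), and $x^*$ then solves a cubic in $\sqrt{x}$ in closed form, for $O(\log\ubc)$ total. The step I expect to be the main obstacle is Steps~3--4: correctly guessing the adversary's pin-down strategy and then checking that the analyst's KKT system (binding budget, the $A_t\le1$ caps, and monotonicity simultaneously) holds --- especially the boundary branch $R(k^*,x^*)<1$, where the cap rather than the first-order condition sets the pooling level, and verifying that $t^*$ is consistently defined across the regimes and at the break points $\bar{B}=B(k,1)$.
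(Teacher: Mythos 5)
Your plan is correct and follows essentially the same route as the paper's proof in Appendix~\ref{app:proof-of-moment-estimation}: cast the problem as a convex--concave zero-sum game, characterize both players' best responses via KKT conditions (the adversary's threshold/indifference structure and the analyst's $A_t=\min\{1,\sqrt{\duals_t/(\lambda c_t)}\}$ form), guess the adversary's pin-down response $\duals_t^* = \interp^* c_t/c_{t^*}$ on the pooling prefix and $1$ beyond, and split into the regimes $R(k^*,\interp^*)\geq 1$, $R(k^*,\interp^*)<1$ (where the cap $A_t\le 1$ sets the pooling level and the adversary keeps $2\ldot{\pi}{\duals}=1$), and $\bar{B}<B(1,1)$. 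The only additions beyond the paper are the explicit Sion-type existence argument and the uniqueness remark, neither of which is needed for or changes the argument.
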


\subsubsection{Proof of Theorem \ref{thm:main-moment-estimation-closed}: Optimal Survey for Moment Estimation}

In all that follows, we will drop the monotonicity constraints on $A$. We write 
\begin{equation}
\begin{aligned}
P = \inf_{A\in \left(0,1\right]^{\ubc}}\sup_{\duals \in [0,1]^{\ubc}}~&
\sum_{t=1}^{\ubc} \pi_t \cdot \frac{\duals_t}{A_t} - \left(\sum_{t=1}^{\ubc} \pi_t\cdot \duals_t\right)^2\\
\mathrm{s.t.}~& \sum_{t=1}^{\ubc} \pi_t \cdot c_t \cdot A_t  \leq \bar{B}\\
\end{aligned}
\end{equation}
We will show that nevertheless, the solution to this optimization program satisfies the monotonicity constraint in $A$, hence dropping the constraint can be done without loss of generality. For further simplification of notation, for any two vectors $x, y$ we let $x\cmult y$ be their component-wise product vector, $x\cdiv y$ their component-wise division  and $\ldot{x}{y}$ their inner product. Finally  we denote with $x_{i:k}$ the sub-vector $(x_i, x_{i+1},\ldots, x_k)$. Thus we can write the objective function inside the minimax problem as:
\begin{equation}
V(A, \duals)=\ldot{\pi}{\duals\cdiv A} - \ldot{\pi}{\duals}^2
\end{equation}
where $\pi$ is the pdf vector and we can write the budget constraint as $\langle \pi \cmult c, A\rangle\leq \bar{B}$.

Rather than solving for simply the optimal solution for $A$ in the latter minimax problem, we will instead address the harder problem of finding an equilibrium of the zero-sum game $\cG$ associated with this minimax, i.e. a game where the minimizer player is choosing $A$ and the maximizing player is choosing $\duals$ and the utility is $V(A,\duals)$. An equilibrium of this game is then defined as:
\begin{defn}[Equilibrium Pair] A pair of solutions $(A^*, \duals^*)$ is an equilibrium if:
\begin{align}
V(A^*, \duals^*) =~& \inf_{A \in [0,1]^{\ubc}: \ldot{\pi \cmult c}{A}  \leq \bar{B}} V(A, \duals^*)=\sup_{\duals \in [0,1]^{\ubc}} V(A^*, \duals)
\end{align}
\end{defn}
Observe that the function $V(A,\duals)$ is convex in $A$ and concave in $\duals$, hence it defines a convex-concave zero-sum game. From standard results on zero-sum games, if $(A^*, \duals^*)$ is an equilibrium solution, then $A^*$ is a solution to the minimax problem $P$ that we are interested in (see e.g. \cite{Freund1999}), since:
\begin{align*}
\inf_{A} V(A,\duals^*) \leq \inf_{A } \sup_{\duals} V(A,\duals) \leq \sup_{\duals} V(A^*,\duals) = V(A^*,\duals^*) = \inf_{A} V(A,\duals^*)
\end{align*}
directly implying $\sup_{z} V(A^*, z) = \inf_{A } \sup_{\duals} V(A, \duals)$.

\subsubsection{Characterizing the best responses of the minimizing and maximizing player}

In this paragraph we characterize the best-responses of the minimizing and maximizing players in the zero-sum game formulation of our problem.

\begin{lemma}[Best Response of Min Player]\label{lem: best-resp-A}
Fix $\duals$ such that $\duals_t > 0$ for all $t$. Let $\lambda^* > 0$ be such that
\begin{align}\label{eqn:bs-binding-budget}
\sum_{t = 1}^{\ubc} \pi_t\cdot c_t \cdot \min \left(1,\sqrt{\frac{\duals_t}{\lambda^* c_t}} \right) = \bar{B},
\end{align}
Then the allocation rule $A$ such that $A_t = \min \left(1,\sqrt{\frac{\duals_t}{\lambda^* c_t}} \right)$ is a best-response to $\duals$ in game $\cG$.
\end{lemma}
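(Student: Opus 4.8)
The plan is to first remove the $A$-independent term: since $V(A,\duals)=\ldot{\pi}{\duals\cdiv A}-\ldot{\pi}{\duals}^2$ and the second summand does not involve $A$, the min player's best-response problem is equivalent to the convex program $\min_{A\in(0,1]^{\ubc}}\sum_{t}\pi_t\duals_t/A_t$ subject to $\ldot{\pi\cmult c}{A}\le\bar B$. I would observe that this objective is strictly convex, separable, and strictly decreasing in each $A_t$ (this is where $\duals_t>0$ is used), so that (i) the budget constraint is tight at every optimum, and (ii) the objective tends to $+\infty$ as any $A_t\to 0^+$, hence the infimum is attained at a point with all coordinates bounded away from $0$; we may therefore restrict to a compact feasible set on which the KKT machinery is legitimate. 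Slater's condition holds — for instance $A_t\equiv\bar B/(2\E[c])$ lies strictly inside $(0,1)$ and is strictly budget-feasible, using $\bar B<\E[c]$ — so KKT conditions are both necessary and sufficient for global optimality.

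Next I would form the Lagrangian $\mathcal{L}(A,\lambda,\mu)=\sum_t\pi_t\duals_t/A_t+\lambda\bigl(\ldot{\pi\cmult c}{A}-\bar B\bigr)+\sum_t\mu_t(A_t-1)$ with $\lambda\ge 0$, $\mu_t\ge 0$. Stationarity in $A_t$ reads $-\pi_t\duals_t/A_t^2+\lambda\pi_t c_t+\mu_t=0$. If $\lambda=0$ this would force the $A_t$ unboundedly large, contradicting feasibility, so $\lambda>0$ and the budget constraint binds. Complementary slackness forces $\mu_t=0$ whenever $A_t<1$, giving $A_t=\sqrt{\duals_t/(\lambda c_t)}$ in that case; when $A_t=1$, the sign condition $\mu_t=\pi_t\duals_t-\lambda\pi_t c_t\ge 0$ is exactly $\sqrt{\duals_t/(\lambda c_t)}\ge 1$. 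Both cases are summarized by $A_t=\min\!\bigl(1,\sqrt{\duals_t/(\lambda c_t)}\bigr)$, which automatically lies in $(0,1]$.

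It then remains to identify $\lambda$. Substituting this candidate into the (now tight) budget constraint yields precisely equation~\eqref{eqn:bs-binding-budget} with $\lambda=\lambda^*$. I would argue that $\lambda\mapsto\sum_t\pi_t c_t\min\!\bigl(1,\sqrt{\duals_t/(\lambda c_t)}\bigr)$ is continuous and non-increasing, equals $\sum_t\pi_t c_t=\E[c]>\bar B$ for all sufficiently small $\lambda$, tends to $0$ as $\lambda\to\infty$, and is strictly decreasing on the range where it lies strictly below $\E[c]$; hence there is a unique $\lambda^*>0$ solving \eqref{eqn:bs-binding-budget}. The resulting allocation $A$, together with $\lambda^*$ and the multipliers $\mu_t$ read off from stationarity, satisfies primal feasibility, dual feasibility, complementary slackness, and stationarity, so by convexity $A$ is the global minimizer of $V(\cdot,\duals)$ over the feasible set, i.e. a best response in $\cG$.

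The only step I expect to require genuine care is the bookkeeping around existence and uniqueness — verifying that the infimum over the half-open box $(0,1]^{\ubc}$ is actually attained (so KKT applies without an extra lower-bound multiplier) and that the budget-balancing $\lambda^*$ is well defined and unique — rather than any conceptual difficulty; once the $A$-independent term is discarded the rest is a textbook application of KKT to a separable convex program.
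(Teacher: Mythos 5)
Your proposal is correct and follows essentially the same route as the paper's proof: drop the $A$-independent term, argue the optimum is attained away from $0$ so KKT applies, and read off $A_t=\min\bigl(1,\sqrt{\duals_t/(\lambda^* c_t)}\bigr)$ from stationarity and complementary slackness with the budget binding (your added verification that a budget-balancing $\lambda^*$ exists and is unique is a harmless extra). One tiny quibble: with the box constraint $A_t\le 1$ present, $\lambda=0$ does not make the $A_t$ unbounded but rather forces $\mu_t=\pi_t\duals_t/A_t^2>0$ and hence $A_t=1$ for all $t$, which is what contradicts the budget since $\E[c]>\bar{B}$ — the conclusion $\lambda>0$ stands either way.
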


\begin{proof}
Fix $\duals$ . Note that when $A_t$ goes to $0$ for any $t$, the objective values tends to infinity, but the optimal solution is clearly finite (simply splitting the budget evenly among cost types $c_t$ is feasible and leads to a finite objective value). This implies that a best-response exists for the analyst: indeed, the objective is convex and continuous on $(0,1]^{\ubc}$, and the feasible set can be restricted without loss of generality to be compact and convex by adding the constraints $A_t \geq \gamma$ for a small enough $\gamma > 0$. The Lagrangian (for more on Lagrangians and KKT conditions, see~\cite{Boyd04}) of the minimization problem solved by the minimizing player is therefore given by:
\begin{align*}
\cL(A,\lambda, \lambda_t^1) =  V(A, \duals)+ \lambda \sum_{t} (\pi_t c_t A_t -\bar{B}) + \sum_{t=1}^{\ubc} \lambda_t^1 (A_t -1) 
\end{align*}
where $\lambda,\lambda_t^1 \geq 0$. Let 
$(\lambda^*,\lambda_t^{1*})$ denote optimal dual variables and $A$ an optimal primal variable, we have that $\frac{\partial \cL(A,\lambda^*,\lambda_t^{1*})}{\partial A_t} = 0$ for all $t$ by the KKT conditions, implying
\begin{align*}
- \pi_t \frac{\duals_t}{A_t^2} + \lambda^* \pi_t c_t + \lambda_t^{1*}  = 0 \Rightarrow 
A_t= \sqrt{\frac{\pi_t \duals_t}{\lambda^* \pi_t c_t + \lambda_t^{1*} }}
\end{align*}
where we note that the denominator is non-zero as we have $c_t > 0$ and $\lambda_t^{1*} \geq 0$. Further, if $A < 1$, we must  also have by the KKT conditions that $\lambda_t^{1*} = 0$. This directly implies that at a best response, we must have
$A_t = \min\left(1,\sqrt{\frac{\duals_t}{\lambda^* c_t}}\right)$, 
for some $\lambda^* \geq 0$. Since the budget constraint will always be binding (as increasing the allocation probability can only help the variance), $\lambda^*$ must be solving Equation \eqref{eqn:bs-binding-budget}. The latter concludes the proof of the Lemma.
\end{proof}

This gives the best response of the minimizing player. The best response of the maximizing player can be obtained by similar techniques: 
\begin{lemma}[Best-Response of Max Player]\label{lem: best-resp-zeta}
Fix $A$ such that $A_j > 0$ for all $j$. Take any $\duals \in [0,1]^{\ubc}$ such that for every $j$, at least one of the following holds:
\begin{enumerate}
\item $\duals_j = 0$ and $\frac{1}{A_j} < 2 \ldot{\pi}{\duals}$
\item $\duals_j = 1$ and $\frac{1}{A_j} > 2  \ldot{\pi}{\duals}$
\item $0 \leq \duals_j \leq 1$ and $\frac{1}{A_j} = 2 \ldot{\pi}{\duals} $
\end{enumerate}
Then $\duals$ is a best response to $A$ in game $\cG$.
\end{lemma}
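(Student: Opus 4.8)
The plan is to treat this as a first-order optimality statement for the concave maximization $\sup_{\duals\in[0,1]^{\ubc}}V(A,\duals)$, where $V(A,\duals)=\ldot{\pi}{\duals\cdiv A}-\ldot{\pi}{\duals}^2$. For a fixed allocation $A>0$, the map $\duals\mapsto V(A,\duals)$ is concave: the term $\ldot{\pi}{\duals\cdiv A}=\sum_t\pi_t\duals_t/A_t$ is linear in $\duals$, while $-\ldot{\pi}{\duals}^2$ is the negative of the square of a linear form, hence concave, and a sum of a linear and a concave function is concave. The feasible set $[0,1]^{\ubc}$ is convex and compact, so a maximizer exists, and by the gradient inequality for concave functions a feasible $\duals^*$ is a global maximizer \emph{if and only if} $\ldot{\nabla_\duals V(A,\duals^*)}{\duals-\duals^*}\le 0$ for every $\duals\in[0,1]^{\ubc}$. (Equivalently, one can invoke the KKT conditions with nonnegative multipliers for $\duals_j\ge 0$ and $\duals_j\le 1$, which are both necessary and sufficient here since the constraints are affine and the objective concave; see \cite{Boyd04}.)

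Next I would compute $\frac{\partial}{\partial\duals_j}V(A,\duals)=\frac{\pi_j}{A_j}-2\pi_j\ldot{\pi}{\duals}=\pi_j\bigl(\tfrac{1}{A_j}-2\ldot{\pi}{\duals}\bigr)$, and observe that, since $[0,1]^{\ubc}$ is a product of intervals, the variational inequality decouples coordinate by coordinate. For each $j$ it becomes: if $\duals^*_j=0$ then $\tfrac{\partial}{\partial\duals_j}V(A,\duals^*)\le 0$; if $\duals^*_j=1$ then $\tfrac{\partial}{\partial\duals_j}V(A,\duals^*)\ge 0$; and if $0<\duals^*_j<1$ then $\tfrac{\partial}{\partial\duals_j}V(A,\duals^*)=0$. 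Because $\pi_j>0$, dividing through by $\pi_j$ converts these into $\tfrac{1}{A_j}\le 2\ldot{\pi}{\duals^*}$, $\tfrac{1}{A_j}\ge 2\ldot{\pi}{\duals^*}$, and $\tfrac{1}{A_j}=2\ldot{\pi}{\duals^*}$, respectively.

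Finally I would check that each of the three cases in the hypothesis implies its corresponding coordinatewise condition: case~1 gives $\duals_j=0$ together with $\tfrac{1}{A_j}<2\ldot{\pi}{\duals}$, which implies the required $\tfrac{1}{A_j}\le 2\ldot{\pi}{\duals}$; case~2 is the symmetric statement at $\duals_j=1$; and case~3, namely $\tfrac{1}{A_j}=2\ldot{\pi}{\duals}$ with $\duals_j\in[0,1]$, satisfies whichever of the three coordinate conditions is active, since the equality implies both weak inequalities and is exactly the interior condition. Hence every coordinate of $\duals$ satisfies its optimality condition, so $\duals$ solves the variational inequality and is a global maximizer of $V(A,\cdot)$ over $[0,1]^{\ubc}$, i.e.\ a best response to $A$ in $\cG$.

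The argument is essentially routine; the only genuine points of care are (i) justifying that the first-order variational inequality is \emph{sufficient}, not merely necessary, for global optimality — this rests solely on concavity of $V(A,\cdot)$ in $\duals$ — and (ii) noting that the strict inequalities in hypotheses~1 and~2 are stronger than what is actually needed, so the ``at least one of'' phrasing is harmless, a box-boundary coordinate being equally certifiable via case~3. No convexity or monotonicity subtleties from the allocation side enter, since $A$ is held fixed throughout.
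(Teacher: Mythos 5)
Your proof is correct and follows essentially the same route as the paper's: both arguments rest on concavity of $\duals \mapsto V(A,\duals)$ to upgrade first-order conditions from necessary to sufficient, compute the same gradient $\pi_j(\tfrac{1}{A_j} - 2\ldot{\pi}{\duals})$, and verify the three cases coordinatewise over the box $[0,1]^{\ubc}$. The only cosmetic difference is that you phrase optimality as a variational inequality while the paper exhibits explicit Lagrange multipliers $\lambda_t^0,\lambda_t^1$; these are equivalent, as you note.
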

\begin{proof}
Fix $A$. The Lagrangian of the maximization problem solved by the maximizing player is given by: 
\begin{align*}
\cL(\duals,\lambda_t^0, \lambda_t^1) = V(A,\duals)+ \sum_{t=1}^{\ubc} \lambda_t^1 (1-\duals_t) + \sum_{t=1}^{\ubc} \lambda_t^0  \duals_t
\end{align*}
For optimal primal and dual variables, the KKT conditions are given by: for all $t\in \{1,\ldots, \ubc\}$
\begin{itemize}
\item  \emph{First order.} $\frac{\partial \cL(\duals,\lambda_t^{0}, \lambda_t^{1}) }{\partial \duals_t}= 0$, which can be rewritten as: 
$
\frac{\pi_t}{A_t} - 2 \pi_t \ldot{\pi}{\duals} - \lambda_t^{1} + \lambda_t^{0}  = 0 
$.
\item \emph{Feasibility.} The variables $\duals_t, \lambda_t^1, \lambda_t^0$ are feasible, i.e. $\duals_t\in [0,1]$ and $\lambda_t^{0}, \lambda_t^{1} \geq 0$
\item \emph{Complementarity.} Either $\lambda_t^{0}= 0$ or $\duals_t = 0$.  Either $\lambda_t^{1} = 0$ or $\duals_t = 1$
\end{itemize}
Because the objective value is convex and differentiable in $\duals$, the KKT conditions are necessary and sufficient for optimality of the primal and dual variables (see~\cite{Boyd04}). It is therefore enough to show that $(\duals_t,\lambda_t^0,\lambda_t^1)$ satisfies the KKT conditions for some well-chosen $\duals_t, \lambda_t^0,\lambda_t^1$. If $\frac{1}{A_t} < 2\ldot{\pi}{\duals}$, then $\duals_t=0$, $\lambda_t^1 = 0$ and $\lambda_t^0 = - \frac{\pi_t}{A_t} + 2 \pi_t \ldot{\pi}{\duals} \geq 0$ is a solution. If $\frac{1}{A_t}>2\ldot{\pi}{\duals}$ then $\duals_t=1$, $\lambda_j^0 = 0$ and $\lambda_j^1 = \frac{\pi_j}{A_j} - 2 \pi_j \ldot{\pi}{\duals} \geq 0$ is a solution. Otherwise if $\frac{1}{A_t}=2\ldot{\pi}{\duals}$ then $\lambda_t^0 = \lambda_t^1 = 0$ and any feasible value for $\duals_t$ that respects the equality $\frac{1}{A_t}=2\ldot{\pi}{\duals}$ is a solution.
\end{proof}

\subsubsection{Properties of functions $Q(k,\interp)$, $R(k,\interp)$ and $B(k,\interp)$}

Before proceeding to the main proof of the Theorem, we show some useful properties of $Q(k,\interp)$, $R(k,\interp)$ and $B(k,\interp)$. 
\begin{claim}
For all $k < \ubc$, $R(k,1) \geq R(k+1,1)$, $Q(k,1) \leq Q(k+1,1)$ and $B(k,1) \leq B(k+1,1)$. Therefore, the exists a unique $k$ such that $B(k,1) \leq \bar{B} < B(k+1,1)$.
\end{claim}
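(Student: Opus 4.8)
The plan is to establish the three monotonicity statements by comparing consecutive terms directly, and then to read off the existence and uniqueness of the threshold $k^*$ from these monotonicities together with the boundary values $B(0,1)=0$ and $B(\ubc+1,1)=+\infty$ (the latter via the conventions $c_0=0$ and $c_{\ubc+1}=\infty$).

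First I would compare $R(k,1)$ and $R(k+1,1)$ using \eqref{eqn:r-def}. Going from $k$ to $k+1$, the index $t=k+1$ leaves the tail sum $\sum_{t\ge k+1}\pi_t$ (where it contributes $\pi_{k+1}$) and enters the head sum (where it contributes $\pi_{k+1}c_{k+1}/c_{k+1}=\pi_{k+1}$, so there is no net change), while every head term $\pi_t c_t/c_k$ with $t\le k$ becomes $\pi_t c_t/c_{k+1}$; since $0<c_k<c_{k+1}$, this gives $R(k,1)-R(k+1,1)=2\sum_{t=1}^{k}\pi_t c_t\big(\tfrac1{c_k}-\tfrac1{c_{k+1}}\big)\ge 0$. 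The comparison for $Q$ from \eqref{eqn:q-def} is analogous bookkeeping: $Q(k+1,1)-Q(k,1)=(\sqrt{c_{k+1}}-\sqrt{c_k})\big(\pi_{k+1}\sqrt{c_{k+1}}+\sum_{t=k+2}^{\ubc}\pi_t\sqrt{c_t}\big)\ge 0$, again using $c_k<c_{k+1}$. The degenerate endpoints (the empty head sum when $k=0$, with $c_0=0$, and the empty tail sum when $k+1=\ubc$) are covered by the same formulas. Since $\pi_t=f(c_t)>0$ on the support and the $c_t$ are strictly increasing, the $Q$-inequality is in fact strict, but only the weak versions are needed below.

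Next, for $B(k,1)\le B(k+1,1)$ with $k<\ubc$, I would note that $R(k,1)>0$ for every $0\le k\le\ubc$ (for $k\ge 1$ the head sum already contains $\pi_1 c_1/c_k>0$, and $R(0,1)=2$) and that $Q(k,1)\ge 0$, so $B(k,1)=\tfrac{Q(k,1)}{R(k,1)}\le\tfrac{Q(k,1)}{R(k+1,1)}\le\tfrac{Q(k+1,1)}{R(k+1,1)}=B(k+1,1)$, the first step using $R(k+1,1)\le R(k,1)$ together with $Q(k,1)\ge 0$ and the second using $Q(k,1)\le Q(k+1,1)$ together with $R(k+1,1)>0$. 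Finally, for existence and uniqueness of $k^*$, one checks $B(0,1)=Q(0,1)/R(0,1)=0\le\bar{B}$ and $B(\ubc+1,1)=Q(\ubc+1,1)/R(\ubc+1,1)=\E[c]/0=+\infty>\bar{B}$, so the non-decreasing list $B(0,1)\le\cdots\le B(\ubc,1)<B(\ubc+1,1)$ straddles $\bar{B}$; taking $k^*=\max\{k\in\{0,\dots,\ubc\}:B(k,1)\le\bar{B}\}$ (nonempty since $0$ qualifies, and $\le\ubc$ since $B(\ubc+1,1)>\bar{B}$) yields $B(k^*,1)\le\bar{B}<B(k^*+1,1)$, and if two indices $k_1<k_2$ both had this property then $k_1+1\le k_2$ and monotonicity would force $\bar{B}<B(k_1+1,1)\le B(k_2,1)\le\bar{B}$, a contradiction. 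I do not anticipate any real obstacle; the only points requiring care are correctly handling the empty sums and the conventions at $k=0$ and $k=\ubc$, and confirming $R(k,1)>0$ so that the quotient $B(k,1)$ is well defined — both routine.
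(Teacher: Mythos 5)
Your proof is correct and follows essentially the same route as the paper: the same term-by-term comparison yielding $R(k,1)-R(k+1,1)=2\bigl(\tfrac{1}{c_k}-\tfrac{1}{c_{k+1}}\bigr)\sum_{t\le k}\pi_t c_t\ge 0$ and $Q(k+1,1)-Q(k,1)=(\sqrt{c_{k+1}}-\sqrt{c_k})\sum_{t\ge k+1}\pi_t\sqrt{c_t}\ge 0$. In fact you are slightly more complete than the paper, which stops after these two computations and leaves the $B$-monotonicity and the existence/uniqueness of $k^*$ (via $B(0,1)=0$, $B(\ubc+1,1)=+\infty$, and $R(k,1)>0$) implicit; your filling in of those steps is sound.
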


\begin{proof} By expanding:
\begin{align*}
R(k,1)- R(k+1,1) &= \sum_{t=1}^k \pi_t \frac{c_t}{c_k} + \sum_{i=k+1}^{\ubc} \pi_t  - \sum_{t=1}^{k+1} \pi_t \frac{c_t}{c_{k+1}} - \sum_{t=k+2}^{\ubc} \pi_t = \left(\frac{1}{c_k} - \frac{1}{c_{k+1}}\right) \sum_{t=1}^k \pi_t c_t  \geq 0
\end{align*}
Additionally, 
\begin{align*}
Q(k+1,1) - Q(k,1) & = \sum_{t=1}^{k+1} \pi_t c_t - \sum_{t=1}^k \pi_t c_t + \sum_{t=k+2}^{\ubc} \pi_t \sqrt{c_{k+1} c_t} - \sum_{t=k+1}^{\ubc} \pi_t \sqrt{c_k c_t} 
\\& = (\sqrt{c_{k+1}}-\sqrt{c_k}) \sum_{t=k+1}^{\ubc} \pi_t \sqrt{c_t} \geq 0
\end{align*}
This concludes the proof.
\end{proof}

\begin{claim}\label{clm: property_zeta}
There is a unique $\interp^*\in \left[\frac{c_{k^*}}{c_{k^*+1}},1\right]$ such that $\bar{B} = B(k^*,\interp^*)$ for $B(1,1) \leq \bar{B} < B(\ubc,1)$. 
\end{claim}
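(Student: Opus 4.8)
The plan is to show that the univariate map $x \mapsto B(k^*, x)$ is continuous and strictly decreasing on $\bigl[\tfrac{c_{k^*}}{c_{k^*+1}}, 1\bigr]$, to identify its values at the two endpoints, and then to conclude by the intermediate value theorem. First I would record that $k^*$ is well defined under the stated hypothesis: the previous claim gives $B(1,1) \le B(2,1) \le \cdots \le B(\ubc,1)$, so $B(1,1) \le \bar B < B(\ubc,1)$ forces the unique index $k^*$ with $B(k^*,1) \le \bar B < B(k^*+1,1)$ to satisfy $1 \le k^* \le \ubc - 1$. In particular $c_{k^*+1}$ is defined and $\tfrac{c_{k^*}}{c_{k^*+1}} < 1$, and both index ranges $\{1,\dots,k^*\}$ and $\{k^*+1,\dots,\ubc\}$ are nonempty, so the quantities $\sum_{t=1}^{k^*}\pi_t c_t$ and $\sum_{t=k^*+1}^{\ubc}\pi_t\sqrt{c_t c_{k^*}}$ are strictly positive.

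The crucial algebraic step is the identity $B\bigl(k, \tfrac{c_k}{c_{k+1}}\bigr) = B(k+1,1)$ for every $k < \ubc$. Substituting $x = \tfrac{c_k}{c_{k+1}}$ into the definition of $Q$ in \eqref{eqn:q-def} turns $\sqrt{c_t c_k / x}$ into $\sqrt{c_t c_{k+1}}$, and a one-line comparison shows $Q(k+1,1) - Q\bigl(k,\tfrac{c_k}{c_{k+1}}\bigr) = \pi_{k+1} c_{k+1} - \pi_{k+1}\sqrt{c_{k+1}c_{k+1}} = 0$; similarly, substituting $x = \tfrac{c_k}{c_{k+1}}$ into \eqref{eqn:r-def} gives $R(k+1,1) - R\bigl(k,\tfrac{c_k}{c_{k+1}}\bigr) = 2\bigl(\pi_{k+1}\tfrac{c_{k+1}}{c_{k+1}} - \pi_{k+1}\bigr) = 0$. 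Hence the ratios $B$ agree. In particular $B\bigl(k^*, \tfrac{c_{k^*}}{c_{k^*+1}}\bigr) = B(k^*+1,1) > \bar B$, while $B(k^*,1) \le \bar B$.

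Finally I would check monotonicity on the interval: $x \mapsto Q(k^*,x)$ is continuous and, since its $x$-dependent sum is nonempty with positive terms (each a positive multiple of $x^{-1/2}$), strictly decreasing; $x \mapsto R(k^*,x)$ is continuous, strictly positive, and, since $\sum_{t=1}^{k^*}\pi_t c_t > 0$, strictly increasing (it is affine in $x$ with positive slope). Therefore $B(k^*,x) = Q(k^*,x)/R(k^*,x)$ is continuous and strictly decreasing on $\bigl[\tfrac{c_{k^*}}{c_{k^*+1}}, 1\bigr]$, so it maps this closed interval bijectively onto $\bigl[B(k^*,1), B(k^*+1,1)\bigr]$. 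Since $\bar B \in \bigl[B(k^*,1), B(k^*+1,1)\bigr)$, there is exactly one $x^* \in \bigl[\tfrac{c_{k^*}}{c_{k^*+1}}, 1\bigr]$ with $B(k^*,x^*) = \bar B$, with $x^* = 1$ precisely in the boundary case $\bar B = B(k^*,1)$.

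The only nonroutine point is the telescoping identity $B\bigl(k,\tfrac{c_k}{c_{k+1}}\bigr) = B(k+1,1)$, which glues the monotonicity analysis of $B(k^*,\cdot)$ to the indexing of the $B(k,1)$ used to define $k^*$; once that substitution and cancellation are carried out carefully, the remainder is a standard intermediate-value argument.
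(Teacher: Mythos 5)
Your proof is correct and takes essentially the same route as the paper's: both rest on the monotonicity of $x \mapsto B(k^*,x)$ together with the endpoint identity $B\bigl(k^*,\tfrac{c_{k^*}}{c_{k^*+1}}\bigr) = B(k^*+1,1)$ and the intermediate value theorem. You simply supply the details (the explicit cancellation in $Q$ and $R$ giving the telescoping identity, and the strict monotonicity of $Q$ and $R$ on the relevant index ranges) that the paper labels as trivial or merely asserts.
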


\begin{proof}
Observe that $B(k^*,\interp)$ is continuous decreasing in $\interp$, proving uniqueness of a solution if it exists. Existence within $\left[\frac{c_{k^*}}{c_{k^*+1}},1\right]$ follows by noting that $B(k^*,\interp)$ is continuous decreasing in $\interp$, and that trivially $B(k^*,1) \leq \bar{B} <  B(k^*+1,1) = B\left(k^*,\frac{c_{k^*}}{c_{k^*+1}}\right)$, proving the result for $B(1,1) \leq \bar{B} < B(\ubc,1)$. 
\end{proof}

\subsubsection{Proof of the Theorem} 
We split the proof of the theorem in the two corresponding cases when $B \geq B(1,1)$ i.e. $t^* \geq 1$, then deal with the corner case when $B < B(1,1)$.

\begin{lemma}[Case 1: $R(k^*,\interp^*)\geq 1$, $B \geq B(1,1)$] 
For this case, $t^*=k^*$ and $\bar{A}=\frac{1}{R(k^*,\interp^*)}= \frac{\bar{B}}{Q(k^*,\interp^*)}$.
\end{lemma}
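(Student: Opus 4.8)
The plan is to exhibit an explicit equilibrium $(A^*, \duals^*)$ of the zero-sum game $\cG$ and then invoke the fact, recorded above, that the $A$-component of any equilibrium solves the minimax problem $P$. Reading off the two best-response characterizations, I would set $t^* \triangleq k^*$ and define, for $t \le k^*$, $\duals^*_t \triangleq \frac{c_t\,\interp^*}{c_{k^*}}$ and $A^*_t \triangleq \bar A \triangleq \frac{1}{R(k^*,\interp^*)}$, and for $t > k^*$, $\duals^*_t \triangleq 1$ and $A^*_t \triangleq \frac{\alpha}{\sqrt{c_t}}$ with $\alpha \triangleq \big(\bar B - \bar A \sum_{t\le k^*}\pi_t c_t\big)\big/\sum_{t>k^*}\pi_t\sqrt{c_t}$. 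The first task is bookkeeping: since $\interp^*$ is defined by $\bar B = B(k^*,\interp^*) = Q(k^*,\interp^*)/R(k^*,\interp^*)$ we have $\bar B = \bar A\,Q(k^*,\interp^*)$, and expanding $Q$ yields the clean identity $\alpha = \bar A\sqrt{c_{k^*}/\interp^*}$; also $\langle \pi\cmult c, A^*\rangle = \bar A\sum_{t\le k^*}\pi_t c_t + \alpha\sum_{t>k^*}\pi_t\sqrt{c_t} = \bar B$ directly from the definition of $\alpha$, so the budget binds; and $2\langle \pi, \duals^*\rangle = R(k^*,\interp^*) = 1/\bar A$, which is the quantity governing the maximizer's best response.

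Next I would check that $A^*$ is the minimizer's best response to $\duals^*$ via Lemma~\ref{lem: best-resp-A}. Feasibility: $A^* \in (0,1]^{\ubc}$, where $\bar A\le 1$ is exactly the case hypothesis $R(k^*,\interp^*)\ge 1$, and for $t>k^*$ one has $A^*_t = \bar A\sqrt{c_{k^*}/(\interp^* c_t)}\le \bar A\le 1$. Monotonicity is immediate within each region, and the boundary inequality $A^*_{k^*+1}\le A^*_{k^*}=\bar A$ reduces to $\interp^*\ge c_{k^*}/c_{k^*+1}$, which holds by Claim~\ref{clm: property_zeta}. Taking $\lambda^* \triangleq 1/\alpha^2 > 0$ and using $\alpha = \bar A\sqrt{c_{k^*}/\interp^*}$, a one-line computation shows $\min\!\big(1,\sqrt{\duals^*_t/(\lambda^* c_t)}\big)$ equals $\bar A$ for $t\le k^*$ and $\alpha/\sqrt{c_t}$ for $t>k^*$, i.e.\ exactly $A^*_t$ (the clipping at $1$ being active nowhere); since the budget binds, Lemma~\ref{lem: best-resp-A} applies.

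Then I would check that $\duals^*$ is the maximizer's best response to $A^*$ via Lemma~\ref{lem: best-resp-zeta}, verifying the trichotomy coordinate by coordinate with $2\langle\pi,\duals^*\rangle = 1/\bar A$. For $t\le k^*$: $\duals^*_t\in(0,1]$ and $1/A^*_t = 1/\bar A = 2\langle\pi,\duals^*\rangle$, so case~3 holds. For $t>k^*$: $\duals^*_t = 1$, and $1/A^*_t = \sqrt{c_t}/\alpha \ge 1/\bar A$ is equivalent to $c_t\ge c_{k^*}/\interp^*$; for $t=k^*+1$ this is again $\interp^*\ge c_{k^*}/c_{k^*+1}$ (Claim~\ref{clm: property_zeta}), so it holds with possible equality (case~3, which permits $q_j=1$), and for $t>k^*+1$ it is strict (case~2). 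Hence $(A^*,\duals^*)$ is an equilibrium, so $A^*$ solves $P$; being monotone non-increasing it also solves the optimization problem with the monotonicity constraint, and it has the form~\eqref{eqn:generic-form-appendix} with $t^*=k^*$ and $\bar A = 1/R(k^*,\interp^*) = \bar B/Q(k^*,\interp^*)$ (the last equality from $\bar B = Q/R$). The degenerate case $k^*=\ubc$ is handled identically with empty ``$t>k^*$'' sums and $\interp^* = c_{\ubc}/(2\bar B)\in(0,1]$.

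The proof contains no deep step; the work is in choosing the right $\duals^*$ so that the definitions of $Q$, $R$, $B$ become $R = 2\langle\pi,\duals^*\rangle$ and $Q = \langle\pi\cmult c, A^*\rangle/\bar A$, after which every verification collapses to the two facts $\bar A\le 1$ (the case hypothesis) and $\interp^*\ge c_{k^*}/c_{k^*+1}$ (Claim~\ref{clm: property_zeta}). The one place to be careful is routing the knife-edge coordinate $t=k^*+1$ into case~3 rather than case~2 of Lemma~\ref{lem: best-resp-zeta}, since there $1/A^*_t$ may equal $2\langle\pi,\duals^*\rangle$ exactly.
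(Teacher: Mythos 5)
Your proposal is correct and follows essentially the same route as the paper's proof: it exhibits the same equilibrium pair $(A^*,\duals^*)$, verifies both best responses via Lemmas~\ref{lem: best-resp-A} and~\ref{lem: best-resp-zeta}, and relies on the same two facts ($R(k^*,\interp^*)\geq 1$ and $\interp^*\geq c_{k^*}/c_{k^*+1}$ from Claim~\ref{clm: property_zeta}); your $\lambda^*=1/\alpha^2$ is numerically identical to the paper's choice. Your explicit routing of the knife-edge coordinate $t=k^*+1$ into case~3 is in fact slightly more careful than the paper's write-up, which states the non-strict inequality without flagging that case~2 of Lemma~\ref{lem: best-resp-zeta} requires strictness.
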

\begin{proof}
Let $\duals$ be such that $\duals_t = \frac{c_t}{c_{k^*}} \interp^*$ for $t \leq k^*$, $\duals_{k^*+1} = \ldots = \duals_{\ubc} = 1 $. We show that $A$ defined by the parameters $\bar{A}$ and $t^*$ given in the lemma, is a best response to $\duals$ and vice-versa. 

\emph{First, let us show that $\duals$ is a best response to $A$.} For $j \leq k^*$, we note that $0 \leq \duals_j \leq 1$ and we have
$$
2 \ldot{\pi}{\duals} = 2 \left( \frac{\interp^*}{c_{k^*}} \sum_{t=1}^{k^*} \pi_t c_t + \sum_{t=k^*+1}^{\ubc} \pi_t \right) = R(k^*,\interp^*)= \frac{1}{A_j}
$$
For $t \geq k^*+1$ (when such a case exists, i.e. when $k^* < \ubc$), $\duals_j = 1$. Moreover, the allocation takes the form:
\begin{equation*}
A_t = \frac{1}{\sqrt{c_t}}\cdot \frac{\bar{B} - \bar{A} \E[c\cdot 1\{c\leq c_{t^*}\}]}{\E[\sqrt{c}\cdot 1\{c>c_{t^*}\}]} = \frac{\bar{B}}{\sqrt{c_t}}\cdot \frac{1 - \frac{1}{Q(k^*, \interp^*)} \E[c\cdot 1\{c\leq c_{t^*}\}]}{\E[\sqrt{c}\cdot 1\{c>c_{t^*}\}]}
 = \frac{\bar{B}}{\sqrt{c_t}}\cdot \frac{\sqrt{\frac{c_{k^*}}{\interp^*}}}{Q(k^*, \interp^*)}
\end{equation*}
By Claim~\ref{clm: property_zeta}, $\frac{c_{k^*+1}}{c_{k^*}} \cdot \interp^* \geq 1$. Hence:   \begin{equation}
A_t \leq \frac{\bar{B}}{\sqrt{c_t}}\cdot \frac{\sqrt{c_{k^*+1}}}{Q(k^*, \interp^*)} \leq \frac{\bar{B}}{Q(k^*, \interp^*)} = \frac{1}{R(k^*, \interp^*)} = \frac{1}{2\ldot{\pi}{q}}
\end{equation}
By Lemma~\ref{lem: best-resp-zeta}, $\duals$ is a best response to $A$. Note that  combined with the costs being non-decreasing, this also proves that $A$ is monotone non-increasing. 

\emph{It remains to show that $A$ is a best response to $\duals$.} By Lemma~\ref{lem: best-resp-A}, we only need to check that for all $j\in \{1,\ldots, \ubc\}$, $A_j = \min\left(1,\sqrt{\frac{\duals_j}{\lambda^* c_j}}\right)$ where $\lambda^*$ is chosen to make the budget constraint tight. First, we note that the $A$ given by the lemma does make the budget constraint tight, as any allocation of the form given by Equation \eqref{eqn:generic-form-appendix} does so by construction.
Now, let $\lambda^* =  \left( \frac{\sum_t \pi_t \sqrt{c_t \duals_t}}{\bar{B}} \right)^2$. We have that for $j \leq k^*$:
$$
\sqrt{\frac{\duals_j}{\lambda^* c_j}} = \sqrt{\frac{\interp^*}{c_{k^*}}} \frac{\bar{B}}{\sum_t \pi_t \sqrt{c_t \duals_t}} = \sqrt{\frac{\interp^*}{c_{k^*}}}  \frac{\bar{B}}{ \sqrt{\frac{\interp^*}{c_{k^*}}} \sum_{t=1}^{k^*} \pi_t c_t + \sum_{t=k^*+1}^{\ubc} \pi_t \sqrt{c_t}} = \frac{\bar{B}}{Q(k^*,\interp^*)} = A_j
$$
and by a similar calculation that for $j  \geq k^* + 1$ that 
$$
\sqrt{\frac{\duals_j}{\lambda^* c_j}} = \sqrt{\frac{1}{c_j}} \frac{\bar{B}}{\sum_t \pi_t \sqrt{c_t \duals_t}} =  \frac{1}{\sqrt{c_j}}\frac{\bar{B}}{\sqrt{\frac{\interp^*}{c_{k^*}}} Q(k^*,\interp^*)} = A_j
$$
Since $A_j\leq 1$ for all $j$, by the conditions of the Lemma, we therefore have: $A_j = \min\left(1,\sqrt{\frac{\duals_j}{\lambda^* c_j}}\right)$.
\end{proof}
 
\begin{lemma}[Case 2: $R(k^*,\interp^*)< 1$, $\bar{B} \geq B(1,1)$]\label{lem: case2}
For this case, $t^*=\max\{k:\bar{B} > Q(k,1)\}$ and $\bar{A}=1$.
\end{lemma}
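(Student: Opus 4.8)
The plan is to mimic Case~1: exhibit an explicit equilibrium pair $(A^*,\duals^*)$ of the zero-sum game $\cG$ and then invoke the standard chain of inequalities from the excerpt (equilibrium $\Rightarrow$ $A^*$ solves the minimax program $P$). I take $A^*$ of the form~\eqref{eqn:generic-form-appendix} with $\bar A=1$ and $t^*=\max\{k:\bar B>Q(k,1)\}$, i.e.\ $A^*_t=1$ for $t\le t^*$ and $A^*_t=\alpha/\sqrt{c_t}$ for $t>t^*$, where $\alpha=\frac{\bar B-\sum_{t\le t^*}\pi_t c_t}{\sum_{t>t^*}\pi_t\sqrt{c_t}}$ makes the budget bind. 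Using the identity $Q(k,1)=\sum_{t\le k}\pi_t c_t+\sqrt{c_k}\sum_{t>k}\pi_t\sqrt{c_t}$, the definition of $t^*$ gives $\sqrt{c_{t^*}}<\alpha\le\sqrt{c_{t^*+1}}$ (with $t^*<\ubc$ since $Q(\ubc,1)=\E[c]>\bar B$), which immediately yields $A^*_t\le 1$ for $t>t^*$, so $A^*$ is feasible and monotone non-increasing. I also record two facts: $t^*\ge k^*$ (since $R(k^*,\interp^*)<1$ forces $\bar B=B(k^*,\interp^*)>Q(k^*,\interp^*)\ge Q(k^*,1)$), and $\cF(c_{t^*})\ge\cF(c_{k^*})>\tfrac12$ (since $R(k^*,\interp^*)<1$ implies $\sum_{t>k^*}\pi_t<\tfrac12$).

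For the adversary, set $\duals^*_t=1$ for $t>t^*$ and, for $t\le t^*$, pick any $\duals^*_t\in[c_t/\alpha^2,1]$ with $\sum_{t\le t^*}\pi_t\duals^*_t=\cF(c_{t^*})-\tfrac12$; this makes $\ldot{\pi}{\duals^*}=\tfrac12$ and is well defined provided $\sum_{t\le t^*}\pi_t c_t/\alpha^2\le\cF(c_{t^*})-\tfrac12$ (the crux, addressed below) together with $c_t/\alpha^2<1$ for $t\le t^*$ (true since $\alpha^2>c_{t^*}\ge c_t$) and $\cF(c_{t^*})-\tfrac12\le\cF(c_{t^*})$ (trivial). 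Given such $\duals^*$, checking it is an equilibrium is mechanical. It is a best response to $A^*$ by Lemma~\ref{lem: best-resp-zeta}: for $t\le t^*$ we are in case~3 ($1/A^*_t=1=2\ldot{\pi}{\duals^*}$), and for $t>t^*$ we have $\duals^*_t=1$ with $1/A^*_t=\sqrt{c_t}/\alpha\ge\sqrt{c_{t^*+1}}/\alpha\ge 1=2\ldot{\pi}{\duals^*}$, which is case~2 (or case~3 at the boundary $\alpha=\sqrt{c_{t^*+1}}$). Conversely $A^*$ is a best response to $\duals^*$ by Lemma~\ref{lem: best-resp-A} with $\lambda^*=1/\alpha^2$: the bound $\duals^*_t\ge c_t/\alpha^2$ gives $\min\!\big(1,\sqrt{\duals^*_t/(\lambda^* c_t)}\big)=1$ for $t\le t^*$, while $\min\!\big(1,\sqrt{\duals^*_t/(\lambda^* c_t)}\big)=\alpha/\sqrt{c_t}$ for $t>t^*$, and the resulting allocation spends exactly $\sum_{t\le t^*}\pi_t c_t+\alpha\sum_{t>t^*}\pi_t\sqrt{c_t}=\bar B$, so $\lambda^*$ solves~\eqref{eqn:bs-binding-budget} and $A^*$ is precisely the best response the lemma produces. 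Then $(A^*,\duals^*)$ is an equilibrium and the zero-sum chain of inequalities gives $A^*$ optimal.

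The one genuinely non-routine step, and the step I expect to be the main obstacle, is the key inequality $\sum_{t\le t^*}\pi_t c_t/\alpha^2\le\cF(c_{t^*})-\tfrac12$; I would prove it by splitting on whether $t^*=k^*$. If $t^*=k^*$, rearranging $R(k^*,\interp^*)<1$ gives $\tfrac{\interp^*}{c_{k^*}}\sum_{t\le k^*}\pi_t c_t<\cF(c_{k^*})-\tfrac12$, while $\bar B-\sum_{t\le k^*}\pi_t c_t=\tfrac{1}{R(k^*,\interp^*)}\big((1-R(k^*,\interp^*))\sum_{t\le k^*}\pi_t c_t+\sqrt{c_{k^*}/\interp^*}\sum_{t>k^*}\pi_t\sqrt{c_t}\big)$ together with $R(k^*,\interp^*)<1$ yields $\alpha>\sqrt{c_{k^*}/\interp^*}$, so $\sum_{t\le k^*}\pi_t c_t/\alpha^2<\tfrac{\interp^*}{c_{k^*}}\sum_{t\le k^*}\pi_t c_t<\cF(c_{k^*})-\tfrac12$. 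If $t^*>k^*$, then $\alpha>\sqrt{c_{t^*}}$ reduces the claim to $\sum_{t\le t^*}\pi_t c_t/c_{t^*}\le\cF(c_{t^*})-\tfrac12$, which I would prove by induction on $t^*$: the base $t^*=k^*+1$ uses $\interp^*\ge c_{k^*}/c_{k^*+1}$ (Claim~\ref{clm: property_zeta}) to get $\tfrac{1}{c_{k^*+1}}\sum_{t\le k^*}\pi_t c_t\le\tfrac{\interp^*}{c_{k^*}}\sum_{t\le k^*}\pi_t c_t<\cF(c_{k^*})-\tfrac12$ and then adds $\pi_{k^*+1}$ to both sides, and the inductive step $t^*\to t^*+1$ is immediate from $c_{t^*}<c_{t^*+1}$. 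With this inequality established, the construction above goes through and the optimal allocation has the stated form with $\bar A=1$ and $t^*=\max\{k:\bar B>Q(k,1)\}$.
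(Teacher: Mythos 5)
Your proof is correct, and at the top level it follows the same strategy as the paper's: exhibit an explicit pair $(A^*,\duals^*)$, verify mutual best responses via Lemmas~\ref{lem: best-resp-A} and~\ref{lem: best-resp-zeta}, and conclude by the zero-sum equilibrium argument. All the individual steps check out: the bracketing $\sqrt{c_{t^*}}<\alpha\le\sqrt{c_{t^*+1}}$ from $Q(t^*,1)<\bar B\le Q(t^*+1,1)$, the facts $t^*\ge k^*$ and $\cF(c_{k^*})>\tfrac12$, the best-response verifications with $\lambda^*=1/\alpha^2$, and the key inequality $\tfrac{1}{\alpha^2}\sum_{t\le t^*}\pi_t c_t\le\cF(c_{t^*})-\tfrac12$ via the case split on $t^*=k^*$ versus $t^*>k^*$ and the induction. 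Where you genuinely diverge is in how the adversary's strategy is produced. The paper (Claim~\ref{clm:tilde-B}) constructs an auxiliary triple $(\tilde B,\tilde k,\tilde\interp)$ with $R(\tilde k,\tilde\interp)=1$ and uses the specific profile $\duals_t=\tfrac{c_t}{c_{\tilde k}}\tilde\interp$ for $t\le\tilde k$, $\duals_t=1$ otherwise; this hands it one concrete worst-case distribution whose feasibility is nearly automatic. You instead describe the whole family of admissible adversary responses ($\ldot{\pi}{\duals}=\tfrac12$, $\duals_t=1$ above $t^*$, $\duals_t\ge c_t/\alpha^2$ on the pooled region) and prove the family is nonempty directly; the paper's $\duals$ is in fact a member of your family, so your feasibility inequality is exactly what Claim~\ref{clm:tilde-B} secures implicitly. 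Your route avoids introducing the auxiliary budget level $\tilde B$ at the cost of the induction argument; the paper's route is more explicit about which worst-case data distribution attains the value. Both are valid; neither is materially shorter.
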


We start with the following claim:
\begin{claim}\label{clm:tilde-B}
If $R(k^*,\interp^*)< 1$,
then there exist $\tilde{B}$, $\tilde{k}$ and $\tilde{\interp}$ such that $\tilde{k} \leq k^*$ is the unique $k$ such that $B(k,1) \leq \tilde{B} < B(k+1,1)$, $\tilde{\interp}$ is the unique solution to $\tilde{B} = B(\tilde{k},\tilde{\interp})$, and $R(\tilde{k},\tilde{\interp}) = 1$. Further, $\tilde{\interp} \in \left[\frac{c_{\tilde{k}}}{c_{\tilde{k}+1}},1\right]$.
\end{claim}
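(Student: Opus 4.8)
The plan is to exhibit $\tilde k,\tilde\interp,\tilde B$ explicitly and verify each requirement, using monotonicity of $Q,R,B$ and the ``boundary identities'' relating index $k+1$ at $\interp=1$ to index $k$ at $\interp=\frac{c_k}{c_{k+1}}$. First I would record the facts to be used: for each fixed $k$, $\interp\mapsto R(k,\interp)$ is affine with strictly positive slope $\frac{2}{c_k}\sum_{t\le k}\pi_t c_t$, hence continuous and strictly increasing; a one-line expansion of the definitions gives $Q(k+1,1)=Q\!\big(k,\frac{c_k}{c_{k+1}}\big)$ and $R(k+1,1)=R\!\big(k,\frac{c_k}{c_{k+1}}\big)$, hence $B(k+1,1)=B\!\big(k,\frac{c_k}{c_{k+1}}\big)$ (the last of which already appears in the proof of Claim~\ref{clm: property_zeta}); the monotonicities of $Q(\cdot,1)$, $R(\cdot,1)$ and $B(\cdot,1)$ established above, with $B(\cdot,1)$ in fact strictly increasing for $k<\ubc$ since $Q(\cdot,1)$ is strictly increasing there while $R(\cdot,1)$ is positive and non-increasing; and $R(1,1)=2\sum_{t=1}^{\ubc}\pi_t=2\ge 1$. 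I would also observe that the hypothesis $R(k^*,\interp^*)<1$ forces $k^*\le\ubc-1$: for $k=\ubc$ one computes $R(\ubc,\interp)=\frac{2\interp\,\E[c]}{c_\ubc}$ and $B(\ubc,\interp)=\frac{c_\ubc}{2\interp}$, so $\bar B=B(\ubc,\interp^*)$ would give $R(\ubc,\interp^*)=\E[c]/\bar B>1$ by the standing assumption $\E[c]>\bar B$, a contradiction; and since we are in the regime $\bar B\ge B(1,1)$ we have $k^*\ge 1$. In particular every index $\tilde k$ with $1\le\tilde k\le k^*$ considered below has $c_{\tilde k+1}$ finite.

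I would then let $\tilde k$ be the largest index with $1\le\tilde k\le k^*$ and $R(\tilde k,1)\ge 1$; this exists because $R(1,1)=2$. Next I check that $R\!\big(\tilde k,\frac{c_{\tilde k}}{c_{\tilde k+1}}\big)<1$. If $\tilde k<k^*$, the boundary identity gives $R\!\big(\tilde k,\frac{c_{\tilde k}}{c_{\tilde k+1}}\big)=R(\tilde k+1,1)<1$ by maximality of $\tilde k$. If $\tilde k=k^*$, then using $\interp^*\ge\frac{c_{k^*}}{c_{k^*+1}}$ (Claim~\ref{clm: property_zeta}) and monotonicity of $R(k^*,\cdot)$, $R\!\big(\tilde k,\frac{c_{\tilde k}}{c_{\tilde k+1}}\big)=R\!\big(k^*,\frac{c_{k^*}}{c_{k^*+1}}\big)\le R(k^*,\interp^*)<1$. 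Since $\interp\mapsto R(\tilde k,\interp)$ is continuous and strictly increasing, strictly below $1$ at $\interp=\frac{c_{\tilde k}}{c_{\tilde k+1}}$ and at least $1$ at $\interp=1$, the intermediate value theorem yields a unique $\tilde\interp\in\big(\frac{c_{\tilde k}}{c_{\tilde k+1}},1\big]$ with $R(\tilde k,\tilde\interp)=1$; in particular $\tilde\interp\in\big[\frac{c_{\tilde k}}{c_{\tilde k+1}},1\big]$, which is the last assertion of the claim.

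Finally I would set $\tilde B:=B(\tilde k,\tilde\interp)$. Since $B(\tilde k,\cdot)$ is continuous and strictly decreasing (as in the proof of Claim~\ref{clm: property_zeta}) and $\frac{c_{\tilde k}}{c_{\tilde k+1}}<\tilde\interp\le 1$, the boundary identity gives $B(\tilde k,1)\le\tilde B<B\!\big(\tilde k,\frac{c_{\tilde k}}{c_{\tilde k+1}}\big)=B(\tilde k+1,1)$, so $\tilde k$ is the unique index (by strict monotonicity of $B(\cdot,1)$) with $B(k,1)\le\tilde B<B(k+1,1)$; strict monotonicity of $B(\tilde k,\cdot)$ makes $\tilde\interp$ the unique solution of $\tilde B=B(\tilde k,\interp)$; and $R(\tilde k,\tilde\interp)=1$ holds by construction, verifying all the required properties. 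One may additionally note $\tilde B<\bar B$: if $\tilde k=k^*$ then $R(k^*,\tilde\interp)=1>R(k^*,\interp^*)$ forces $\tilde\interp>\interp^*$, hence $\tilde B=B(k^*,\tilde\interp)<B(k^*,\interp^*)=\bar B$; if $\tilde k<k^*$ then $\tilde B<B(\tilde k+1,1)\le B(k^*,1)\le\bar B$.

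The only genuinely non-routine step is showing that $R$ drops below $1$ at the left endpoint $\frac{c_{\tilde k}}{c_{\tilde k+1}}$ of $\tilde k$'s $\interp$-interval — this is where the hypothesis $R(k^*,\interp^*)<1$ is actually used, via the case split together with $\interp^*\ge\frac{c_{k^*}}{c_{k^*+1}}$ and the boundary identity for $R$. Everything else is bookkeeping with monotone functions and the intermediate value theorem.
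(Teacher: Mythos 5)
Your proposal is correct and follows essentially the same route as the paper's proof: use the boundary identity $R(k+1,1)=R\bigl(k,\tfrac{c_k}{c_{k+1}}\bigr)$ together with the hypothesis and Claim~\ref{clm: property_zeta} to locate the crossing index $\tilde{k}\leq k^*$ where $R(\cdot,1)$ falls below $1$, then apply the intermediate value theorem in $\interp$ to pin down $\tilde{\interp}$ and set $\tilde{B}=B(\tilde{k},\tilde{\interp})$. The extra bookkeeping you supply (finiteness of $c_{\tilde{k}+1}$, uniqueness of $\tilde{k}$ and $\tilde{\interp}$, and $\tilde{B}<\bar{B}$) is consistent with, and slightly more explicit than, the paper's argument.
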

\begin{proof}
By the fact that $R(k,\interp)$ is increasing in $\interp$ and by Claim \ref{clm: property_zeta}, we get that: $R(k^*+1,1) = R\left(k^*,\frac{c_{k^*}}{c_{k^*+1}}\right) \leq R(k^*,\interp^*) < 1$.
Because $R(0,1) = 2$ and $R(k,1)$ is decreasing in $k$, there exists $\tilde{k} \leq k^*$ such that $R(\tilde{k},1) > 1 \geq R(\tilde{k}+1,1) =  R\left(\tilde{k},\frac{c_{\tilde{k}}}{c_{\tilde{k}+1}}\right)$. Because $R(\tilde{k},\interp)$ is increasing continuous in $\interp$, there exists $\tilde{\interp} \in \left[\frac{c_{\tilde{k}}}{c_{\tilde{k}+1}},1\right]$ such that  $R(\tilde{k},\tilde{\interp}) = 1$. Let $\tilde{B} = B(\tilde{k},\tilde{\interp}) = Q(\tilde{k},\tilde{\interp})$, we have $B(\tilde{k},1)  \leq \tilde{B} \leq B\left(\tilde{k},\frac{c_{\tilde{k}}}{c_{\tilde{k}+1}}\right) = B(\tilde{k}+1,1)$ as $B$ is decreasing in $\interp$. Thus $(\tilde{B}, \tilde{k}, \tilde{\interp})$ satisfies the claim.
\end{proof}

Now we show the proof of the lemma:
\begin{proof}[Proof of Lemma~\ref{lem: case2}]
Let us define $(\tilde{B}, \tilde{k}, \tilde{\interp})$ as in the statement of Claim \ref{clm:tilde-B}. First we show that $t^* \geq \tilde{k}$. This follows from noting that $\bar{B} > Q(k^*,\interp^*) \geq Q(k^*,1) \geq Q(\tilde{k},1)$ as clearly $Q(k,\interp)$ is increasing in $k$ and decreasing in $\interp$. Now, let $\duals_t = \frac{c_t}{c_{\tilde{k}}} \tilde{\interp}$ for $t \leq \tilde{k}$, $\duals_{\tilde{k}+1} = \ldots = \duals_{\ubc} = 1$. 

\emph{We prove that $\duals$ is a best response to $A$.} We have that
$$
2 \ldot{\pi}{\duals} = 2 \left( \frac{\tilde{\interp}}{c_{\tilde{k}}} \sum_{t=1}^{\tilde{k}} \pi_t c_t  + \sum_{t=\tilde{k}+1}^{\ubc} \pi_t \right) = R(\tilde{k},\tilde{\interp}) = 1
$$
For $j \leq \tilde{k} \leq t^*$, $A_j = 1$ hence $2 \ldot{\pi}{\duals} = \frac{1}{A_j}$. For $j > t^*$, we note that by definition of $t^*$, 
\begin{align*}
\bar{B}\leq Q(t^*+1, 1) = \sum_{t=1}^{t^*+1} \pi_t c_t + \sum_{t=t^*+2}^{\ubc} \pi_t \sqrt{c_t\cdot c_{t^*+1}} 
& = \sum_{t=1}^{t^*} \pi_t c_t + \sqrt{c_{t^*+1}}\sum_{t=t^*+1}^{\ubc} \pi_t \sqrt{c_t}
\\& \Rightarrow \frac{\bar{B}- \sum_{t=1}^{t^*} \pi_t c_t}{\sqrt{c_{t^*+1}}\sum_{t=t^*+1}^{\ubc} \pi_t \sqrt{c_t}} \leq 1
\end{align*} 
and this directly implies that $A_{t^*+1} \leq 1$. By monotonicity of $c_t$, it then holds that $A_j\leq 1$ for any $j> t^*$. Hence $2 \ldot{\pi}{\duals} = 1 \geq \frac{1}{A_j}$. This proves $\duals$ is a best response to $A$ by Lemma~\ref{lem: best-resp-zeta}, and simultaneously proves monotonicity of $A$.

\emph{Finally, we prove that $A$ is a best response to $\duals$.} By Lemma~\ref{lem: best-resp-A} it suffices to show that  $\min\left(1,\sqrt{\frac{\duals_t}{\lambda^* c_t}}\right) = A_t$ for $\lambda^*$ that makes the budget constraint binding. First, we note that the $A$ given by the lemma makes the budget constraint tight, as any allocation of the form given by Equation~ \eqref{eqn:generic-form-appendix} does so by construction. Now, let $\lambda^* = \left( \frac{\sum_{t=t^*+1}^{\ubc} \pi_t c_t}{\bar{B} - \sum_{t=1}^{t^*} \pi_t c_t} \right)^2$, the following holds:
\begin{enumerate}
\item For $j>t^*$:
$$
\sqrt{\frac{\duals_j}{\lambda^* c_j}} = \frac{1}{\sqrt{c_j}} \cdot \frac{\bar{B} - \sum_{t=1}^{t^*} \pi_{t} c_{t}}{\sum_{t=t^*+1}^{\ubc} \pi_{t} \sqrt{c_{t}}} = A_j \leq 1
$$
\item For $j\in \{\tilde{k}+1,\ldots, t^*\}$ (when this regime exists, i.e. when $\tilde{k} < t^*$), remember that by definition of $t^*$,
\begin{equation*}
\bar{B} > Q(t^*,1) =  \sum_{t=1}^{t^*} \pi_{t} c_{t} + \sqrt{c_{t^*}} \sum_{t=t^*+1}^{\ubc} \pi_{t} \sqrt{c_{t}} \Rightarrow \sqrt{\frac{1}{\lambda^* c_{t^*}}} = \frac{1}{\sqrt{c_{t^*}}}\frac{\bar{B} - \sum_{t=1}^{t^*} \pi_t c_t}{\sum_{t=t^*+1}^{\ubc} \pi_t \sqrt{c_t}}> 1
\end{equation*} 
Hence $\sqrt{\frac{1}{\lambda^* c_j}} \geq \sqrt{\frac{1}{\lambda^* c_{t^*}}}>1$ and $\min\left(1,\sqrt{\frac{1}{\lambda^* c_j}}\right)=1=A_j$.
\item For $j \leq \tilde{k}$, by Claim \ref{clm:tilde-B} we have $\tilde{\interp} \geq \frac{c_{\tilde{k}}}{c_{\tilde{k}+1}}$. Hence:$
\sqrt{\frac{\duals_j}{\lambda^* c_j}} = \sqrt{\frac{\tilde{\interp}}{\lambda^* c_{\tilde{k}}}}
\geq \sqrt{\frac{1}{\lambda^* c_{\tilde{k}+1}}} > 1$, 
where the latter holds by the previous case. Hence, $\min\left(1,\sqrt{\frac{1}{\lambda^* c_j}}\right)=1=A_j$.
\end{enumerate}
\end{proof}

\begin{lemma}[Case 3: $\bar{B} < B(1,1) = \frac{\sqrt{c_1} \E[\sqrt{c}]}{2}$] 
For this case, an optimal solution is given by $t^*=0$.
\end{lemma}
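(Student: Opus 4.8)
The plan is to mirror the equilibrium argument used for Cases~1 and~2: exhibit an explicit pair $(A,\duals)$ that is an equilibrium of the zero-sum game $\cG$, and then invoke the standard zero-sum observation (recalled earlier via~\cite{Freund1999}) that the $A$-component of any equilibrium solves the minimax problem $P$. When $t^*=0$, the allocation rule of Equation~\eqref{eqn:generic-form-appendix} collapses to $A_t = \frac{\bar{B}}{\sqrt{c_t}\,\E[\sqrt{c}]}$ for all $t$, and the natural candidate adversary strategy---matching the assertion in the main text that the worst case here is $\duals=1$---is $\duals = \mathbf{1}$, i.e. $\duals_t=1$ for every $t$. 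So the proof reduces to checking that this $A$ and this $\duals$ are mutual best responses, using Lemmas~\ref{lem: best-resp-A} and~\ref{lem: best-resp-zeta}.

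First I would show $\duals=\mathbf{1}$ is a best response to $A$ via Lemma~\ref{lem: best-resp-zeta}. Since $\ldot{\pi}{\duals}=1$, it suffices to verify that $\frac{1}{A_j}>2$ for every $j$, i.e. $A_j<\tfrac12$. This is exactly where the hypothesis $\bar{B}<B(1,1)$ enters: $B(1,1)=\frac{Q(1,1)}{R(1,1)}=\frac{\sqrt{c_1}\,\E[\sqrt{c}]}{2}$, and since $c_j\geq c_1$ we get $A_j = \frac{\bar{B}}{\sqrt{c_j}\,\E[\sqrt{c}]} \leq \frac{\bar{B}}{\sqrt{c_1}\,\E[\sqrt{c}]} < \tfrac12$, so condition (2) of Lemma~\ref{lem: best-resp-zeta} holds coordinate-wise. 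The same bound gives $A_j<1$ for all $j$, and monotonicity of $A$ follows since $A_j$ is decreasing in $c_j$ (so dropping the monotonicity constraint was without loss of generality).

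Next I would show $A$ is a best response to $\duals=\mathbf{1}$ via Lemma~\ref{lem: best-resp-A}. Take $\lambda^* = \left(\frac{\E[\sqrt{c}]}{\bar{B}}\right)^2>0$; then $\sqrt{\frac{\duals_j}{\lambda^* c_j}} = \frac{\bar{B}}{\sqrt{c_j}\,\E[\sqrt{c}]} = A_j \leq 1$, so $A_j = \min\!\left(1,\sqrt{\frac{\duals_j}{\lambda^* c_j}}\right)$ for every $j$. Moreover the budget is binding at this $A$, since $\sum_t \pi_t c_t A_t = \frac{\bar{B}}{\E[\sqrt{c}]}\sum_t \pi_t\sqrt{c_t} = \bar{B}$; hence $\lambda^*$ solves Equation~\eqref{eqn:bs-binding-budget} and the hypotheses of Lemma~\ref{lem: best-resp-A} are satisfied.

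Having established that $(A,\duals)$ is an equilibrium pair, the zero-sum argument gives that this $A$ solves $P$, and since it is monotone non-increasing it is an optimal allocation rule for the original constrained optimization problem; as $A$ has precisely the form of Equation~\eqref{eqn:generic-form-appendix} with $t^*=0$, this proves the lemma. I do not anticipate a genuine obstacle here---this is the degenerate regime with no pooling region---the only point requiring attention being the strict inequality $A_j<\tfrac12$, which is the sole place the hypothesis $\bar{B}<B(1,1)$ is used and which is what places every coordinate in the active regime (case (2)) of Lemma~\ref{lem: best-resp-zeta}.
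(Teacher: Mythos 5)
Your proposal is correct and matches the paper's own proof of this case essentially line for line: the same adversary strategy $\duals=\mathbf{1}$, the same multiplier $\lambda^*=\left(\frac{\E[\sqrt{c}]}{\bar{B}}\right)^2$, the same verification that $\frac{1}{A_j}>2=2\ldot{\pi}{\duals}$ via $\bar{B}<\frac{\sqrt{c_1}\E[\sqrt{c}]}{2}$, and the same conclusion through the two best-response lemmas and the zero-sum equilibrium argument. No gaps.
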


\begin{proof}
We let $\duals_t = 1$ for all $t$. We first show that $\duals_t$ is a best response to $A$. This trivially follows by Lemma~\ref{lem: best-resp-zeta} by remarking that 
$$
2 \ldot{\pi}{\duals} = 2 < \frac{\sqrt{c_1} \E[\sqrt{c}]}{\bar{B}} \leq \frac{\sqrt{c_j} \E[\sqrt{c}]}{\bar{B}} = \frac{1}{A_j}
$$
Now, we show that $A$ is a monotone best response to $\duals$. Monotonicity directly follows from the fact that the costs are non-decreasing. Now, to check that $A$ is a best response let us set $\lambda^* = \left(\frac{\E[\sqrt{c}]}{B}\right)^2$. We have $\sqrt{\frac{\duals_j}{\lambda^* c_j}} = \sqrt{\frac{1}{c_j}}\frac{\bar{B}}{\E[\sqrt{c}]} = A_j \leq \frac{\bar{B}}{\sqrt{c_1} \E[\sqrt{c}]}\leq 1/2$. Hence $A_j = \min\left(1,\sqrt{\frac{\duals_j}{\lambda^* c_j}}\right)$. 
By Lemma~\ref{lem: best-resp-A}, $A$ is a best response to $\duals$. This concludes the proof.
\end{proof}

\subsection{Proof of Theorem \ref{thm:main-moment-estimation-continuous}}\label{app:moments.continuum}
Consider any continuous atomless distribution $F$ supported in $[0,1]$. Then we can approximate the density of any such distribution by considering a discretized $\epsilon$-grid of the interval, 
i.e. $\{0,\epsilon, 2\epsilon, \ldots, 1\}$ and the discrete support distribution defined by pdf $\pi_t = F(\epsilon\cdot (t+1)) - F(\epsilon \cdot t)$ for $t\in \{0, 1, \ldots, 1/\epsilon\}$. Since, the loss of the zero sum game for moment estimation is continuous in the CDF of the cost distribution, we have that the minimax value of the game is continuous in the CDF of the cost distribution (see e.g. \cite{Feinberg2016} on continuity of minimax with respect to parameters of the game). Hence, the limit of the optimal discretized solutions will be the optimal solution to the discrete problem.

We now consider the limit structure of the optimal solutions to the discretized problems. We will use the more structural characterization of our main Theorem~\ref{thm:main-moment-estimation}, presented as Theorem~\ref{thm:main-moment-estimation-closed} in Appendix~\ref{app:proof-of-moment-estimation}.
In particular, the optimal solution will look as follows, taking the limit of the form in Theorem~\ref{thm:main-moment-estimation-closed}: for $x\in [0,1]$
\begin{equation}
A(x) = \begin{cases}
\bar{A} & \text{if $x\leq x^*$}\\
\frac{1}{\sqrt{x}}\cdot \frac{\bar{B} - \bar{A} \E[c\cdot 1\{c\leq x^*\}]}{\E[\sqrt{c}\cdot 1\{c>x^*\}} & \text{o.w.}
\end{cases}
\end{equation}
Now let us examine how the point $x^*$ is defined in the limit. Consider the functions $Q(k,z)$, $R(k,z)$ and $B(k,z)$ defined in Theorem \ref{thm:main-moment-estimation-closed}. In the limit as $\epsilon\rightarrow 0$, observe that $c_k\rightarrow c_{k+1}$ for every $t$. Therefore, it is easy to see that $Q(k, 1)\rightarrow Q(k+1,1)$ and $R(k,1)\rightarrow R(k+1,1)$. Hence, also $B(k,1)\rightarrow B(k+1,1)$. Hence, we will have that $x^*$ defined in \ref{thm:main-moment-estimation-closed} will satisfy $x^*\rightarrow 1$ and in the limit $\bar{B}=B(k^*,1)$ for some $k^*\in [0, \infty)$. Hence, we only need to consider these functions at $x=1$ and take their limit as $\epsilon\rightarrow 0$. In this limit we observe that these functions take the simpler forms (since summations will converge to integrals) for $x\in [0,1]$
\begin{align}
Q(x/\epsilon, 1) \rightarrow~& \int_{0}^{x} c f(c) dc+ \int_{x}^{1} \sqrt{c\cdot x} f(c)  dc = \E_{c \sim \cF}[\min\{c, \sqrt{cx}\}] \triangleq  Q_{\infty}(x)\\
R(x/\epsilon, 1) \rightarrow~& 2\left(\int_{0}^{x} f(c) \frac{c}{x}dc + \int_{x}^{1}f(c)dc \right) = 2\E_{c \sim \cF}\left[\min\left\{\frac{c}{x}, 1\right\}\right] \triangleq R_{\infty}(x)
\end{align}
Hence, adapting the discrete characterization of $x^*$ and $\bar{A}$ to these limits we have: the parameter $x^*$ is defined as the solution to the following process: let $z$ be the solution to the equation $\bar{B} = \frac{Q_{\infty}(z)}{R_{\infty}(z)}$. If $R_{\infty}(z)\geq 1$, then $x^*=z$ and $\bar{A} = \frac{1}{R_{\infty}(z)}$, otherwise $x^*=\max\{z: \bar{B} > Q_{\infty}(z)\}$ and $\bar{A}=1$.

Now we observe that since $F$ is atomless and has support $[0,1]$, $Q_{\infty}(x)$ is a continuous increasing function of $x$, with range $[0,\E[c]]$. Hence, if $\bar{B}\leq \E[c]$ (which we assumed holds as otherwise the problem is trivial), then $\max\{z: \bar{B} > Q_{\infty}(z)\} = Q_{\infty}^{-1}(\bar{B})$, or equivalently if $x^*=\max\{z: \bar{B} > Q_{\infty}(z)\}$ then it must be that $x^*$ is the unique solution to the equation $Q_{\infty}(x^*)=\bar{B}$. 

Moreover, observe that $R_{\infty}(x)$ is also a decreasing function of $x$ ranging in $[2\E[c],2]$ as $x$ varies from $0$ to $1$. If $2\E[c] \geq 1$, then $R_{\infty}(x)\geq 1$ for all $x\in [0,1]$ and the second case of the characterization of $x^*$ never holds and we have that $x^*$ is the solution to the equation $\frac{Q_{\infty}(x^*)}{R_{\infty}(x^*)}=\bar{B}$, or $1$ if $\bar{B}$ is above $\frac{1}{2}$. Moreover, $\bar{A}=\frac{1}{R_{\infty}(x^*)}$, in both cases. Equivalently, $x^*=\min\{1, G^{-1}(\bar{B})\}$. Hence, in this case the Theorem holds.

Otherwise, let $x_0$ be the solution to the equation $R_{\infty}(x_0)=1$. Thus $R_{\infty}(x)<1$ above $x_0$ and $R_{\infty}(x)>1$ below $x_0$. Now, consider the function $G(x) = \frac{Q_{\infty}(x)}{\max(1,R_{\infty}(x))}$. This function is
continuous increasing and is equal to $\frac{Q_{\infty}(x)}{R_{\infty}(x)}$ for $x\leq x_0$ and is equal to $Q_{\infty}(x)$ for $x\geq x_0$. 

If it happened that the solution of the equation $\frac{Q_{\infty}(x)}{R_{\infty}(x)}=\bar{B}$ happens at $x\leq x_0$, then we have that $x^* = G^{-1}(\bar{B})$ and $\bar{A}=\frac{1}{R(x^*)}$. Otherwise, if the solution to that equation is above $x_0$, then $x^*=Q^{-1}(\bar{B})=G^{-1}(\bar{B})$ (the latter always has a solution when $\bar{B}\leq \E[c]$) and $\bar{A} = 1$. Thus in this case we have that $x^* = G^{-1}(\bar{B})$ and $\bar{A}= \frac{1}{\max\{1,R_{\infty}(x^*)\}}$, which concludes the proof.

\subsection{Proof of Lemma \ref{lem:ht-variance}}\label{app:ht-variance}
For any distribution $\cD$, observe that each coordinate of the Horvitz-Thompson estimator can be written as the sum of $n$ i.i.d. random variables each with a variance:
\begin{align*}
\sigma_r^2 = \E\left[\left(\frac{m_r(z_i)\cdot 1\{i\in S\}}{A(c_i)}\right)^2\right] - \E\left[\frac{m_r(z_i)\cdot 1\{i\in S\}}{A(c_i)}\right]^2 = \E\left[\frac{\E\left[m_r(z)^2|c\right]}{A(c)}\right] - \E[m_r(z)]^2
\end{align*}
Hence, the variance of the estimator is $\sum_{r=1}^d\nicefrac{\sigma_r^2}{n}$. For simplicity of notation we let $M$ denote the random value of the moment vector, i.e. the random variable generated by first drawing a $(z,c)\sim \cD$ and then mapping $z$ to $m(z)$. Then we can write:
\begin{align*}
n\cdot \Risk(\hat{\theta}_S; \cD, A) =~ \sum_{r=1}^d \left(\E\left[\frac{\E\left[M_r^2|c\right]}{A(c)}\right]  - \E[M_r]^2\right)
\end{align*}

We begin by showing that the worst-case risk is achieved by a distribution of $M$ supported only on the corners of the hypercube, i.e. $\{0,1\}^d$. Since any such distribution is a valid one by the assumption that the distribution of $m(z)$ is supported on all corners of the hypercube, it is without loss of generality to restrict attention to only such distributions when characterizing the worst-case risk. 

Let $Y$ be a new random vector that is generated as follows: conditional on any value of $M\in [0,1]^d$, we draw a random threshold $\tau$ uniformly in $[0,1]$. Then we define $Y_r = 1\{M_r \geq \tau\}$. Observe that $Y$ is a mean-preserving transformation of $M$, i.e. the expected value of $Y_r$ is equal to the expected value of $M_r$:
\begin{equation*}
\E[Y_r] = \E[ \E[Y_r|M_r] ] = \E[ \E[1\{M_r \geq \tau\}|M_r] ] = \E[M_r]
\end{equation*}
More over observe that $Y$ achieves a strictly larger risk:
\begin{align*}
\E\left[\frac{\E\left[M_r^2|c\right]}{A(c)}\right] - \E[M_r]^2 =~&
\E\left[\frac{\E\left[\E[Y_r|M_r]^2|c\right]}{A(c)}\right] - \E[Y_r]^2\\
\leq~& \E\left[\frac{\E\left[\E[Y_r^2|M_r]|c\right]}{A(c)}\right] - \E[Y_r]^2 \tag{By Jensen's inequality}\\
=~& \E\left[\frac{\E\left[Y_r^2|c\right]}{A(c)}\right]- \E[Y_r]^2
\end{align*}
Thus when taking a supremum of the risk over consistent distributions $\cD$, we can assume that the moment vector $M$ under such distributions is only supported on the corners of the hypercube $\{0,1\}^d$.

For any such distribution observe that $M_r^2 = M_r$ and $\E[M_r^2|c]=\E[M_r|c]=\Pr[M_r=1|c]$. For simplicity, let $\duals_{tr}=\Pr[M_r=1|c_t]$. Then we can write the risk of any such distribution of $M_r$ as:
\begin{align*}
n\cdot \Risk(\hat{\theta}_S; \cD, A) 
&=~ \sum_{r=1}^d \left(\E\left[\frac{\Pr[M_r=1|c]}{A(c)}\right]- \E\left[\Pr[M_r=1|c]\right]^2\right)
\\&=~ \sum_{r=1}^d \left(\sum_{t=1}^{\ubc} \pi_t  \cdot \frac{\duals_{tr}}{A_t} - \left(\sum_{t=1}^{\ubc} \pi_t \cdot \duals_{tr}\right)^2\right)
\end{align*}

Finally, observe that conditional on any cost $c_t$, any vector $\duals_t\in [0,1]^d$ is a feasible vector (i.e. can be implemented by a consistent distribution $\cD$). To see this, let $\duals_t$ be any such vector. Then we generate a distribution that leads to this $q_t$ as follows: conditional on the cost $c_t$, we draw a random threshold $\tau$ uniformly in $[0,1]$. Then set $M_r = 1\{\duals_{tr}\geq \tau\}$. The latter is a valid distribution for $M_r$ and it satisfies that $\Pr[M_r=1|c_t]=\duals_t$. 

Thus when calculating the worst-case risk, we are taking a supremum over all vectors $q\in [0,1]^{d\cdot \ubc}$, i.e.:
\begin{align*}
n\cdot \Risk^*(\hat{\theta}_S; \cF, A) = \sup_{\duals \in [0,1]^{d\cdot \ubc}} \sum_{r=1}^d\left(
\sum_{t=1}^{\ubc} \pi_t \cdot \frac{\duals_{tr}}{A_t} - \left(\sum_{t=1}^{\ubc} \pi_t\cdot \duals_{tr}\right)^2\right)
\end{align*}
Since the latter optimization has no constraints to couple the objectives inside the first summation, the latter boils down to optimizing each summand separately, i.e.
\begin{align*}
\sup_{\duals_r \in [0,1]^{\ubc}} 
\sum_{t=1}^{\ubc} \pi_t \cdot \frac{\duals_{tr}}{A_t} - \left(\sum_{t=1}^{\ubc} \pi_t\cdot \duals_{tr}\right)^2
\end{align*}
Since the objective is identical for all $r\in \{1,\ldots, d\}$, we get that there always exists an optimal solution under which $q_r = q_{r'}$ for all $r, r'\in \{1,\ldots, d\}$. Thus we can re-write the worst-case risk by restricting only to such symmetric solutions for $q$, i.e.:
\begin{align*}
n\cdot \Risk^*(\hat{\theta}_S; \cF, A) 
& = \sup_{\duals \in [0,1]^{\ubc}} \sum_{r=1}^d\left(
\sum_{t=1}^{\ubc} \pi_t \cdot \frac{\duals_{t}}{A_t} - \left(\sum_{t=1}^{\ubc} \pi_t\cdot \duals_{t}\right)^2\right)
\\& = \sup_{\duals \in [0,1]^{\ubc}} d\cdot \left(
\sum_{t=1}^{\ubc} \pi_t \cdot \frac{\duals_{t}}{A_t} - \left(\sum_{t=1}^{\ubc} \pi_t\cdot \duals_{t}\right)^2\right)
\end{align*}

\subsection{Proof of Theorem \ref{thm:main-regression}}\label{app: regression}
Before starting to prove the Theorem, we note that we can assume without loss of generality that $U > 0$. Otherwise, $U = L = 0$ (as we are assuming $U^2 \geq L^2$), and the objective value of the optimization program is $0$ and independent of the allocation rule, thus any feasible allocation rule is optimal. In particular, any monotone allocation rule of the form given in the theorem statement works. 

\paragraph{Simplifying the analyst's problem} The following lemma reduces the minimax problem that the analyst needs to solve to a simple convex minimization problem:
\begin{lemma}
The optimization program solved by the analyst can be written as:
\begin{equation}\label{eq: app-regression-program}
\begin{aligned}
\inf_{A \in \left(0,1\right]^{\ubc}} ~&\sum_{t=1}^{t^*-1}  \pi_t \frac{L^2}{A_t} +  \pi_{t^*} \frac{R^2}{A_{t^*}} + \sum_{t = t^* +1}^{\ubc} \pi_t \frac{U^2 }{A_t} \\
\mathrm{s.t.}~~~
~& \sum_{t=1}^{\ubc} \pi_t c_t A_t \leq \bar{B}\\
~& A \text{ is monotone non-increasing }\\
\end{aligned}
\end{equation}
where $t^* = \min \{j:~- \frac{L}{U - L}  > \sum_{t=j+1}^{\ubc} \pi_t\}$, $q_t^* = \frac{1}{\pi_t^*} \left( = - \frac{L}{U - L}  - \sum_{t=t^*+1}^{\ubc} \pi_t \right) > 0$ and $R^2 = (U^2- L^2) q_{t^*} + L^2 > 0$.
\end{lemma}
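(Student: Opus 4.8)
The plan is to evaluate the inner supremum over $\duals$ in Program~\eqref{eq: regression-program} in closed form, for every fixed monotone non-increasing $A$, and check that substituting the maximizer back in yields Program~\eqref{eq: app-regression-program}. No exchange of $\inf$ and $\sup$ is needed: the inner $\sup$ is attained for each $A$, so $\inf_A\sup_\duals$ equals $\inf_A$ of the resulting explicit function of $A$.

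First I would simplify the adversary's program. Writing $(1-\duals_t)L^2+\duals_t U^2 = L^2+\duals_t(U^2-L^2)$, the inner objective becomes $\sum_t \frac{\pi_t L^2}{A_t}+(U^2-L^2)\sum_t\frac{\pi_t\duals_t}{A_t}$; the first term does not involve $\duals$ and $U^2-L^2\ge 0$ by our standing assumption, so the adversary equivalently maximizes $\sum_t\frac{\pi_t\duals_t}{A_t}$. The mean-zero constraint rearranges to $\sum_t\pi_t\duals_t=\beta$ with $\beta:=\frac{-L}{U-L}$, which lies in $[0,1)$ since $L\le 0\le U$ and $U>0$. If $L=0$ then $\beta=0$ forces $\duals\equiv 0$ and the objective is identically zero, a degenerate case; so we may also assume $L<0$, i.e.\ $\beta\in(0,1)$, which is what makes $t^*$ well-defined below.

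Next I would identify the reduced inner problem as a fractional knapsack: with $\mu_t:=\pi_t\duals_t\in[0,\pi_t]$ we maximize $\sum_t\mu_t/A_t$ subject to $\sum_t\mu_t=\beta$. Monotonicity of $A$ is essential here: $A_1\ge\cdots\ge A_{\ubc}$ makes the per-unit values $1/A_t$ non-decreasing in $t$, so by a standard exchange argument the greedy rule — fill $\mu_{\ubc},\mu_{\ubc-1},\dots$ to capacity until the running total reaches $\beta$ — is optimal. This produces $\duals^*_t=1$ for $t>t^*$, $\duals^*_{t^*}=q^*_{t^*}$, $\duals^*_t=0$ for $t<t^*$, with $t^*=\min\{j:\beta>\sum_{t>j}\pi_t\}$ (a nonempty set, since $j=\ubc$ qualifies) and $q^*_{t^*}=\frac{1}{\pi_{t^*}}\bigl(\beta-\sum_{t>t^*}\pi_t\bigr)$; minimality of $t^*$ forces this numerator into $(0,\pi_{t^*}]$, so $q^*_{t^*}\in(0,1]$. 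Note that $t^*$ and $q^*_{t^*}$ depend only on $\pi$ and $\beta$, not on the particular $A$.

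Finally I would substitute $\duals^*$ into the full inner objective $\sum_t\frac{\pi_t}{A_t}\bigl(L^2+\duals_t(U^2-L^2)\bigr)$: the $t<t^*$ terms give $\pi_t L^2/A_t$, the $t>t^*$ terms give $\pi_t U^2/A_t$, and the $t=t^*$ term gives $\frac{\pi_{t^*}}{A_{t^*}}\bigl(L^2+q^*_{t^*}(U^2-L^2)\bigr)=\frac{\pi_{t^*}R^2}{A_{t^*}}$ with $R^2:=(U^2-L^2)q^*_{t^*}+L^2$; since $L<0$ gives $R^2\ge L^2>0$. This is exactly the objective of Program~\eqref{eq: app-regression-program}, and because all of its coefficients $\pi_t L^2,\pi_{t^*}R^2,\pi_t U^2$ are strictly positive, any $A$ with a zero coordinate has infinite objective and can be discarded, which accounts for the domain $(0,1]^{\ubc}$. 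The main obstacle is making the knapsack/greedy step airtight rather than any heavy computation: one must handle ties in $A$ (where several $\duals$ attain the maximum, but the \emph{value} — all the reduction needs — is unaffected by taking the greedy one) and confirm feasibility of $\sum_t\pi_t\duals_t=\beta$ over $[0,1]^{\ubc}$, which is immediate since that quantity ranges over all of $[0,1]$.
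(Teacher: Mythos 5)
Your proposal is correct and follows essentially the same route as the paper: rewrite the adversary's inner problem as a fractional knapsack with capacity $-L/(U-L)$, observe that monotonicity of $A$ makes the greedy (high-index-first) solution optimal independently of the particular feasible $A$, and substitute the resulting $\duals^*$ back to obtain the stated objective. Your additional care about the degenerate case $L=0$, ties in $A$, and the strict positivity of $R^2$ goes slightly beyond what the paper writes down, but does not change the argument.
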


\begin{proof}
We first note that for a given $A$, the optimization program solved by the adversary can be rewritten as
\begin{equation*}
\begin{aligned}
\sup_{\duals \in [0,1]^{\ubc}}  &\sum_{t=1}^{\ubc} \frac{\pi_t}{A_t} \duals_t (U^2 -  L^2) + L^2 \sum_{t=1}^{\ubc} \frac{\pi_t}{A_t}\\
\mathrm{s.t.}~~&\sum_t \pi_t \duals_t = -\frac{L}{U-L}
\end{aligned}
\end{equation*}
Hence, the adversary is exactly solving a knapsack problem with capacity $- \frac{L}{U - L} \geq 0$, weights $\pi_t$ and utilities $\frac{\pi_t}{A_t}$. Therefore, an optimal solution exists and is to put the weight on the $t$'s with the higher values of $\frac{1}{A_t}$ i.e. the lowest values of $A_t$ first. Because $A$ is non-increasing in the costs and therefore in $t$, an optimal solution is given by:
\begin{align*}
q_1 = \ldots = q_{t^*-1} = 0, \pi_{t^*}q_{t^*} = - \frac{L}{U - L}  - \sum_{t=t^*+1}^{\ubc} \pi_t, q_{t^*+1} = \ldots = q_l = 1
\end{align*}
This holds independently of $A$  as long as $A$ is feasible (hence monotone), proving the result.
\end{proof}

\paragraph{Solving the optimization problem} We start with the following lemma that characterizes the form of the solution: 

\begin{lemma}\label{lem: gmm-form-solution}
There exist $\lambda \geq 0$, non-negative integers $t^-,t^+$ such that $t^- \leq t^* \leq t^+$, and an optimal allocation rule $A$ that satisfy
\begin{enumerate}
\item $A_1  \geq \ldots \geq A_{t^- -1}  \geq A_{t^-} = \ldots = A_{t^+} \geq A_{t^+ + 1} \geq \ldots A_l$
\item $A_t = \min\left(1,\sqrt{\frac{L^2}{\lambda c_t}}\right) \; \forall t < t^-$
\item $A_t = \min\left(1,\sqrt{\frac{ U^2}{\lambda c_t}}\right) \; \forall t > t^+$
\end{enumerate}
and that make the budget constraint tight. In the rest of the proof, we denote $\bar{A}  \triangleq A_{t^-} = \ldots = A_{t^+}$.
\end{lemma}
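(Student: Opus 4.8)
The plan is to treat \eqref{eq: app-regression-program} as a separable convex program with monotonicity constraints and solve it via its KKT conditions, in the same spirit as the proof of Lemma~\ref{lem: best-resp-A}. Write $w_1=\dots=w_{t^*-1}=L^2$, $w_{t^*}=R^2$, and $w_{t^*+1}=\dots=w_{\ubc}=U^2$, so that the objective is $\sum_t \pi_t w_t/A_t$ with $0\le L^2\le R^2\le U^2$ (the middle inequalities because $q_{t^*}\in[0,1]$). This objective is continuous and strictly convex on $A_t>0$, and the feasible set — after the harmless restriction $A_t\ge\gamma$ for a small $\gamma>0$, exactly as in Lemma~\ref{lem: best-resp-A} — is compact and convex, so an optimal $A$ exists, is unique, and the KKT conditions are necessary and sufficient. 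First I would note the budget constraint is tight at the optimum: the objective is strictly decreasing in each $A_t$, so any budget slack could be spent by scaling all $A_t$ up slightly, which is feasible because $A\equiv 1$ is ruled out by $\bar{B}<\sum_t\pi_t c_t$; let $\lambda>0$ be the multiplier of the tight budget constraint.

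Next I would write the Lagrangian with multiplier $\lambda$ for the budget, $\mu_t\ge 0$ for $A_t\le 1$, and $\nu_t\ge 0$ for $A_{t+1}\le A_t$ (with $\nu_0=\nu_{\ubc}=0$). Stationarity reads $\pi_t w_t/A_t^2=\lambda\pi_t c_t+\mu_t+\nu_{t-1}-\nu_t$, with complementary slackness $\mu_t(A_t-1)=0$ and $\nu_t(A_t-A_{t+1})=0$. Since $A$ is non-increasing it is a step function; for a maximal flat run $[a,b]$ with common value $v$ we have $\nu_{a-1}=\nu_b=0$ (the monotonicity constraints straddling a maximal run are slack), so summing stationarity over $[a,b]$ telescopes the $\nu$'s and gives $v^2\sum_{t=a}^b\pi_t w_t{}^{-1}$... more precisely $\sum_{[a,b]}\pi_t w_t/v^2=\lambda\sum_{[a,b]}\pi_t c_t+\sum_{[a,b]}\mu_t$. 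For a singleton run $\{t\}$ this yields $A_t=\min(1,\sqrt{w_t/(\lambda c_t)})$, the unconstrained per-type optimizer (the $\min$ recording whether $\mu_t>0$); for a run with $v<1$ all $\mu_t$ on the run vanish, giving the pooled value $v=\sqrt{(\sum_{[a,b]}\pi_t w_t)/(\lambda\sum_{[a,b]}\pi_t c_t)}$.

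The crux is to show all nontrivial pooling is confined to one interval $[t^-,t^+]$ with $t^-\le t^*\le t^+$. Set $a_t^{\circ}=\min(1,\sqrt{w_t/(\lambda c_t)})$. Because $w_t$ is constant and $c_t$ strictly increasing on each of $\{1,\dots,t^*-1\}$ and $\{t^*+1,\dots,\ubc\}$, $a_t^{\circ}$ is non-increasing on each of these blocks, so the only monotonicity violations (hence the only merging) can involve the jumps of $w_t$ at $t^*$. I would then rule out stray pooled runs that do not meet $t^*$: a run $[a,b]$ with $v<1$ lying inside one block leads to $\nu_a=\pi_a\lambda(c_a-\bar c)<0$ (where $\bar c$ is the run's $\pi$-weighted mean cost, strictly above $c_a$), contradicting $\nu_a\ge 0$; a run at value $1$ inside the $L^2$-block (with $A_{b+1}<1$ by maximality) admits the downward perturbation of $A_b$ alone, feasible and budget-freeing, forcing $\lambda c_b\le L^2$, hence $\lambda c_t\le L^2$ and $A_t=1=\min(1,\sqrt{L^2/(\lambda c_t)})$ for all $t\le b$; and symmetrically for the $U^2$-block. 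Thus every run other than the one containing $t^*$ is a singleton (or a value-$1$ run covered by the $\min$), so defining $[t^-,t^+]$ as the maximal run containing $t^*$ (possibly $t^-=t^+=t^*$) gives exactly items~1--3 with $\bar A:=A_{t^-}=\dots=A_{t^+}$, $t^-\le t^*\le t^+$, and the budget already shown tight. The main obstacle is precisely this confinement step — combining the block-wise monotonicity of $a_t^{\circ}$ with the sign constraints on the $\nu_t$ (and the local-perturbation argument for the boundary value-$1$ runs) to certify that the merged indices form a single contiguous interval around $t^*$ rather than several scattered pooled blocks.
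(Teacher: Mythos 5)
Your proof is correct, but it takes a genuinely different route from the paper's. The paper does not keep the full chain of adjacent monotonicity constraints: it relaxes the program to the ``partial monotonicity'' constraints $A_1,\dots,A_{t^*-1}\geq A_{t^*}\geq A_{t^*+1},\dots,A_{\ubc}$, which decouples the coordinates given $\lambda$ and $A_{t^*}$. Stationarity then immediately yields $A_t=\min\left(1,\max\left(A_{t^*},\sqrt{L^2/(\lambda c_t)}\right)\right)$ for $t<t^*$ and the mirror formula $A_t=\min\left(1,A_{t^*},\sqrt{U^2/(\lambda c_t)}\right)$ for $t>t^*$; monotonicity of this profile is verified a posteriori from the monotonicity of costs, so the relaxed optimum is feasible, hence optimal, for the original program, and the interval $[t^-,t^+]$ falls out of the $\max$/$\min$ structure with no further work. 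You instead retain the adjacent constraints $A_{t+1}\leq A_t$ with multipliers $\nu_t$, which chains the coordinates together and forces the maximal-run telescoping plus the confinement step: the sign computation $\nu_a=\lambda\pi_a(c_a-\bar c)<0$ for a pooled run with constant weight, and the $\lambda c_b\leq L^2$ argument for an initial value-$1$ run. That confinement step is exactly the work the paper's relaxation avoids, but your version has a compensating virtue: it explains structurally that pooling can only occur where the weight $w_t$ jumps, i.e., at $t^*$, rather than having the interval emerge from a cleverly chosen relaxation. Two small points to tighten: budget tightness should be argued by increasing the first index $j$ with $A_j<1$ (or the whole maximal run containing it) rather than ``scaling all $A_t$ up,'' since coordinates already at $1$ cannot be scaled; and strict convexity (hence uniqueness) fails when $L=0$, though in that degenerate case the stationarity conditions force $\nu_t>0$ for all $t<t^*$, so $t^-=1$ and the claimed form still holds.
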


\begin{proof}
First, we show existence of an optimal solution $A$. Because the optimal value of the program is finite but the objective tends to infinity when any $A_{\ubc}$ tends to $0$ (as we have $U^2,R^2 > 0$), there must exists $\gamma > 0$ such that the analyst's program is given by
\begin{equation}
\begin{aligned}
\inf_{A \in [\gamma,1]^{\ubc}} ~&\sum_{t=1}^{t^*-1}  \pi_t \frac{L^2}{A_t} +  \pi_{t^*} \frac{R^2}{A_{t^*}} + \sum_{t = t^* +1}^{\ubc} \pi_t \frac{U^2 }{A_t} \\
\mathrm{s.t.}~~~
~& \sum_{t=1}^{\ubc} \pi_t c_t A_t \leq \bar{B}\\
~& A \text{ is monotone non-increasing }\\
\end{aligned}
\end{equation}
The objective is convex and continuous in $A$ over $[\gamma,1]^{\ubc}$, and the feasible set is convex and compact, therefore the above program admits an optimal solution. Now, consider the following program with partial monotonicity constraints:
\begin{equation}
\begin{aligned}
\inf_{A \in \left(0,1\right]^{\ubc}} ~&\sum_{t=1}^{t^*-1}  \pi_t \frac{L^2}{A_t} +  \pi_{t^*} \frac{R^2}{A_{t^*}} + \sum_{t = t^* +1}^{\ubc} \pi_t \frac{U^2 }{A_t} \\
\mathrm{s.t.}~~~
~& \sum_{t=1}^{\ubc} \pi_t c_t A_t \leq \bar{B}\\
~& A_1, \ldots, A_{t^*-1} \geq A_t \geq A_{t*+1}, \ldots, A_l
\end{aligned}
\end{equation}

In fact, there exists an optimal solution to this problem that makes the budget constraint tight (one can increase the allocation rule without decreasing the objective value until the budget constraint becomes tight). Consider such a solution, the Lagrangian of the program is given by
\begin{align*}
\cL(A,\lambda,\lambda_t^1,\lambda_t) 
& =  \sum_{t=1}^{t^*-1}  \pi_t \frac{L^2}{A_t} +  \sum_{t = t^* +1}^{\ubc} \pi_t \frac{U^2 }{A_t} +  \pi_{t^*} \frac{R^2}{A_{t^*}} + \lambda \sum_{t} \pi_t c_t A_t 
\\& + \sum_{t} \lambda_t^1 A_t - \lambda B/n -\sum_t \lambda_t^1 + \sum_{t < t^*} \lambda_t  (A_{t^*} - A_t) + \sum_{t > t^*} \lambda_t  (A_t - A_{t^*})
\end{align*}
It must necessarily be the case at optimal that whenever $A_t < 1$, $\lambda_t^1 = 0$ and whenever $A_t > A_{t^*}$ (for $t < t^*$) or $A_t < A_{t^*}$ (for $t > t^*$), $\lambda_t = 0$ by the KKT conditions, and that $A$ satisfies the first order conditions $\nabla_A \cL(A,\lambda,\lambda_t^1,\lambda^k)  = 0$. Hence, an optimal solution must necessarily satisfy:
\begin{enumerate}
\item $A_t = \min\left(1,\max\left(A_{t^*},\sqrt{\frac{\pi_t L^2}{\lambda \pi_t c_t}}\right) \right)$ for $t < t^*$
\item $A_t = \min\left(1,A_{t^*},\sqrt{\frac{\pi_t U^2}{\lambda \pi_t c_t }}\right)$ for $t > t^*$
\end{enumerate}
Note that this implies that $A$ is monotone non-increasing (as the virtual costs are monotone non-decreasing), therefore is an optimal solution to the analyst's problem, and that there must exist $t^-$ and $t^+$ such that
$$
A_1 \geq \ldots \geq A_{t^- -1} \geq A_{t^-} = \ldots = A_{t^*} = \ldots = A_{t^+} \geq A_{t^+ + 1} \geq \ldots A_l
$$ 
with $A_t = \min\left(1,\sqrt{\frac{L^2}{\lambda c_t}}\right) \; \forall t < t^-$ and $A_t = \min\left(1,\sqrt{\frac{ U^2}{\lambda c_t}}\right) \; \forall t > t^+$.
\end{proof}

We now proceed onto proving the main statement. Let $\gamma_t = |L|$ (resp. $R$, $U$), for $t < t^*$ (resp. $t=t^*$, $t > t^*$). Let $t^-,t^+$ be such that $t^- \leq t^* \leq t^+$ and $A_{t^-} = \ldots = A_{t^+}$ at optimal. Suppose the analyst has knowledge of $t^-,t^+$. Then, replacing $A$ as a function of $t^-, t^+, \lambda,\bar{A}$ in the analyst's problem~\eqref{eq: app-regression-program} reduces to the following problem of two variables $\lambda$ and $\bar{A}$:
\begin{equation}\label{eq: 2var_GMM_problem}
\begin{aligned}
\inf_{\lambda \geq 0,~\bar{A} \in [0,1]} ~&\sum_{t = 1}^{t^--1} \pi_t   \frac{\gamma_t^2}{\min(1,\frac{\gamma_t}{\sqrt{\lambda c_t}})} + \frac{1}{\bar{A}} \cdot \sum_{t=t^-}^{t^+-1} \pi_t \gamma_t + \cdot \sum_{t = t^+}^{\ubc} \pi_t  \frac{\gamma_t^2}{\min(1,\frac{\gamma_t}{\sqrt{\lambda c_t}})} \\
\mathrm{s.t.}~~~
~& \sum_{t=1}^{t^- -1} \pi_t  c_t \min(1,\frac{\gamma_t}{\sqrt{\lambda c_t}}) + \bar{A} \sum_{t=t^-}^{t^+} \pi_t c_t  + \sum_{t=t^+ +1}^{\ubc} \pi_t c_t \min(1,\frac{\gamma_t}{\sqrt{\lambda c_t}}) = \bar{B}\\
~&\min\left(1,\frac{\gamma_t}{\sqrt{\lambda c_t}}\right) \geq \bar{A}~~\forall t \in \{t^1, \ldots, t^- - 1\} \\
~&\min\left(1,\frac{\gamma_t}{\sqrt{\lambda c_t}}\right) \leq \bar{A}~~\forall t \in \{t^+ + 1,\ldots, \ubc\}\\
\end{aligned}
\end{equation}
 Hence there exists an optimal solution of the form given by Lemma~\ref{lem: gmm-form-solution} with 
$$
\bar{A} = A(\lambda) \triangleq \frac{1}{\sum_{t=t^-}^{t^+} \pi_t c_t} \left(  \bar{B} - \sum_{t=1}^{t^- -1} \pi_t  c_t \min\left(1,\frac{\gamma_t}{\sqrt{\lambda c_t}}\right) - \sum_{t=t^+ +1}^{\ubc} \pi_t  c_t \min\left(1,\frac{\gamma_t}{\sqrt{\lambda c_t}}\right) \right)
$$
as $\bar{A}$ as a function of $\lambda$ is entirely determined by the budget constraint. Plugging this back in the above program, we can rewrite the program as depending only on the variable $\lambda$ as follows,
also remembering that the (virtual) costs are monotone non-decreasing:
\begin{equation}\label{eq:1var_GMM_problem}
\begin{aligned}
\inf_{\lambda \geq 0} ~&\sum_{t = 1}^{t^--1} \pi_t  \frac{\gamma_t^2}{\min\left(1,\frac{\gamma_t}{\sqrt{\lambda c_t}}\right)}  + \frac{1}{A(\lambda)} \cdot \sum_{t=t^-}^{t^+} \pi_t \gamma_t + \cdot \sum_{t = t^+ +1}^{\ubc} \pi_t \frac{\gamma_t^2}{\min\left(1,\frac{\gamma_t}{\sqrt{\lambda c_t}}\right)}\\
\mathrm{s.t.}~~~
~&\min\left(1,\frac{|L|}{\sqrt{\lambda c_{t^--1}}}\right) \geq A(\lambda) \\
~&\min\left(1,\frac{U}{\sqrt{\lambda c_{t+ +1}}}\right) \leq A(\lambda) \\
\end{aligned}
\end{equation}

Suppose that the analyst also has knowledge of $t^1$, the maximum value of $t \in \{0,\ldots,t^--1\} \cup \{t^++1, \ldots, \ubc \}$ such that $\min(1,\frac{\gamma_t}{\sqrt{\lambda c_{t}}}) = 1$. Then $\lambda^*$ must satisfy one of the three following conditions:
\begin{enumerate}
\item $\min\left(1,\frac{|L|}{\sqrt{\lambda c_{t^--1}}}\right) = A(\lambda)$. Letting $\mu = \frac{1}{\sqrt{\lambda}}$, this is a linear equation in $\mu$ and therefore can be solved efficiently given knowledge of $t^1$. This follows from noting the equation can be written
$$
\min\left(1,\frac{\mu |L|}{\sqrt{c_{t^--1}}}\right) 
=  \frac{1}{\sum_{t=t^-}^{t^+} \pi_t c_t} \left(  \bar{B} - \sum_{t=1}^{t^- -1} \pi_t  c_t \min\left(1,\frac{\mu \gamma_t}{\sqrt{c_t}}\right) - \sum_{t=t^+ +1}^{\ubc} \pi_t  c_t \min\left(1,\frac{\mu \gamma_t}{\sqrt{c_t}}\right) \right)
$$
which is of the form  $a \mu + b = 0$ for some constants $a,b$.
\item $\min(1,\frac{U}{\sqrt{\lambda c_{t+ +1}}}) = A(\lambda)$.  Letting $\mu = \frac{1}{\sqrt{\lambda}}$, this is a linear equation in $\mu$ and can be solved efficiently given knowledge of $t^1$. This follows from the same argument as above.
\item $\lambda$ minimizes the optimization problem with only the non-negativity constraint $\lambda \geq 0$. Then, letting $\mu = \frac{1}{\sqrt{\lambda}}$, the objective value is the following function of $\mu$:
\begin{align*}
OPT(\mu) &= \sum_{t = 1}^{t^--1} \pi_t   \gamma_t^2 \max\left(1,\frac{\sqrt{c_t}}{\mu \gamma_t} \right) + \sum_{t = t^+}^{\ubc} \pi_t   \gamma_t^2 \max\left(1,\frac{\sqrt{c_t}}{\mu \gamma_t}  \right) 
\\&+ \frac{ \sum_{t=t^-}^{t^+} \pi_t c_t \cdot \sum_{t=t^-}^{t^+-1} \pi_t \gamma_t }{  \bar{B} - \sum_{t=1}^{t^- -1} \pi_t  c_t \min\left(1,\frac{\mu \gamma_t}{\sqrt{c_t}}\right) - \sum_{t=t^+ +1}^{\ubc} \pi_t  c_t \min\left(1,\frac{\mu \gamma_t}{\sqrt{c_t}}\right) }
\end{align*}
i.e. can be written (with knowledge of $t^1$) $OPT(\mu) = C + \frac{K}{\mu} + \frac{1}{\gamma - \kappa \mu}$ for some constants $C,\gamma, K, \kappa$. The first order condition is given by $\frac{K}{\mu^2} = \frac{\kappa}{(\gamma - \kappa \mu)^2}$ and a minimizer can therefore be computed efficiently (either as $0$, $+\infty$, or as a solution of the first order conditions---whichever leads to the smallest objective value). 
\end{enumerate}
The analyst only needs to pick the value of $\lambda$ among the three cases above that is feasible and minimizes the objective value of Program~\eqref{eq:1var_GMM_problem}. In practice, the analyst does not know $t^1$ but can search over the space of possible $t^1$'s, which can be done in at most $\ubc$ steps (note that there may be values of $t^1$ for which the program is infeasible, showing that said value of $t^1$ is impossible). The analyst can solve the optimization problem absent knowledge of $t^-$ and $t^+$ by searching over $(t^-,t^+)$ pairs, and the analyst obtains a solution by picking the $(t^1,t^-,t^+)$ tuple for which Program~\eqref{eq:1var_GMM_problem} is feasible and the corresponding optimal $\lambda$ that lead to the best objective value over all tuples. This can be done in a most $\ubc^3$ steps.
\end{appendix}

\end{document}